\definecolor{myGreen}{RGB}{18,159,87}
\definecolor{myGreenTwo}{RGB}{18,159,87}
\definecolor{quantumpurple}{RGB}{83,37,127}
\tikzset{snake it/.style={decorate, decoration=snake}}
\pgfplotsset{compat=1.17} 
\newtheorem{theorem}{Theorem}
\newtheorem{corollary}{Corollary}
\newtheorem{lemma}{Lemma}
\newtheorem*{result*}{Result}
\theoremstyle{definition}
\newtheorem{definition}{Definition}
\theoremstyle{remark}
\newtheorem{remark}{Remark}
\newcommand{\refcite}[1]{Ref.~\cite{#1}}
\newcommand{\V}{\mathcal{V}}
\newcommand{\E}{\mathcal{E}}
\newcommand{\poly}[1]{\textnormal{{\rm poly}}\left(#1\right)}
\newcommand{\cl}[1]{\textnormal{{\bf #1}}}
\newcommand{\clw}[2]{{\bf #1}{\rm -}{\rm #2}}
\newcommand{\Gate}[1]{\textnormal{\textsc{#1}}}
\renewcommand{\sc}[1]{\textnormal{\textsc{#1}}}
\tikzset{
    sp-sp-graph/.pic = {
        \foreach \x/\l in {0/0,1.5/1,3/2,4.5/3,6/4,7.5/5,9/6,10.5/7,12/8,13.5/9}{
            \node [draw=black!45, circle, fill=black!45, inner sep=1.5] (a\l) at (\x,0){};
            }
        \foreach \x/\l in {0/0,1.5/1,3/2,4.5/3,6/4,7.5/5,9/6,10.5/7,12/8,13.5/9}{
            \node [draw=black!45, circle, fill=black!45, inner sep=1.5] (b\l) at (\x,2){};
            }
        \foreach \x/\l in {0/0,1.5/1,3/2,4.5/3,6/4,7.5/5,9/6,10.5/7,12/8,13.5/9}{
            \node [draw=black!45, circle, fill=black!45, inner sep=1.5] (c\l) at (\x,4){};
            }
        \node [] (r1) at (15,0) {~Round $1$}; \node [] (r2) at (15,2) {~Round $2$}; \node [] (r3) at (15,4) {~Round $3$};
        
        \node[draw, thick, fit=(a3) (a5), rounded corners, inner sep=7.5, pattern={Lines[distance=2mm,angle=45,line width=0.7mm]}, pattern color= quantumpurple!65] {};
        \foreach \x\l in {4.5/3,6/4,7.5/5}{
            \node [draw=black, circle, fill=black, inner sep=1.5] (a\l) at (\x,0){};
        }
        \node[] (1) at (6,0.65){$R_1$};
    
        \node[draw,thick, fit=(b5) (b7), rounded corners, inner sep=7.5, pattern={Lines[distance=2mm,angle=45,line width=0.7mm]}, pattern color= quantumpurple!65] {};
        \foreach \x\l in {7.5/5,9/6,10.5/7}{
            \node [draw=black, circle, fill=black, inner sep=1.5] (b\l) at (\x,2){};
        }
        \node[] (1) at (9,2.65){$R_2$};
        \node[draw,thick, fit=(c1) (c3), rounded corners, inner sep=7.5, pattern={Lines[distance=2mm,angle=45,line width=0.7mm]}, pattern color= quantumpurple!65] {};
        \foreach \x\l in {1.5/1,3/2,4.5/3}{
            \node [draw=black, circle, fill=black, inner sep=1.5] (c\l) at (\x,4){};
        }
        \node[] (1) at (3,4.65){$R_3$};
        \draw[dashed, very thick, quantumpurple] (11,-0.65)--(12,-0.65); \draw[very thick, quantumpurple, -latex] (12,-0.65)--(14.15,-0.65)--(14.15,0.65)--(10,0.65);
        \draw[very thick, quantumpurple,|-latex] (-0.65,-0.65)--(4.65,-0.65) node[below, midway, yshift=-5pt] {time cursor}; \draw[dashed, very thick, quantumpurple] (2,0.65)--(-0.65,0.65); \draw[very thick, quantumpurple, -latex] (0.5,0.65)--(-0.65,0.65)--(-0.65,1.35)--(2,1.35); 
    
        \draw[line width=3.5pt, white] (13.5,0.25)--(13.5,1.75); \draw[line width=3.5pt, white] (12,0.25)--(12,1.75);
        \draw[very thick, latex-latex, quantumpurple!65] (13.5,0.25)--(13.5,1.75) node[rotate=90, midway, yshift=-7.5pt, fill=white, inner sep=1.5] {\small \textsc{Swap}}; \draw[very thick, latex-latex, quantumpurple!65] (12,0.25)--(12,1.75) node[rotate=90, midway, yshift=-7.5pt, fill=white, inner sep=1.5] {\small \textsc{Swap}};
    },
    interaction/.pic = {
        \draw[fill=black, thick] (0,0)circle(0.1) node[below, yshift=-5pt] {$u$};
        \draw[fill=black, thick] (3,0)circle(0.1) node[below, yshift=-5pt] {$v$};
        \draw[thick] (0,0)--(3,0) node[above, yshift=2.5pt, midway] {$A_uB_v$};
    },
    subdiv-gadget/.pic = {
        \draw[thick](0,0)--(3,0) node[midway, above, yshift=2.55pt] {$P_uP_v + P_u^\dagger P_v^\dagger$}; 
        \draw[fill=black, thick] (0,0)circle(0.1) node[below, yshift=-5pt] {$u$};
        \draw[fill=black, thick] (3,0)circle(0.1) node[below, yshift=-5pt] {$v$};
        \begin{scope}[xshift=5cm]
            \draw[thick](0,0)--(3,0) node[midway, above, yshift=2.55pt] {$P_uS^+_c + P_u^\dagger S^-_c$}; \draw[thick](3,0)--(6,0) node[midway, above, yshift=2.55pt] {$S^-_cP_v + S^+_cP_v^\dagger$}; 
            \draw[fill=black, thick] (0,0)circle(0.1) node[below, yshift=-5pt] {$u$};
            \draw[fill=white, thick] (3,0)circle(0.1) node[below, yshift=-5pt] {$c$};
            \draw[fill=black, thick] (6,0)circle(0.1) node[below, yshift=-5pt] {$v$};
        \end{scope}
    },
    cross-gadget/.pic = {
        \draw[thick](0,0)--(3,3) node[midway, above, rotate=45, yshift=7.5pt, xshift=-40pt, fill=white] {$P_wP_s + P_w^\dagger P_s^\dagger$}; \draw[fill=white, draw=none] (1.5,1.5)circle(0.1);\draw[thick](0,3)--(3,0) node[midway, above, rotate=-45, yshift=7.5pt, xshift=40pt, fill=white] {$P_uP_v + P_u^\dagger P_v^\dagger$}; 
        \draw[fill=black, thick] (0,0)circle(0.1) node[below, yshift=-5pt] {$w$};
        \draw[fill=black, thick] (3,0)circle(0.1) node[below, yshift=-5pt] {$v$};
        \draw[fill=black, thick] (0,3)circle(0.1) node[above, yshift=5pt] {$u$};
        \draw[fill=black, thick] (3,3)circle(0.1) node[above, yshift=5pt] {$s$};
        \begin{scope}[xshift=5cm, yshift=-1.5cm]
            \draw[thick](0,0)--(3,3) node[midway, above, rotate=45, yshift=2.5pt] {$P_wS^+_c + P_w^\dagger S^-_c$}; \draw[thick](3,3)--(6,6) node[midway, below, rotate=45, yshift=-2.5pt] {$S^-_cP_s + S^+_cP_s^\dagger$}; \draw[thick](0,6)--(3,3) node[midway, above, rotate=-45, yshift=2.5pt] {$P_uS^+_c + P_u^\dagger S^-_c$}; \draw[thick](3,3)--(6,0) node[midway, below, rotate=-45, yshift=-2.5pt] {$S^-_cP_v + S^+_cP_v^\dagger$}; 
            \draw[dashed] (0,0)--(6,0)--(6,6)--(0,6)--(0,0);
            \draw[fill=black, thick] (0,0)circle(0.1) node[below, yshift=-5pt] {$w$};
            \draw[fill=black, thick] (6,0)circle(0.1) node[below, yshift=-5pt] {$u$};
            \draw[fill=black, thick] (0,6)circle(0.1) node[above, yshift=5pt] {$v$};
            \draw[fill=black, thick] (6,6)circle(0.1) node[above, yshift=5pt] {$s$};
            \draw[fill=white, thick] (3,3)circle(0.1) node[right, xshift=5pt] {$c$};
        \end{scope}
    },
    fork-gadget/.pic = {
        \draw[thick] (0,0)--(1.5,-{3*sqrt(3)}/2) node[midway, below, rotate=-60, yshift=-2.5pt] {$P_uP_v + P_u^\dagger P_v^\dagger$};
        \draw[thick] (1.5,-{3*sqrt(3)}/2)--(3,0) node[midway, below, rotate=60, yshift=-2.5pt] {$P_vP_w + P_v^\dagger P_w^\dagger$}; 
        \draw[fill=black, thick] (0,0)circle(0.1) node[above, yshift=5pt] {$u$};
        \draw[fill=black, thick] (3,0)circle(0.1) node[above, yshift=5pt] {$w$};
        \draw[fill=black, thick] (1.5,-{3*sqrt(3)}/2)circle(0.1) node[below, yshift=-5pt] {$v$};
        \begin{scope}[xshift=5cm]
            \draw[thick] (0,0)--(1.5,-{3*sqrt(3)}/2) node[midway, below, rotate=-60, yshift=-2.5pt] {$P_uS^+_c + P_u^\dagger S^-_c$}; \draw[thick] (1.5,-{3*sqrt(3)}/2)--(3,0) node[midway, below, rotate=60, yshift=-2.5pt] {$P_wS^+_c + P_w^\dagger S^-_c$}; \draw[thick] (1.5,-{3*sqrt(3)}/2)--(3,-{3*sqrt(3)}) node[midway, below, rotate=-60, yshift=-2.5pt] {$S^-_cP_v + S^+_cP_v^\dagger$};
            \draw[dashed] (0,0)--(3,0);
            \draw[fill=black, thick] (0,0)circle(0.1) node[above, yshift=5pt] {$u$};
            \draw[fill=black, thick] (3,0)circle(0.1) node[above, yshift=5pt] {$w$};
            \draw[fill=white, thick] (1.5,-{3*sqrt(3)}/2)circle(0.1) node[above, yshift=5pt] {$c$};
            \draw[fill=black, thick] (3,-{3*sqrt(3)})circle(0.1) node[below, yshift=-5pt] {$v$};
        \end{scope}
    },
    triangle-gadget/.pic = {
        \draw[thick] (0,0)--(1.5,-{3*sqrt(3)}/2)--(3,0); 
        \draw[dashed] (0,0)--(3,0);
        \draw[dashed] (0,0)--(-0.75,0);\draw[dashed] (0,0)--(-0.75,-0.5);\draw[dashed] (0,0)--(-0.75,0.5);
        \draw[dashed] (3,0)--(3.75,0);\draw[dashed] (3,0)--(3.75,-0.5);\draw[dashed] (3,0)--(3.75,0.5);
        \draw[dashed] (1.5,-{3*sqrt(3)}/2)--(1.5,-{3*sqrt(3)}/2-0.75);\draw[dashed] (1.5,-{3*sqrt(3)}/2)--(2,-{3*sqrt(3)}/2-0.75);\draw[dashed] (1.5,-{3*sqrt(3)}/2)--(1,-{3*sqrt(3)}/2-0.75);
        \draw[fill=black, thick] (0,0)circle(0.1) node[above, yshift=5pt] {$u$};
        \draw[fill=black, thick] (3,0)circle(0.1) node[above, yshift=5pt] {$w$};
        \draw[fill=black, thick] (1.5,-{3*sqrt(3)}/2)circle(0.1) node[below, yshift=-5pt, fill=white] {$v$};
        \begin{scope}[xshift=6cm]
            \draw[thick] (0,0)--(1.5,-{3*sqrt(3)}/2)--(3,0); 
            \draw[dashed] (0,0)--(3,0);
            \draw[dashed] (0,0)--(-0.75,0);\draw[dashed] (0,0)--(-0.75,-0.5);\draw[dashed] (0,0)--(-0.75,0.5);
            \draw[dashed] (3,0)--(3.75,0);\draw[dashed] (3,0)--(3.75,-0.5);\draw[dashed] (3,0)--(3.75,0.5);
            \draw[dashed] (1.5,-{3*sqrt(3)}/2)--(1.5,-{3*sqrt(3)}/2-0.75);\draw[dashed] (1.5,-{3*sqrt(3)}/2)--(2,-{3*sqrt(3)}/2-0.75);\draw[dashed] (1.5,-{3*sqrt(3)}/2)--(1,-{3*sqrt(3)}/2-0.75);
            \draw[fill=black, thick] (0,0)circle(0.1) node[above, yshift=5pt] {$u$};
            \draw[fill=black, thick] (3,0)circle(0.1) node[above, yshift=5pt] {$w$};
            \draw[fill=black, thick] (1.5,-{3*sqrt(3)}/2)circle(0.1) node[below, yshift=-5pt, fill=white] {$v$};
            \draw[fill=white, thick] (0.75,-{3*sqrt(3)}/4)circle(0.1) node[below, yshift=-5pt] {$c_1$};
            \draw[fill=white, thick] (2.25,-{3*sqrt(3)}/4)circle(0.1) node[below, yshift=-5pt] {$c_2$};
        \end{scope}
        \begin{scope}[xshift=12cm]
            \draw[thick] (0,0)--(1.5,-{3*sqrt(3)}/2)--(3,0); 
            \draw[dashed] (0,0)--(3,0);
            \draw[dashed] (0,0)--(-0.75,0);\draw[dashed] (0,0)--(-0.75,-0.5);\draw[dashed] (0,0)--(-0.75,0.5);
            \draw[dashed] (3,0)--(3.75,0);\draw[dashed] (3,0)--(3.75,-0.5);\draw[dashed] (3,0)--(3.75,0.5);
            \draw[dashed] (1.5,-{3*sqrt(3)}/2 -1.5)--(1.5,-{3*sqrt(3)}/2-0.75 -1.5);\draw[dashed] (1.5,-{3*sqrt(3)}/2 -1.5)--(2,-{3*sqrt(3)}/2-0.75 -1.5);\draw[dashed] (1.5,-{3*sqrt(3)}/2 -1.5)--(1,-{3*sqrt(3)}/2-0.75 -1.5);
            \draw[thick] (1.5,-{3*sqrt(3)}/2)--(1.5,-{3*sqrt(3)}/2 -1.5);
            \draw[thick] (0.75,-{3*sqrt(3)}/4)--(2.25,-{3*sqrt(3)}/4);
            \draw[fill=black, thick] (0,0)circle(0.1) node[above, yshift=5pt] {$u$};
            \draw[fill=black, thick] (3,0)circle(0.1) node[above, yshift=5pt] {$w$};
            \draw[fill=white, thick] (1.5,-{3*sqrt(3)}/2)circle(0.1) node[above, yshift=5pt] {$c_3$};
            \draw[fill=black, thick] (1.5,-{3*sqrt(3)}/2  -1.5)circle(0.1) node[below, yshift=-5pt, fill=white] {$v$};
            \draw[fill=white, thick] (0.75,-{3*sqrt(3)}/4)circle(0.1) node[below, yshift=-5pt] {$c_1$};
            \draw[fill=white, thick] (2.25,-{3*sqrt(3)}/4)circle(0.1) node[below, yshift=-5pt] {$c_2$};
        \end{scope}
    },
    localisation/.pic = {
        \draw[thick] (0,0)--(2,1);\draw[thick] (0,0)--(-1.5,1);\draw[thick] (0,0)--(1.5,-0.75);\draw[thick] (0,0)--(0,1.2);
        \draw[dashed] (0,0)--(-1,-1);\draw[dashed] (0,0)--(-1.2,-0.3);\draw[dashed] (0,0)--(0.6,-1.2);
        \draw[fill=black, thick] (0,0)circle(0.1) node[below, yshift=-5pt] {$u$};
        \draw[fill=black, thick] (2,1)circle(0.1);
        \draw[fill=black, thick] (-1.5,1)circle(0.1);
        \draw[fill=black, thick] (1.5,-0.75)circle(0.1);
        \draw[fill=black, thick] (0,1.2)circle(0.1);
        \begin{scope}[xshift=5cm]
            \draw[thick] (0,0)--(2,1);\draw[thick] (0,0)--(-1.5,1);\draw[thick] (0,0)--(1.5,-0.75);\draw[thick] (0,0)--(0,1.2);\draw[dashed] (0,0)--(0.6,-1.2);
            \draw[dashed] (0,0)--(-1,-1);\draw[dashed] (0,0)--(-1.2,-0.3);
            \draw[fill=black, thick] (0,0)circle(0.1) node[below, yshift=-5pt] {$u$};
            \draw[fill=black, thick] (2,1)circle(0.1);\draw[fill=white, thick] (1,0.5)circle(0.1);
            \draw[fill=black, thick] (-1.5,1)circle(0.1);\draw[fill=white, thick] (-0.75,0.5)circle(0.1);
            \draw[fill=black, thick] (1.5,-0.75)circle(0.1);\draw[fill=white, thick] (0.75,-0.375)circle(0.1);
            \draw[fill=black, thick] (0,1.2)circle(0.1);\draw[fill=white, thick] (0,0.6)circle(0.1);
        \end{scope}
    },
    forking-loops/.pic = {
        \draw[thick] (0,1.5)ellipse(1 and 1.5);
        \draw[fill=black,thick] (0,0)circle(0.1) node[below, yshift=-5pt] {$u$};
        \draw[fill=black,thick] (0,3)circle(0.1) node[above, yshift=5pt] {$v$};
        \begin{scope}[xshift=4cm]
            \draw[thick] (0,1.5)ellipse(1 and 1.5);
            \draw[fill=black,thick] (0,0)circle(0.1) node[below, yshift=-5pt] {$u$};
            \draw[fill=black,thick] (0,3)circle(0.1) node[above, yshift=5pt] {$v$};
            \draw[fill=white,thick] (1,1.5)circle(0.1) node[right, xshift=5pt] {$c_1$};
            \draw[fill=white,thick] (-1,1.5)circle(0.1) node[left, xshift=-5pt] {$c_2$};
        \end{scope}
        \begin{scope}[xshift=8cm]
            \draw[thick] (1,1.5)--(0,3)--(-1,1.5)--(0,0.75)--(1,1.5)--(-1,1.5);
            \draw[thick] (0,0)--(0,0.75);
            \draw[fill=black,thick] (0,0)circle(0.1) node[below, yshift=-5pt] {$u$};
            \draw[fill=black,thick] (0,3)circle(0.1) node[above, yshift=5pt] {$v$};
            \draw[fill=white,thick] (1,1.5)circle(0.1) node[right, xshift=5pt] {$c_1$};
            \draw[fill=white,thick] (-1,1.5)circle(0.1) node[left, xshift=-5pt] {$c_2$};
            \draw[fill=white,thick] (0,0.75)circle(0.1) node[above, yshift=5pt] {$c_3$};
        \end{scope}
    },
    sub-cross-edge/.pic = {
        \foreach \x in {1,2,3,4}{
            \draw[dashed](\x,1.25)--(\x,-1.25);
            \draw[thick] (\x,0.5)--(\x,-0.5);
            \draw[fill=white,draw=none] (\x,0)circle(0.1);
        }
        \draw[thick] (0,0)--(5,0);
        \draw[fill=black,thick] (0,0)circle(0.1);
        \draw[fill=black,thick] (5,0)circle(0.1);
        \begin{scope}[xshift=7cm]
            \foreach \x in {1,2,3,4}{
                \draw[dashed](\x,1.25)--(\x,-1.25);
                \draw[thick] (\x,0.5)--(\x,-0.5);
                \draw[fill=white,draw=none] (\x,0)circle(0.1);
            }
            \draw[thick] (0,0)--(5,0);
            \draw[fill=black,thick] (0,0)circle(0.1);
            \draw[fill=black,thick] (5,0)circle(0.1);
            \foreach \x in {1.5,2.5,3.5}{
                \draw[fill=white,thick] (\x,0)circle(0.1);
            }
        \end{scope}
    },
    localise-cross/.pic = {
        \draw[thick] (3,0)--(0,3);\draw[fill=white,draw=none] (1.5,1.5)circle(0.1);
        \draw[thick] (0,0)--(3,3);
        \draw[fill=black,thick] (0,0)circle(0.1);
        \draw[fill=black,thick] (3,0)circle(0.1);
        \draw[fill=black,thick] (0,3)circle(0.1);
        \draw[fill=black,thick] (3,3)circle(0.1);
        \begin{scope}[xshift=5cm]
            \draw[thick] (3,0)--(0,3);\draw[fill=white,draw=none] (1.5,1.5)circle(0.1);
            \draw[thick] (0,0)--(3,3);
            \draw[fill=black,thick] (0,0)circle(0.1);
            \draw[fill=black,thick] (3,0)circle(0.1);
            \draw[fill=black,thick] (0,3)circle(0.1);
            \draw[fill=black,thick] (3,3)circle(0.1);
            \draw[fill=white,thick] (1,1)circle(0.1); \draw[fill=white,thick] (2,2)circle(0.1);
            \draw[fill=white,thick] (1,2)circle(0.1); \draw[fill=white,thick] (2,1)circle(0.1);
        \end{scope}
    },
    rho-decomp/.pic = {
        \draw[thick] (0,0)--(3,0) node[above, midway, yshift=2.5pt] {$H_{uv}$};
        \draw[fill=black,thick] (0,0)circle(0.1) node[below, yshift=-5pt] {$u$};
        \draw[fill=black,thick] (3,0)circle(0.1) node[below, yshift=-5pt] {$v$};
        \begin{scope}[xshift=5cm]
            \draw[thick] (0,0)--(3,0) node[above, midway, yshift=2.5pt] {$\rho^2_u\rho^0_v + \rho^1_u\rho^0_v$};
            \draw[thick] (1.5,0)ellipse(1.5 and 1.25) node[below, yshift=-12.5pt] {$\rho^0_u\rho^0_v$} node[above, yshift=37.5pt] {$\rho^3_u\rho^3_v$};
            \draw[fill=black,thick] (0,0)circle(0.1) node[below, yshift=-5pt, fill=white] {$u$};
            \draw[fill=black,thick] (3,0)circle(0.1) node[below, yshift=-5pt, fill=white] {$v$};
        \end{scope}
    },
    legal-fork/.pic = {
        \draw[thick] (1.5,0)ellipse(1.5 and 1) node[below, yshift=-7.5pt] {$\rho^0_u\rho^0_v$} node[above, yshift=28.5pt] {$\rho^2_u\rho^0_v + \rho^1_u\rho^0_v$};
        \draw[thick] (-1.5,0)ellipse(1.5 and 1) node[below, yshift=-7.5pt] {$\rho^0_u\rho^0_w$} node[above, yshift=28.5pt] {$\rho^2_u\rho^0_w + \rho^1_u\rho^0_w$};
        \draw[fill=black,thick] (0,0)circle(0.1) node[below, yshift=-5pt, fill=white] {$u$};
        \draw[fill=black,thick] (3,0)circle(0.1) node[below, yshift=-5pt, fill=white] {$v$};
        \draw[fill=black,thick] (-3,0)circle(0.1) node[below, yshift=-5pt, fill=white] {$w$};
        \begin{scope}[xshift=8cm]
            \draw[thick] (-3,0)--(-1.5,1.25)--(0,0.75)--(1.5,1.25)--(3,0);
            \draw[thick] (-3,0)--(-1.5,-1.25)--(0,-0.75)--(1.5,-1.25)--(3,0);
            \draw[thick] (0,0.75)--(0,-0.75);
            \draw[dashed] (-1.5,1.25)--(1.5,1.25); \draw[dashed] (-1.5,-1.25)--(1.5,-1.25);
            \draw[fill=black,thick] (0,0)circle(0.1) node[below, yshift=-5pt, fill=white] {$u$};
            \draw[fill=black,thick] (3,0)circle(0.1) node[below, yshift=-5pt, fill=white] {$v$};
            \draw[fill=black,thick] (-3,0)circle(0.1) node[below, yshift=-5pt, fill=white] {$w$};
            \draw[fill=white,thick] (1.5,1.25)circle(0.1) node[above, yshift=5pt] {$c_5$};
            \draw[fill=white,thick] (1.5,-1.25)circle(0.1) node[below, yshift=-5pt] {$c_6$};
            \draw[fill=white,thick] (-1.5,1.25)circle(0.1) node[above, yshift=5pt] {$c_1$};
            \draw[fill=white,thick] (-1.5,-1.25)circle(0.1) node[below, yshift=-5pt] {$c_2$};
            \draw[fill=white,thick] (0,0.75)circle(0.1) node[above, yshift=0pt] {$c_3$};
            \draw[fill=white,thick] (0,-0.75)circle(0.1) node[below, yshift=-0.5pt] {$c_4$};
        \end{scope}
    },
    planar-to-square/.pic = {
        \def\width{10}
        \def\step{0.5}
        \def\term{\width}
        \pgfmathsetmacro\term{\term-\step}
    
        \foreach \x/\y in {1/1.5, 2/8, 3/4, 7/7, 8.5/2, 9.5/9.5}{
            \draw[draw=none, fill=gray!45, draw=none] (\x-0.5,\y-0.5) rectangle (\x+0.5,\y+0.5);
        };
    
        \foreach \i in {0.5, 1, 1.5, ..., 9.5}{
            \draw[quantumpurple!55] (\i,0)--(\i,\width);
            \draw[quantumpurple] (0,\i)--(\width,\i);
        };
    
        \draw[thick, gray, dashed] (1,1.5)--(8.5,2)--(9.5,9.5)--(7,7)--(3,4)--(2,8);
        \draw[thick, gray, dashed] (1,1.5)--(3,4);
        \draw[thick, gray, dashed] (7,7)--(8.5,2);
        \draw[thick, gray, dashed] (3,4)--(8.5,2);
    
        \foreach \x/\y in {1/1.5, 2/8, 3/4, 7/7, 8.5/2, 9.5/9.5}{
            \draw[fill = black] (\x,\y) circle (2pt);
        };
        
        \draw[very thick] (1,1.5)--(5,1.5)--(5,2)--(7,2)--(7,1.5)--(8.5,1.5)--(8.5,2);
        \draw[very thick] (8.5,2)--(9,2)--(9,8)--(9.5,8)--(9.5,9.5);
        \draw[very thick] (8.5,2)--(8.5,2.5)--(8,2.5)--(8,4.5)--(7.5,4.5)--(7.5,6.5)--(7,6.5)--(7,7);
        \draw[very thick] (9.5,9.5)--(9,9.5)--(9,9)--(8.5,9)--(8.5,8.5)--(8,8.5)--(8,8)--(7.5,8)--(7.5,7.5)--(7,7.5)--(7,7);
        \draw[very thick] (7,7)--(6.5,7)--(6.5,6.5)--(6,6.5)--(6,6)--(5,6)--(5,5.5)--(4.5,5.5)--(4.5,5)--(4,5)--(4,4.5)--(3.5,4.5)--(3.5,4)--(3,4);
        \draw[very thick] (3,4)--(2.5,4)--(2.5,3.5)--(2.5,3)--(2,3)--(2,2.5)--(1.5,2.5)--(1.5,2)--(1,2)--(1,1.5);
        \draw[very thick] (3,4)--(3,3.5)--(5,3.5)--(5,3)--(7,3)--(7,2.5)--(7.5,2.5)--(7.5,2)--(8.5,2);
        \draw[very thick] (3,4)--(3,4.5)--(2.5,4.5)--(2.5,7)--(2,7)--(2,8);
        \draw[thick, black] (0,0) rectangle (\width,\width);
    },
}
\begin{document}

\title{The Complexity of Local Stoquastic Hamiltonians on 2D Lattices}

\author{Gabriel Waite}
\email{gabriel.waite@student.uts.edu.au}
\affiliation{Centre for Quantum Computation and Communication Technology,%
Centre for Quantum Software and Information, %
School of Computer Science, Faculty of Engineering \& Information Technology, University of Technology Sydney, NSW 2007, Australia
}

\author{Michael J. Bremner}
\email{michael.bremner@uts.edu.au}
\affiliation{Centre for Quantum Computation and Communication Technology,%
Centre for Quantum Software and Information, %
School of Computer Science, Faculty of Engineering \& Information Technology, University of Technology Sydney, NSW 2007, Australia
}
\maketitle

\begin{abstract}
    We show the \textsc{$2$-Local Stoquastic Hamiltonian} problem on a $2$D square qubit lattice is \textbf{StoqMA}-complete.
    We achieve this by extending the spatially sparse circuit construction of Oliveira and Terhal, as well as the perturbative gadgets of Bravyi, DiVincenzo, Oliveira, and Terhal.
    Our main contributions demonstrate \textbf{StoqMA} circuits can be made spatially sparse and that geometrical, stoquastic-preserving, perturbative gadgets can be constructed, without an increase to particle dimension.
\end{abstract}

\section{Introduction}
A fundamental challenge in quantum computing is to determine the computational complexity of ground-state energies of local Hamiltonians.
It is well-known that the \sc{Local Hamiltonian} problem is \cl{QMA}-complete, making it intractable for both classical and quantum computers~\cite{KSV02}.
Interesting variations of the problem have been extensively studied, including geometric restrictions~\cite{OT08, SV09, PM17, CM16}, sign restrictions~\cite{CM16,PM17}, specific Hamiltonian types~\cite{BBT06,BDOT06,BL08}, and commuting variants~\cite{BV05,Schuch11,AE11,AE14,AKV18,IJ23}.
These restrictions matter for modelling physically relevant many-body systems, especially geometrically local ones.
In this work we study the \sc{Local Stoquastic Hamiltonian} problem under 2D square-lattice constraints for qubit systems.

The class of Hamiltonians with non-positive off-diagonal elements in the computational basis, known as stoquastic Hamiltonians, are particularly significant as they circumvent the sign problem of Monte Carlo simulations~\cite{LGS+90,LY19,HRNE20}; as a result, Monte Carlo methods apply effectively~\cite{FMNR01,SAM10,Ohzeki17,BCGL22,BBH24}.
Additionally, stoquastic Hamiltonians can be used as a gateway method of `curing' the sign problem other, more general systems may exhibit~\cite{HBLSC95,BCGL22}.
To capture the problem of finding the ground-state energy of stoquastic Hamiltonians, the class \cl{StoqMA} was introduced~\cite{BBT06}, lying between \cl{MA} and \cl{QMA} in the complexity hierarchy.
Strong amplification procedures for this class requiring only a single copy of the proof state are not yet known~\cite{AGL21, Liu21, MW05}, making \cl{StoqMA} a compelling and technically challenging complexity class to study.
Important conjectures are established with respect to \cl{StoqMA}'s containment in \cl{MA} under certain amplification procedures~\cite{AGL21}; one such example is that \cl{StoqMA}\textsubscript{1} (\cl{StoqMA} with perfect completeness) is contained in \cl{MA}~\cite{BT09}.

Bravyi, Bessen, and Terhal~\cite{BBT06} demonstrated that the \sc{$k$-Local Stoquastic Hamiltonian} problem is \clw{StoqMA}{complete} for $k \geq 2$.
This was established via a circuit-to-Hamiltonian construction, similar to the original Feynman-Kitaev framework~\cite{KSV02}, showing that locality beyond two does not affect the problem's complexity.
Additionally, it was also shown that the \sc{$2$-Local Stoquastic Hamiltonian} problem is \clw{MA}{hard}, further supporting the connection between stoquasticity and classical complexity classes~\cite{BDOT06}.

In this work, we consider $2$-local stoquastic (qubit) Hamiltonians constrained to 2D regular lattices. 
This natural geometric constraint is often used for modelling physical systems, such as the Heisenberg model on a square lattice~\cite{SV09,PM17}.
We show the \clw{MA}{hardness} and \clw{StoqMA}{completeness} persist even under these geometric restrictions in qubit systems.
We establish this through perturbative gadget constructions (via Schrieffer-Wolff transformations) that allow for locality reductions, while preserving stoquasticity.
The existing gadget constructions of Ref.~\cite{OT08} enable subdivision of interactions, but do not preserve stoquasticity when applied verbatim.
We extend the techniques of Refs.~\cite{OT08,BDOT06,BBT06} to subdivide interaction edges while maintaining locality and term-wise stoquasticity of the resulting Hamiltonian.
We accomplish this by identifying an appropriate basis representation for $2$-local stoquastic Hamiltonians and constructing perturbative gadgets that respect this structure.
Our results hold for local term-wise stoquastic Hamiltonians on qubit systems.
This approach not only expands the scope of stoquastic gadget constructions but also provides new insights into the complexity of local stoquastic Hamiltonians on lattice geometries.

We identify a parent stoquastic Pauli Hamiltonian that is naturally \clw{StoqMA}{complete}.
This Hamiltonian is defined by a term-wise stoquastic construction where each term represents a $2$-local Pauli interaction.
A notable restriction of this class is the transverse-field Ising model~\cite{BH16, PM17}, making it a candidate for complexity reductions.
Although certain restrictions of this Hamiltonian are known to be contained in \cl{StoqMA}, it remains an open question whether these are \clw{StoqMA}{complete}.
Careful gadget constructions may resolve this question.

Our results suggest several directions for future investigation.
For instance, it is worth investigating whether specific stoquastic Pauli Hamiltonians on 2D lattices, such as the transverse field Ising model, are \clw{StoqMA}{complete}.
Another intriguing direction would be to explore the complexity of the antiferromagnetic Heisenberg model on bipartite lattices, a problem which is still only known to be contained in \cl{StoqMA}~\cite{BH16, PM17}.
Furthermore, we raise the question of whether the graphical structure of a local stoquastic Hamiltonian can be exploited to construct a guiding state~\cite{CFGHLGMW23}.
Can we reduce the degree of a planar graph representing a local stoquastic Hamiltonian to three while maintaining the problem's complexity? These are significant open questions that arise naturally from our work and point to a rich potential for future developments.

\subparagraph{Prior work.}
Recently, Raza, Eisert, and Grilo~\cite{REG24} studied the complexity of geometrically constrained stoquastic Hamiltonians, providing a direct embedding of quantum circuits without perturbative gadgets.
Their results include \clw{MA}{hardness} of high-dimensional qudit systems on $2$D lattices and even $1$D lines, as well as \clw{StoqMA}{completeness} for high-dimensional qudit systems on $2$D lattices.
A consequence of their approach is an increase in particle dimension, i.e., beyond the qubit regime, which stems from Toffoli gates being $3$-local in qubit systems.
However, this higher dimensionality allows for more direct applicability across a broader range of qudit dimensions and geometries, enhancing theoretical generality.
In contrast, our approach uses spatially sparse constructions and perturbative reductions to reduce to Hamiltonians of qubits on a square lattice, which has close ties to many physically relevant systems.
Moreover, our reduction technique draws on well-established methods from the literature, allowing application to and strengthening of a wider variety of problems, beyond the specific case considered here.
While Ref.~\cite{REG24} offers a compelling alternative perspective on the complexity of stoquastic Hamiltonians, their conclusions are complementary rather than contradictory to our results.
Both approaches contribute valuable insights, highlighting different trade-offs between generality and physical relevance.
An open problem at the intersection of these approaches is to classify local stoquastic qubit Hamiltonians on $1$D lines.

\subparagraph{Outline.} 
In \cref{sec:prelim}, we introduce the necessary background and definitions, including the definition of the class \cl{StoqMA}.
We then summarise the technical contributions of this work.
In \cref{sec:comp-stoq-ham-prob}, we review the appropriate Feynman-Kitaev clock construction necessary for stoquastic Hamiltonians, both for the classes \cl{MA} and \cl{StoqMA}, this establishes notation and intuition for the subsequent sections.
We also discuss and define a spatially sparse graph construction, then proceed to prove the \clw{StoqMA}{completeness} of the \sc{$6$-Local Stoquastic Hamiltonian} problem on spatially sparse graphs.
\cref{sec:stoq-pert-gadgets} outlines and summarises the preliminaries for the perturbative gadgets used in the reduction of the \sc{$6$-Local Stoquastic Hamiltonian} problem to the \sc{$2$-Local Stoquastic Hamiltonian} problem.
We then introduce new perturbative gadgets required for the geometric reduction in \cref{sec:stoq-geo-gadgets}.
Using the tools from the previous sections, we prove the \clw{StoqMA}{completeness} of the \sc{$2$-Local Stoquastic Hamiltonian} problem on $2$D lattices.
Finally, in \cref{sec:stoq-pauli-ham}, we discuss a parent stoquastic Pauli Hamiltonian that is naturally \clw{StoqMA}{complete} amongst other Pauli Hamiltonians that fall into the stoquastic regime.
The appendices contain various technical proofs concerning: gadget applications --- in \cref{app:parallel}, \cref{app:stoquastic_subdivision_gadget} and \cref{app:composition}, and circuit mappings --- in \cref{app:toffoli} and \cref{app:statistics}.

\section{Preliminaries and Technical Summary}\label{sec:prelim}
For brevity, we may sometimes omit identity terms and the operation $\otimes$ in tensor product strings; for example, $A\otimes I\otimes B \otimes  I = A_1B_3$.
Let the Pauli matrices be denoted as $X$, $Y$ and $Z$.
The $Z$-basis refers to the computational basis of qubits --- the eigenbasis of the Pauli-$Z$ operator, $\{\ket{0},\ket{1}\}$.
The $X$-basis refers to the eigenbasis of the Pauli-$X$ operator, $\{\ket{+},\ket{-}\}$.

A $k$-local Hamiltonian on $n$ qubits, $H = \sum_j H_j$, is a sum of local terms $H_j$ where the term acts non-trivially on a subset of at most $k$ qubits.
We assume $\norm{H_j} = O({\rm poly}(n))$ and each Hamiltonian term can be expressed using ${\rm poly}(n)$ bits.
Let the \emph{ground-state energy} of a Hamiltonian $H$ be denoted as $\lambda_0(H)$, which is the minimum eigenvalue of $H$.
A given local Hamiltonian admits an associated \emph{interaction (hyper)graph} $G = (\V,\E)$.
At each vertex of the graph there lies a two-dimensional Hilbert space, $\mathbb{C}^2$, representing a qubit.
Each (hyper)edge of the graph represents a local interaction term between qubits and will only contain those qubits acted on non-trivially by the Hamiltonian term.
\cref{fig:edge_base} demonstrates how we will visually represent a $2$-local interaction edge between two qubits, $u$ and $v$; where appropriate, the subscripts will be omitted.

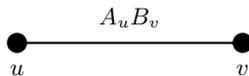
\begin{figure}[!ht]
    \centering
    \begin{tikzpicture}
        \pic{interaction};
    \end{tikzpicture}
    \caption{
        A pictorial representation of an interaction edge.
        The labels $u$/$v$ either represent single qubits or a set of qubits.
        The term $A_uB_v$ represents the local interaction between $u$ and $v$.
        }
    \label{fig:edge_base}
\end{figure}

We now present formal definitions of the relevant problem, local Hamiltonian family and complexity class for this work.

\begin{definition}[\sc{Local Hamiltonian}]
    Given a $k$-local Hamiltonian $H$ acting on an $n$ qubit system with parameters $a,b \in \mathbb{R}$ such that $b-a \geq 1/{\rm poly}(n)$, determine whether $\lambda_0(H) \leq a$ or $\lambda_0(H) > b$ promised one is true.
\end{definition}

Local stoquastic Hamiltonians are a subclass of local Hamiltonians, with each term having non-positive off-diagonal elements in the computational basis.

\begin{definition}[Stoquastic Hamiltonian]\label{def:stoq_ham}
    A Hamiltonian on $n$ qubits is said to be stoquastic if, in the computational basis, $\bra{x}H\ket{y} \leq 0$, for any $x\neq y$ where $\ket{x},\ket{y}\in (\mathbb{C}^2)^{\otimes n}$.
\end{definition}

We refer to a local stoquastic Hamiltonian as \emph{term-wise stoquastic} if each local term $H_j$ is stoquastic.
The \sc{Local Stoquastic Hamiltonian} problem is defined with respect to term-wise stoquastic Hamiltonians.
Unless otherwise stated, all results in this work refer to term-wise stoquastic Hamiltonians.
The Hermitian property of Hamiltonians implies that the off-diagonal and diagonal terms are real.
The \sc{Local Stoquastic Hamiltonian} problem is complete for the class \cl{StoqMA}.
The complexity class \cl{StoqMA} lies between \cl{MA} and \cl{QMA}, specifically,
\begin{equation*}
    \cl{P} \subseteq \cl{BPP} \subseteq \cl{MA} \subseteq \cl{StoqMA} \subseteq \cl{QMA}.
\end{equation*}
It is also known that the \sc{Local Stoquastic Hamiltonian} problem is hard for the class \cl{MA}~\cite{BDOT06}.
A formal definition of \cl{MA} follows in \cref{sec:comp-stoq-ham-prob}.

To define \cl{StoqMA}, we first introduce the concept of a stoquastic verification circuit.

\begin{definition}[Stoquastic Verification Circuit]\label{def:SVC}
    A stoquastic verification circuit is a tuple $S_n = (n,w,m,p,U)$ where $n$ is the number of input qubits, $w$ is the number of proof qubits, $m$ is the number of ancillae initialised in the $\ket{0}$ state and $p$ is the number of ancillae initialised in the $\ket{+}$ state.
    The circuit $U$ is a quantum circuit on $M\coloneqq n + w + m + p$ qubits, comprised of $T = O(\poly{n})$ gates from the set $\{X, \Gate{Cnot}, \Gate{Toffoli}\}$.
    The acceptance probability of a stoquastic verification circuit $S_n$, given some input string $x\in \varSigma^n$ and a proof state $\ket{\xi} \in \mathbb{C}^{2^w}$ is defined as:
    \begin{equation*}
        \Pr\left[S_n(x,\ket{\xi})\right]= \bra{\phi}U^\dagger \Pi_{\text{out}} U \ket{\phi},
    \end{equation*}
    where $\ket{\phi} = \ket{x,\xi,0^{m},+^{p}}$ and $\Pi_{\text{out}} = \ketbra{+}_1$ is a projector onto the output qubit.
\end{definition}

Note that $w,m,p = O(\poly{n})$.

\begin{definition}[\cl{StoqMA}($\alpha$,$\beta$)]\label{def:StoqMA}
    A promise problem $L = (L_{\textsc{yes}}, L_{\textsc{no}})$ belongs to the class \cl{StoqMA}($\alpha$,$\beta$) if there exists a polynomial-time generated stoquastic circuit family $\mathcal{S} = \{S_n : n \in \mathbb{N}\}$, where each stoquastic circuit $S_n$ acts on $n + w+m + p$ input qubits and produces one output qubit, such that:
    \begin{itemize}
        \item[] \textbf{Completeness}: For all $x\in L_{\textsc{yes}}$, $\exists \ket{\xi}\in(\mathbb{C}^2)^{\otimes w}$, such that, $ \Pr\left[S_{|x|}(x,\ket{\xi})=\mathtt{+} \right] \geq \alpha(|x|)$
        \item[] \textbf{Soundness}: For all $x\in L_{\textsc{no}}$, $\forall\ket{\xi}\in(\mathbb{C}^2)^{\otimes w}$, then, $ \Pr\left[S_{|x|}(x,\ket{\xi})=\mathtt{+} \right] \leq \beta(|x|)$
    \end{itemize}
    The term $\alpha$ refers to the completeness parameter and $\beta$ the soundness parameter, where $1/2 \leq \beta(|x|) < \alpha(|x|) \leq 1$ and satisfying $\alpha-\beta\geq\frac{1}{\poly{|x|}}$.
\end{definition}

Unlike \cl{QMA} and \cl{MA}, the completeness and soundness parameters cannot be amplified for \cl{StoqMA}.
It was conjectured by Aharanov, Grilo and Liu~\cite{AGL21} that 
\begin{equation*}
    \cl{StoqMA}(\alpha,\beta) \subseteq \cl{StoqMA}(1 - 2^{-l(n)},\frac{1}{2}+2^{-l(n)})
\end{equation*}
where $l(n)$ is some polynomial in the system size.
Interestingly, via a clever application of distribution testing, Liu~\cite{Liu21} was able to prove a method for soundness error-reduction, specifically
\begin{equation*}
    \cl{StoqMA}\left(\frac{1}{2} + \frac{\alpha}{2}, \frac{1}{2} + \frac{\beta}{2}\right) \subseteq \cl{StoqMA}\left(\frac{1}{2} + \frac{\alpha^r}{2}, \frac{1}{2} + \frac{\beta^r}{2}\right) 
\end{equation*}
where $r = \poly{n}$.
A slight downfall to this reduction is that it requires $r$ copies of the proof state.
We therefore assume the parameters are fixed as in \cref{def:StoqMA}.

\begin{remark}[Merlin's message]\label{rmk:Merlins_message}
    In the class \cl{StoqMA}, we describe the interaction between Merlin and Arthur by explicitly distinguishing the components each party contributes.
    This framing is consistent with the formal definition from Ref.~\cite{BBT06} and subsequent works~\cite{CM16,AGL21,Liu21,Gharibian24}.
    To concisely represent this interaction, we define a tuple $(\xi, S_{|x|})$, where $\xi$ is the proof state provided by Merlin and $S_{|x|}$ is the verification circuit controlled by Arthur.
    The verification process is structured as follows: $S_{|x|}$ takes as input the problem instance $x$, includes polynomially many $\ket{0}$- and $\ket{+}$-ancillae, and is described by a polynomial-sized stochastic verification circuit.
    Without loss of generality, we assume an even number of $\ket{+}$-ancillae in the circuit.
    This re-framing captures the essential aspects of the class \cl{StoqMA} while making the roles of Merlin and Arthur more intuitive.
\end{remark}

\subsection{Technical Summary}
The main contribution of this work establishes the \clw{StoqMA}{completeness} of the \sc{$2$-Local Stoquastic Hamiltonian} problem on $2$D lattices of qubit systems.
We follow the framework outlined by Oliveira and Terhal.
To start, we show that a \Gate{Swap} network can be used to map long-range \cl{StoqMA} circuits to nearest-neighbour circuits.
The number of gates increases by a factor $\Theta(n)$ using this approach and results in a polynomial overhead on the depth of the verification circuit.
Importantly, the circuit statistics are preserved, i.e., the completeness and soundness parameters are not affected.

We proceed by mapping general \cl{StoqMA} circuits to a spatially sparse circuit construction.
This entails a polynomial increase in the number of system qubits and a linear increase in the number of overall gates.
The reason is so that each qubit only interacts with a constant number of others in a geometrically local neighbourhood.
Surprisingly, this mapping also preserves the statistics of the circuit (see \cref{app:statistics}).
\cref{fig:stoqma_circuit_modifications} gives an overview of the required circuit modifications.

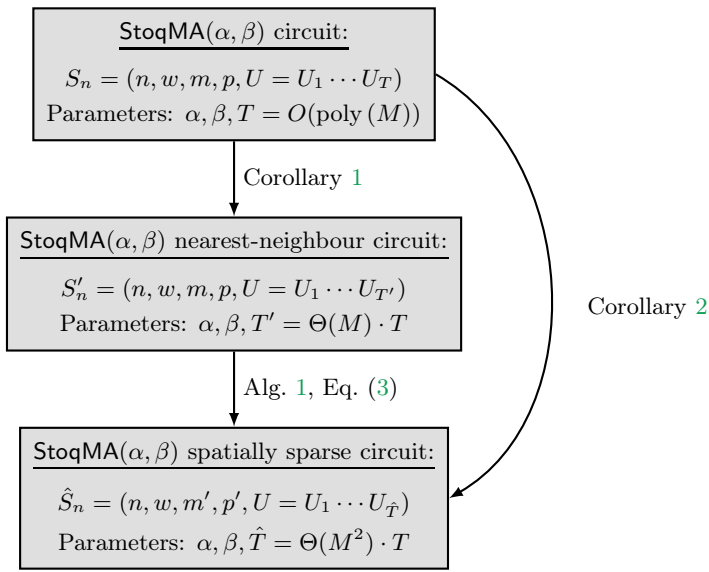
\begin{figure}[!ht]
    \centering
    \begin{tikzpicture}[
        node distance=10mm and 10mm,
            box/.style = {draw, minimum height=5mm, inner xsep=3mm, inner sep=5pt, align=center, thick},
            scale=1,transform shape
            ]
        \coordinate (orig) at (0,0);

        \node[fill=quantumpurple!25] (a0) [box, below=of orig] {\underline{$\cl{StoqMA}(\alpha, \beta)$ circuit:}  \\[0.2cm] $S_n = (n,w,m,p,U = U_1\cdots U_T)$ \\[0.1cm] Parameters: $\alpha, \beta, T = O(\poly{M})$ };
        \node[fill=quantumpurple!15] (a1) [box, below=of a0] {\underline{$\cl{StoqMA}(\alpha, \beta)$ nearest-neighbour circuit:}  \\[0.2cm] $S'_n = (n,w,m,p,U = U_1\cdots U_{T'})$ \\[0.1cm] Parameters: $\alpha, \beta, T' = \Theta(M)\cdot T$ };
        \node[fill=quantumpurple!25] (a2) [box, below=of a1] {\underline{$\cl{StoqMA}(\alpha, \beta)$ spatially sparse circuit:}  \\[0.2cm] $\hat{S}_n = (n,w,m',p',U = U_1\cdots U_{\hat{T}})$ \\[0.1cm] Parameters: $\alpha, \beta, \hat{T} = \Theta(M^2)\cdot T$};

        \draw[-latex,thick] (a0) -- (a1) node[midway, right] {\cref{cor:generic-StoqMA}};
        \draw[-latex,thick] (a1) -- (a2) node[midway, right] {Alg.~\ref{alg:sp_sp_circuit}, \cref{eq:spatially_sparse_gate_sequence}};
        \draw[-latex,thick] (a0.east) to [out=-30, in=30] (a2.east) node[midway,xshift=6cm, yshift=-5cm] {\cref{cor:generic-StoqMAtospsp}};
        \node at (-6.5,-5) {};
    \end{tikzpicture}
    \caption{
        Workflow of the required circuit modifications.
        We take generic (long-range) \cl{StoqMA} circuits to ones comprised of only nearest-neighbour gates.
        A subsequent mapping takes such circuits to the spatially sparse construction.
        Here, $M \coloneqq n+w+m+p$ is the total number of qubits in the circuit.
        Additionally, $m'>m$ and $p'>p$ are the number of ancilla qubits required for the spatially sparse construction.
        The important parameters at each stage are: the completeness parameter, the soundness parameter and the gate count.
    }
    \label{fig:stoqma_circuit_modifications}
\end{figure}

We then employ the (stoquastic) Feynman-Kitaev clock construction for the spatially sparse circuit, proving the \clw{StoqMA}{completeness} of the \sc{$6$-Local Stoquastic Hamiltonian} problem on spatially sparse graphs.
This mapping is essentially the same as the standard \clw{StoqMA}{hardness} proof but takes special care with how the clock qubits are oriented.

\begin{restatable*}[]{theorem}{thrmsixLSHPspsp}
    \label{thm:stoqma-complete_k-local_stoq_ham}
    The \sc{$6$-Local Stoquastic Hamiltonian} problem on a spatially sparse graph is \clw{StoqMA}{complete}.
\end{restatable*}

We then review the perturbative gadgets used in the reduction of the \sc{$6$-Local Stoquastic Hamiltonian} problem to the \sc{$2$-Local Stoquastic Hamiltonian} problem~\cite{BBT06}.
Motivated by the design of these gadgets, specifically the subdivision gadget, we construct a family of new perturbative gadgets required for the geometric reduction.
Using intuition from \refcite{OT08}, it is possible to reduce the \sc{$2$-Local Stoquastic Hamiltonian} problem to a degree-$4$ planar graph.
However, the previous geometric gadget constructions for general local Hamiltonians~\cite{OT08} rely on a Pauli-basis decomposition and do not preserve stoquasticity; consequently, such methods cannot be directly applied in our setting.
The technical challenge in this part is therefore ensuring that each gadget term is stoquastic, that no unwanted terms are introduced, and that second-order perturbative gadgets produce no undesired cross terms.
To achieve our results, we first decompose general $2$-local stoquastic Hamiltonians into a specific basis of single-qubit matrices.
Specifically, since we consider computational basis stoquastic Hamiltonians, we define a basis denote as $\rho^{\mu}$ for $\mu \in \{0,1,2,3\}$ where $\rho^0 = \ketbra{0}, \rho^1 = \ketbra{0}{1}, \rho^2 = \ketbra{1}{0}, \rho^3 = \ketbra{1}$.
Using this basis, we can express any local stoquastic Hamiltonian as a sum of diagonal terms and non-positive off-diagonal terms produced by tensored combinations of $\rho^{\mu}$ matrices.
Within this decomposition, the construction of perturbative gadgets becomes analogous to that of Ref.~\cite{OT08}, but with careful attention to stoquasticity.

We introduce three additional stoquastic-preserving perturbative gadgets: the \sc{Cross}, the \sc{Fork}, and the \sc{Triangle} gadgets.
Additionally, we decompose a general $2$-local stoquastic Hamiltonian into a sum over a specific basis of $2$-local stoquastic interactions.
We then employ the Schrieffer-Wolff transformation to analyse the effect of the perturbative gadgets and prove that the constructed qubit Hamiltonians are term-wise stoquastic and $2$-local.
Importantly, the reduction from a spatially sparse graph to a planar graph is carried out using only a constant number of perturbative gadgets.

\newpage 
\begin{restatable*}{theorem}{planargraph}
    Given a $2$-local stoquastic Hamiltonian on a spatially sparse graph, $H$, there exists a $2$-local stoquastic Hamiltonian on a degree-$4$ planar graph with a straight-line drawing in the plane that approximates $H$.
\end{restatable*}

A subsequent embedding procedure from the planar graph to the $2$D square lattice is then discussed.
Since this mapping is efficient and only requires an additional constant number of perturbative gadgets, we establish the \clw{StoqMA}{completeness} of the \sc{$2$-Local Stoquastic Hamiltonian} problem on $2$D lattices.
Our results hold for local term-wise stoquastic Hamiltonians on qubit systems, i.e., where each local term is stoquastic as per \cref{def:stoq_ham}.

\begin{restatable*}[]{theorem}{thrmTwoLSHsquarelattice}
    \label{thrm:2LSH_square_lattice}
    The \sc{$2$-Local Stoquastic Hamiltonian} problem on a $2$D square lattice is \clw{StoqMA}{complete}.
\end{restatable*}

The contributions of this work are summarised in \cref{fig:flow-diagram}.

\begin{figure}[!ht]
    \centering
    \pgfdeclarelayer{background layer}
    \pgfdeclarelayer{foreground layer}
    \pgfsetlayers{background layer,main,foreground layer}
    \begin{tikzpicture}[
        node distance=10mm and 10mm,
            box/.style = {draw=none, minimum height=5mm, inner xsep=3mm, inner sep=5pt, align=center, thick},
             scale=0.6,transform shape
            ]
        \coordinate (orig) at (0,0);
        \coordinate (shiftorig) at (0,-8);
    
        \node[fill=gray!25,label={[draw=black, thick,fill=white,inner sep=3pt]below:Ref.~\cite{BDOT06}}] (a0) [box, below=of orig] {\sc{6-Loc. Stoq. Ham.} \\[0.1cm] \clw{MA}{hard}};
        \node[fill=gray!25,label={[draw=black, thick,fill=white,inner sep=3pt]below:Ref.~\cite{BDOT06}}] (a1) [box, right=of a0] {\sc{2-Loc. Stoq. Ham.} \\[0.1cm] \clw{MA}{hard}};
    
        \node[fill=gray!25,label={[draw=black, thick,fill=white,inner sep=3pt]below:Ref.~\cite{BBT06}}] (b0) [box, below right=of a0] {\sc{6-Loc. Stoq. Ham.} \\[0.1cm] \clw{StoqMA}{complete}};
        \node[fill=gray!25,label={[draw=black, thick,fill=white,inner sep=3pt]below:Ref.~\cite{BBT06}}] (b1) [box, right=of b0] {\sc{2-Loc. Stoq. Ham.} \\[0.1cm] \clw{StoqMA}{complete}};
    
        \node[fill=quantumpurple!35,label={[draw=black,thick,fill=white,inner sep=3pt]below:This work.}] (c0) [box, above right=of a0] {\sc{6-Loc. Stoq. Ham.} \\[0.1cm] Spatially Sparse \\[0.1cm] \clw{MA}{hard}};
        \node[fill=quantumpurple!35,label={[draw=black,thick,fill=white,inner sep=3pt]below:This work.}] (c1) [box, right=of c0] {\sc{2-Loc. Stoq. Ham.} \\[0.1cm] Spatially Sparse \\[0.1cm] \clw{MA}{hard}};
    
        \node[fill=quantumpurple!35,label={[draw=black,thick,fill=white,inner sep=3pt]below:This work.}] (d0) [box, below right=of b0] {\sc{6-Loc. Stoq. Ham.} \\[0.1cm] Spatially Sparse \\[0.1cm] \clw{StoqMA}{complete}};
        \node[fill=quantumpurple!35,label={[draw=black,thick,fill=white,inner sep=3pt]below:This work.}] (d1) [box, right =of d0, xshift=2cm] {\sc{2-Loc. Stoq. Ham.} \\[0.1cm] Square Lattice \\[0.1cm] \clw{StoqMA}{complete}};
        \begin{scope}
            \node[fill=quantumpurple!35,label={[draw=black,thick,fill=white,inner sep=3pt]below:This work.}] (ld0) [box, below=of shiftorig] {\sc{6-Loc. Stoq. Ham.} \\[0.1cm] Spatially Sparse \\[0.1cm] \clw{StoqMA}{complete}};
            \node[fill=quantumpurple!15,label={[draw=black,thick,fill=white,inner sep=3pt]below:This work.}] (ld1) [box, right=of ld0] {\sc{2-Loc. Stoq. Ham.} \\[0.1cm] Spatially Sparse \\[0.1cm] \clw{StoqMA}{complete}};
            \node[fill=quantumpurple!15,label={[draw=black,thick,fill=white,inner sep=3pt]below:This work.}] (ld2) [box, right=of ld1] {\sc{2-Loc. Stoq. Ham.} \\[0.1cm] Planar Graph \\[0.1cm] \clw{StoqMA}{complete}};
            \node[fill=quantumpurple!35,label={[draw=black,thick,fill=white,inner sep=3pt]below:This work.}] (ld3) [box, right=of ld2] {\sc{2-Loc. Stoq. Ham.} \\[0.1cm] Square Lattice \\[0.1cm] \clw{StoqMA}{complete}};
            \node[fill=quantumpurple!15,label={[draw=black,thick,fill=white,inner sep=3pt]below:This work.}] (ld4) [box, right=of ld3] {\sc{2-Loc. Stoq. Ham.} \\[0.1cm] Triangular Lattice \\[0.1cm] \clw{StoqMA}{complete}};

            \draw [decorate, decoration = {calligraphic brace,raise=0pt,aspect=0.65,amplitude=0.85cm}, line width=0.5mm] (-2,-9)--(22,-9);
        \end{scope}
        \begin{pgfonlayer}{background layer}
            \draw[-latex, very thick, gray] (a0) -- (a1);
            \draw[-latex, very thick, gray] (a0.south) -- (b0.west);
            \draw[-latex, very thick, gray] (b0) -- (b1);
            \draw[-latex, very thick] (c0) -- (c1);
            \draw[-latex, very thick] (d0) --node[fill=white,rotate=0,inner sep=-2.5pt,outer sep=0]{//} (d1);
            \draw[-latex, very thick] (ld0) -- (ld1);
            \draw[-latex, very thick] (ld1) -- (ld2);
            \draw[-latex, very thick] (ld2) -- (ld3);
            \draw[-latex, very thick] (ld3) -- (ld4);
            \draw[-latex, very thick] (a0.north) -- (c0.west);
            \draw[-latex, very thick] (b0.south) -- (d0.west);
        \end{pgfonlayer}
    \end{tikzpicture}
    \caption{A flow diagram of the complexity of the \sc{Local Stoquastic Hamiltonian} problem.
            Arrows represent modifications/reductions to the problem.
            Grey boxes represent the results of prior work.
            Purple boxes represent the results of this work.
            Results below the horizontal brace are intermediate steps in the reduction to the \sc{$2$-Local Stoquastic Hamiltonian} problem on a $2$D square lattice.
            }
    \label{fig:flow-diagram}
\end{figure}

\section{The Complexity of the Stoquastic Hamiltonian Problem}\label{sec:comp-stoq-ham-prob}
Being a restriction of the standard \sc{Local Hamiltonian} problem implies the \sc{Local Stoquastic Hamiltonian} problem is naturally in \cl{QMA}.
Furthermore, it is also at least \cl{NP}-hard since all classical local Hamiltonians are stoquastic.
The main challenge is determining whether the problem is complete for any complexity class.
Bravyi \emph{et al}.~\cite{BDOT06} considered a semi-classical variant of \cl{MA} called \cl{MA}\textsubscript{q}, which subsequently led to the \clw{MA}{hardness} of the \sc{$6$-Local Stoquastic Hamiltonian} problem.
The result followed from the conclusion that \cl{MA}\textsubscript{q} = \cl{MA}.
A later work from Bravyi \emph{et al}.~\cite{BBT06} considered a more `quantum'  variant of \cl{MA} called \cl{StoqMA}, where it followed that the \sc{$2$-Local Stoquastic Hamiltonian} problem was \clw{StoqMA}{complete}.
Analysing the results concerning the \clw{MA}{hardness} is fruitful for the \clw{StoqMA}{hardness} results that follow.

\begin{definition}[Semi-Classical Verification Circuit]
    A semi-classical verification circuit is a tuple $F_n = (n,w,m,p,U)$ where $n$ is the number of input qubits, $w$ is the number of proof qubits, $m$ is the number of ancillae initialised in the $\ket{0}$ state and $p$ is the number of ancillae initialised in the $\ket{+}$ state.
    The circuit $U$ is a quantum circuit on $M\coloneqq n + w + m + p$ qubits, comprised of $T = O(\poly{n})$ gates from the set $\{X, \Gate{Cnot}, \Gate{Toffoli}\}$.
    The acceptance probability of a semi-classical verification circuit $F_n$, given some input string $x\in \varSigma^n$ and a proof state $\ket{\xi} \in \mathbb{C}^{2^w}$ is defined as:
    \begin{equation*}
        \Pr\left[F_n(x,\ket{\xi})\right]= \bra{\phi}U^\dagger \Pi_{\text{out}} U \ket{\phi},
    \end{equation*}
    where $\ket{\phi} = \ket{x,\xi,0^{m},+^{p}}$ and $\Pi_{\text{out}} = \ketbra{1}_1$ is a projector onto the output qubit.
\end{definition}

Note that this definition differs from \cref{def:SVC} in that the output qubit is measured in the $Z$-basis.

\begin{definition}[\cl{MA}\textsubscript{q}~\cite{BDOT06}]
    A promise problem $L = (L_{\textsc{yes}}, L_{\textsc{no}})$ belongs to the class \cl{MA}\textsubscript{q} if there exists a polynomial-time generated stoquastic circuit family $\mathcal{F} = \{F_n : n \in \mathbb{N}\}$, where each semi-classical circuit $F_n$ acts on $n + w+m + p$ input qubits and produces one output qubit, such that:
    \begin{itemize}
        \item[] \textbf{Completeness}: For all $x\in L_{\textsc{yes}}$, $\exists \ket{\xi}\in(\mathbb{C}^2)^{\otimes w}$, such that, $ \Pr\left[F_{|x|}(x,\ket{\xi})=\mathtt{1} \right] \geq 2/3$
        \item[] \textbf{Soundness}: For all $x\in L_{\textsc{no}}$, $\forall\ket{\xi}\in(\mathbb{C}^2)^{\otimes w}$, then, $ \Pr\left[F_{|x|}(x,\ket{\xi})=\mathtt{1} \right] \leq 1/3$
    \end{itemize}
\end{definition}

Note that $w,m,p = O(\poly{n})$.
The purpose of considering a semi-classical variant was to promote the \cl{BPP} verification circuit to a classically reversible quantum circuit.
Restricting Merlin to sending only classical proof states allows for the conclusion that \cl{MA}\textsubscript{q} = \cl{MA}~\cite[Lemma 2]{BDOT06}.
A key difference between \cl{MA}\textsubscript{q} and \cl{StoqMA} is that Arthur measures only in the $Z$-basis for \cl{MA}\textsubscript{q} and only in the $X$-basis for \cl{StoqMA}.
The $X$-basis measurement for \cl{StoqMA} circuits makes amplification of the completeness and soundness difficult using current techniques.
\cl{MA}\textsubscript{q}, on the other hand, does admit amplification, and it is known that \cl{MA}\textsubscript{$1$} = \cl{MA}~\cite{ZF87}; hence the same applies for \cl{MA}\textsubscript{q}.

A direct application of Feynman-Kitaev circuit-to-Hamiltonian construction on \cl{MA}\textsubscript{q} circuits was sufficient to show the \clw{MA}{hardness} of the \sc{$6$-Local Stoquastic Hamiltonian} problem.
An alternate method is required to prove \clw{StoqMA}{hardness} of the problem due to the inability to apply the circuit-to-Hamiltonian construction exactly for \cl{StoqMA} circuits.
This is a consequence of the fact that we do not have robust methods to amplify the completeness and soundness parameters.
In this section, we recap both the \cl{MA}\textsubscript{q}-hardness and \clw{StoqMA}{hardness} results.
We then show how the spatially sparse construction of \refcite{OT08} can be used to prove the \cl{MA}\textsubscript{q}-hardness and \clw{StoqMA}{hardness} of $6$-local stoquastic Hamiltonians.
We cover both ideas since a large portion of the work for the \clw{StoqMA}{hardness} result is in the \clw{MA}{hardness} proof.
The completeness of the results follows directly from Ref.~\cite{BBT06}.

\subsection{MA-hardness of the Local Stoquastic Hamiltonian Problem}
We recap the original \clw{MA}{hardness} proof of $6$-local stoquastic Hamiltonians due to Bravyi, DiVincenzo, Oliveira and Terhal~\cite{BDOT06}, then outline the important steps of the spatially sparse extension.
The process of extending the Feynman-Kitaev construction to a spatially sparse lattice is well-known~\cite{OT08,PM17}.
Our goal is to demonstrate how this idea can apply to local stoquastic Hamiltonians.

\begin{theorem}[\cite{BDOT06}]\label{thrm:ma-hard_6-local_stoq_ham}
    The \sc{$6$-Local Stoquastic Hamiltonian} problem is \textnormal{\clw{MA}{hard}}.
\end{theorem}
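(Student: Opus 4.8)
The plan is to exploit the identity $\cl{MA}=\cl{MA}\textsubscript{q}$ and reduce an arbitrary $\cl{MA}\textsubscript{q}$ computation to the \sc{6-Local Stoquastic Hamiltonian} problem via a Feynman--Kitaev circuit-to-Hamiltonian construction. Starting from a semi-classical verification circuit $F_n=(n,w,m,p,U)$ with gates $U_t\in\{X,\Gate{Cnot},\Gate{Toffoli}\}$ applied in time order, I would adjoin a unary (domain-wall) clock register and consider the history state
\begin{equation*}
  \ket{\psi_{\text{hist}}}=\frac{1}{\sqrt{T+1}}\sum_{t=0}^{T}\left(U_t\cdots U_1\right)\ket{\phi}\otimes\ket{t},\qquad \ket{\phi}=\ket{x,\xi,0^m,+^p}.
\end{equation*}
The associated Hamiltonian is $H=H_{\text{in}}+H_{\text{clock}}+H_{\text{prop}}+H_{\text{out}}$, where $H_{\text{clock}}$ is a diagonal projector penalising illegal domain-wall clock configurations, $H_{\text{in}}$ penalises incorrectly initialised ancillae at $t=0$, $H_{\text{out}}$ penalises a rejecting output at $t=T$, and $H_{\text{prop}}=\sum_t \tfrac12\big(\mathbb{I}\otimes(\ketbra{t}+\ketbra{t-1}) - U_t\otimes\ket{t}\!\bra{t-1} - U_t^\dagger\otimes\ket{t-1}\!\bra{t}\big)$ enforces correct propagation.

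Next I would verify locality and stoquasticity together, which is the heart of the argument. Each gate $U_t$ acts on at most three qubits, and in the domain-wall encoding the transition between clock positions $t-1$ and $t$ touches a constant number of clock qubits, so every propagation term is at most $6$-local, and the diagonal penalties are comparably local. For stoquasticity, $H_{\text{clock}}$, the $\ket{0}$-ancilla part of $H_{\text{in}}$, and $H_{\text{out}}$ are diagonal in the computational basis and hence harmless. The $\ket{+}$-ancilla initialisation term $\tfrac12(\mathbb{I}-X)=\ketbra{-}$ has off-diagonal entry $-\tfrac12\le 0$ and is therefore also stoquastic. The crucial observation concerns the off-diagonal propagation terms $-U_t\otimes\ket{t}\!\bra{t-1}-U_t^\dagger\otimes\ket{t-1}\!\bra{t}$: because $X$, $\Gate{Cnot}$, and $\Gate{Toffoli}$ are real permutation matrices with nonnegative entries, and the clock transition operators are likewise real and nonnegative in the computational basis, each such term contributes only nonpositive off-diagonal elements and no diagonal part. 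Hence $H$ is stoquastic term-by-term.

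Finally, the completeness and soundness analysis is identical to the standard \cl{QMA} \sc{Local Hamiltonian} argument, which I would invoke directly. In the \textsc{yes} case a good classical proof $\xi$ makes $\ket{\psi_{\text{hist}}}$ a low-energy state up to the small output penalty, giving $\lambda_0(H)\le a$; in the \textsc{no} case Kitaev's geometric/projection lemma, applied on the legal-clock subspace where the history states span the kernel of $H_{\text{prop}}$, forces $\lambda_0(H)>b$ with $b-a\ge 1/\poly{n}$. The main obstacle is not the energy bookkeeping but establishing stoquasticity of \emph{every} term simultaneously --- in particular confirming that the $\ket{+}$-ancilla initialisation penalty and the real permutation-gate propagation terms never produce a positive off-diagonal matrix element --- together with verifying that the $6$-locality count genuinely survives the domain-wall clock encoding.
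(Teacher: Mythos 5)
Your proposal is correct and follows essentially the same route as the paper: invoke $\cl{MA}=\cl{MA}$\textsubscript{q}, build the domain-wall Feynman--Kitaev Hamiltonian $H_{\text{in}}+H_{\text{clock}}+H_{\text{prop}}+H_{\text{out}}$, observe that the diagonal penalties and the $\ketbra{-}$ coin-initialisation term are stoquastic while the propagation terms $-R_t\otimes(\cdots)$ have only non-positive off-diagonal entries because the gates are non-negative permutation matrices, count $6$-locality from three gate qubits plus three clock qubits, and finish with the standard Kitaev spectral analysis. The only cosmetic difference is that the paper assumes without loss of generality that all gates are Toffolis and writes the clock transitions explicitly as three-qubit domain-wall projectors, whereas you keep the gate set and clock notation generic.
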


\begin{proof}
    We employ the Feynman-Kitaev circuit-to-Hamiltonian construction to prove the problem is hard for the class \cl{MA}.
    Let $F_{|x|}$ be Arthur's semi-classical verification circuit.
    Recall \cl{MA}\textsubscript{q} = \cl{MA}.
    Let the input to the circuit be an $N \coloneqq n + w+ m + p$ qubit register comprised of four parts: the input state $\ket{x}$ of $n$ qubits, the proof state $\ket{\xi}$ of $w$ qubits, the \textit{ancilla} register of $m$ qubits initialised to $\ket{0}$ and the \textit{coin} register of $p$ qubits initialised to $\ket{+}$.
    Let $F_{|x|}$ comprise a sequence of $T$ Toffoli gates denoted as $R_T, \dots, R_1$.

    Define a Hamiltonian $H = H_{\text{in}} + H_{\text{out}} + H_{\text{prop}} + H_{\text{clock}}$ acting on a register of $T$ \textit{clock} qubits labelled as $c_1, \dots, c_T$ and the $N$-qubit input register.
    Let the output measured qubit be denoted $q$; for this instance, Arthur can measure using only the $Z$-basis.
    Each Hamiltonian term is defined as a penalising Hamiltonian and must be stoquastic.
        
    \begin{align}
        H_{\text{in}} &\coloneqq \left(\sum_{i=1}^{n} \ketbra{\bar{x}_i}_i + \sum_{j=1}^{m} \ketbra{1}_{\textit{anc},j} + \sum_{i=1}^{p} \ketbra{-}_{\textit{coin},i}\right)\otimes\ketbra{0}_{c_1}, \notag\\
        H_{\text{out}} &\coloneqq \ketbra{0}_q \otimes \ketbra{1}_{c_T}, \label{eq:6LH_hard_proof_H-out}\\
        H_{\text{clock}} &\coloneqq \sum_{t=1}^{T-1} \ketbra{01}_{c_{t},c_{t+1}}, \label{eq:6LH_hard_proof_H-clock}\\
        H_{\text{prop}} &\coloneqq \sum_{t=1}^{T} H_{\text{prop}}(t).\notag
    \end{align}

    We define the propagation Hamiltonian terms in the following way:
    \begin{align*}
        H_{\text{prop}}(1) &= \ketbra{00}_{c_1,c_2} + \ketbra{10}_{c_1,c_2} - R_1\otimes(\ketbra{10}{00}_{c_1,c_2} + \ketbra{00}{10}_{c_1,c_2}), \\
        H_{\text{prop}}(t) &= \ketbra{100}_{c_{t-1},c_t,c_{t+1}} + \ketbra{110}_{c_{t-1},c_t,c_{t+1}} \notag\\
                            &\qquad - R_t\otimes(\ketbra{110}{100}_{c_{t-1},c_t,c_{t+1}} + \ketbra{100}{110}_{c_{t-1},c_t,c_{t+1}}), \quad 1<t<T \\
        H_{\text{prop}}(T) &= \ketbra{10}_{c_{T-1},c_T} + \ketbra{11}_{c_{T-1},c_T} - R_T\otimes(\ketbra{11}{10}_{c_{T-1},c_T} + \ketbra{10}{11}_{c_{T-1},c_T}).
    \end{align*}

    Note that $H_{\text{in}}$, $H_{\text{out}}$ and $H_{\text{clock}}$ are all $2$-local Hamiltonians.
    The terms $H_{\text{prop}}(1)$ and $H_{\text{prop}}(T)$ are $5$-local and $H_{\text{prop}}(t)$ terms are $6$-local.
    It is straightforward to show each Hamiltonian term is stoquastic.
    Notice that $\ketbra{-} = \frac{1}{2}( I - X)$, $\ketbra{1} = \frac{1}{2}( I + Z)$ and $H_{\text{out}}$, $H_{\text{clock}}$ are diagonal; hence $H_{\text{in}}$, $H_{\text{out}}$ and $H_{\text{clock}}$ are all $2$-local \emph{stoquastic} Hamiltonians.
    The terms $R_t\otimes(\dots)$ in $H_{\text{prop}}(t)$ will have off-diagonal elements that are strictly positive.
    Therefore, each $H_{\text{prop}}(t)$ term is stoquastic.
    To conclude, we simply leverage the original arguments from \refcite{KSV02} to show that in the \textsc{yes} case, there exists a proof state such that the Hamiltonian $H$ has eigenvalues at most $\epsilon/(T+1)$.
    In the \textsc{no} case, all eigenvalues are at least $c(1-\sqrt{\epsilon})/T^3$.
\end{proof}

\subsection{Circuit Modifications}
Converting a standard \cl{MA}\textsubscript{q} or \cl{StoqMA} circuit to one which is spatially sparse in design requires two steps.
The first is to replace all long-range gates with ones that are nearest-neighbour.
Then, we map the nearest-neighbour circuit to one with a large number of ancillae qubits and some extra gates.
The general purpose of the spatially sparse modification is to make it so that each qubit is only acted on by a constant number of gates.
Instead of analysing both \cl{MA}\textsubscript{q} and \cl{StoqMA} circuits separately, we will focus on the \cl{StoqMA} case.
The same logic applies to \cl{MA}\textsubscript{q} circuits.
The workflow of these reductions is given in \cref{fig:stoqma_circuit_modifications}.

\subparagraph{Nearest-neighbour circuits.} The \emph{range} of a gate we define as the maximum distance between any two qubits the gate acts on.
The distance metric is the number of registers between the two registers the qubits belong to.
The worst case is when a gate acts on the first and last qubits in the circuit register, giving a range of $M-1 = \Theta(M)$.
To replace a long-range gate with a nearest-neighbour one, we employ a \Gate{Swap} network procedure.
The \Gate{Swap} network is a sequence of \Gate{Cnot} gates that swap the qubits of the long-range gate to be nearest-neighbour.
See \cref{app:toffoli} for more information.

\begin{restatable}{proposition}{proptoffdecomp}\label{prop:toff-decomp}
    Let $\Gate{Toffoli}_r[a,b;c]$ be a Toffoli gate with range $r>2$.
$\Gate{Toffoli}_r[a,b;c]$ can be exactly expressed using $\Theta(r)$ nearest-neighbour \Gate{Cnot} gates and a single nearest-neighbour $\Gate{Toffoli}_3[b-1,b;b+1]$.
\end{restatable}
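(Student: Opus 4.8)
The plan is to realise the long-range gate by conjugating the permitted nearest-neighbour Toffoli with a \Gate{Swap} network, i.e. to write
\[
    \Gate{Toffoli}_r[a,b;c] = V^\dagger\,\Gate{Toffoli}_3[b-1,b;b+1]\,V,
\]
where $V$ is a product of adjacent \Gate{Swap} gates that relocates the control originally at register $a$ to register $b-1$ and the target originally at register $c$ to register $b+1$, while keeping the control at $b$ fixed in place. Since a \Gate{Swap} merely relabels two registers and is self-inverse, this identity is exact provided $V$ implements the stated relocation; the qubit $b$ then serves as the pivot about which the whole construction is built.

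First I would fix $b$ as the pivot and treat the relocation of $a$ and of $c$ as two bubbling subroutines. Moving the qubit at register $a$ to register $b-1$ is accomplished by $|a-(b-1)|$ successive adjacent \Gate{Swap}s, and moving the qubit at register $c$ to register $b+1$ by $|c-(b+1)|$ such \Gate{Swap}s. Because the range of the gate is $r = \max\{|a-b|,|b-c|,|a-c|\}$, both displacements are $O(r)$, so $V$ consists of $O(r)$ adjacent \Gate{Swap}s. Each adjacent \Gate{Swap} decomposes exactly into three nearest-neighbour \Gate{Cnot}s, whence $V$ (and likewise $V^\dagger$) uses $\Theta(r)$ nearest-neighbour \Gate{Cnot}s. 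Together with the single $\Gate{Toffoli}_3[b-1,b;b+1]$ this yields the claimed count.

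After applying $\Gate{Toffoli}_3[b-1,b;b+1]$ I would run the \Gate{Swap} network in reverse, $V^\dagger$, to return every qubit to its original register; reversibility is immediate since each \Gate{Swap} is its own inverse and its three-\Gate{Cnot} decomposition is palindromic. Correctness then reduces to checking that, at the instant the local Toffoli fires, the $a$-control sits at $b-1$, the $b$-control at $b$, and the $c$-target at $b+1$, which is exactly what $V$ guarantees; nothing outside the swapped registers is touched.

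The main obstacle I expect is routing: a careful case analysis on the relative order of $a$, $b$ and $c$ on the line is needed to ensure the two bubbling subroutines act on disjoint register ranges and, in particular, never force the pivot $b$ to be swapped away. When $a$ and $c$ lie on opposite sides of $b$ this is automatic; when they lie on the same side, one of them must be routed past $b$, and I would handle this either by temporarily displacing $b$ by one register (absorbing the $O(1)$ correction into $V$ and $V^\dagger$) or by exploiting the symmetry of the Toffoli in its two controls to assign the control/target so that the qubit crossing $b$ feeds the nearer slot. In every case the total displacement, and hence the gate count, remains $\Theta(r)$, so the case analysis affects only the bookkeeping and not the asymptotics.
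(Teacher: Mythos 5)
Your proposal is correct and follows essentially the same route as the paper's proof: conjugate a single nearest-neighbour $\Gate{Toffoli}_3[b-1,b;b+1]$ by a \Gate{Swap} network that relocates $a$ to $b-1$ and $c$ to $b+1$ while keeping $b$ fixed, with each adjacent \Gate{Swap} costing three \Gate{Cnot}s for a $\Theta(r)$ total. The paper simply assumes $a<b<c$ without loss of generality, whereas you spell out the routing case analysis explicitly; this is a harmless elaboration, not a different argument.
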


Without loss of generality, we can assume all gates in \cl{StoqMA} (and also \cl{MA}\textsubscript{q}) circuits are Toffoli gates.
The consequence of this proposition is the following corollary.
Essentially, we can take any \cl{StoqMA} circuit and decompose it into a nearest-neighbour one using only nearest-neighbour gates at the cost of the number of gates increasing by a factor of $\Theta(M)$ in the worst case.

\begin{restatable}{corollary}{corgenericStoqMA}
    \label{cor:generic-StoqMA}
    Given any long-range $\cl{StoqMA}(\alpha,\beta)$ circuit with $T$ gates on $M$ qubits, there exists a nearest-neighbour \cl{StoqMA}($\alpha,\beta$) circuit with $\Theta(T\cdot M)$ gates on $M$ qubits.
\end{restatable}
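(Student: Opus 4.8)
The plan is to decompose each long-range gate of the verification circuit into nearest-neighbour gates using the \Gate{Swap} network of \cref{prop:toff-decomp}, and then argue that this substitution leaves the overall unitary---and therefore the completeness and soundness parameters---untouched. First I would invoke the reduction discussed in the surrounding text and detailed in \cref{app:toffoli}, which lets us assume without loss of generality that every one of the $T$ gates of $U = U_1 \cdots U_T$ is a Toffoli gate. Since the circuit acts on a linear register of $M$ qubits, any such gate acts on three registers whose pairwise distance is at most $M-1 = \Theta(M)$; that is, each gate has range $r \le M-1$.

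Next, for each gate $U_t = \Gate{Toffoli}_{r_t}[a,b;c]$ I would apply \cref{prop:toff-decomp}. If $r_t \le 2$ the gate is already nearest-neighbour and is left unchanged; otherwise the proposition replaces it \emph{exactly} by a sequence of $\Theta(r_t) = O(M)$ nearest-neighbour \Gate{Cnot} gates together with a single nearest-neighbour $\Gate{Toffoli}_3[b-1,b;b+1]$. Because the decomposition draws only from the set $\{X,\Gate{Cnot},\Gate{Toffoli}\}$ and the \Gate{Swap} network moves qubit states using only the existing registers---introducing no fresh ancillae---the result is again a valid stoquastic verification circuit on the same $M$ qubits. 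Summing over all gates, the total gate count of the new circuit $S'_n$ is $\sum_{t=1}^{T} \Theta(r_t) = \Theta(T \cdot M)$ in the worst case, yielding the claimed overhead $T' = \Theta(T \cdot M)$.

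The crucial and most delicate step is verifying that the circuit statistics are preserved exactly, so that the \emph{same} parameters $\alpha$ and $\beta$ certify the new circuit. This reduces to the observation that each application of \cref{prop:toff-decomp} realises the same unitary as the gate it replaces: the \Gate{Swap} network first brings the three target qubits into adjacent registers, applies the local $\Gate{Toffoli}_3$, and then reverses every \Gate{Cnot} so that all qubits are returned to their original positions. Consequently the product of the new gates equals the original $U_t$, and composing over $t$ shows the total unitary is unchanged. Since the acceptance probability $\bra{\phi} U^\dagger \Pi_{\text{out}} U \ket{\phi}$ depends only on $U$, the fixed input $\ket{\phi} = \ket{x,\xi,0^m,+^p}$, and the projector $\Pi_{\text{out}}$---none of which are altered---the completeness and soundness bounds of \cref{def:StoqMA} transfer verbatim. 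I expect the main obstacle to be the bookkeeping of the \Gate{Swap} network, ensuring that the auxiliary qubit permutations genuinely cancel and that no residual relabelling corrupts the initialisation of the $\ket{0}$- and $\ket{+}$-ancillae; once the exact, position-restoring decomposition of \cref{prop:toff-decomp} is in hand, statistics preservation is immediate and the corollary follows.
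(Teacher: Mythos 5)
Your proposal is correct and follows essentially the same route as the paper: decompose each long-range Toffoli via the \Gate{Swap} network of \cref{prop:toff-decomp}, incurring $\Theta(M)$ nearest-neighbour gates per original gate for a total of $\Theta(T\cdot M)$, with the statistics preserved because the decomposition realises the original unitary exactly. Your treatment of statistics preservation is somewhat more explicit than the paper's proof of the corollary itself, but it matches the argument the paper defers to \cref{app:statistics}.
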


The key conclusion is that the completeness and soundness parameters of the original circuit are preserved in the transformed circuit (\cref{app:statistics}).
This is crucial for the subsequent spatially sparse construction.
The only important differences are the range and number of gates in the circuit.
As a final note, we emphasise that a \Gate{Swap} gate is just a sequence of \Gate{Cnot} gates and hence can be constructed in the \cl{StoqMA} circuit framework\footnote{It is likely that gate optimisation can be employed to reduce the overall gate overhead and circuit depth; however for our purposes, this is not necessary.}.

\subparagraph{Spatially sparse circuits.}
This construction is based on the idea of \emph{rounds} of gate executions.
Each round comprises of the application of one non-trivial gate in the verification sequence.
After the round has been executed, a series of \Gate{Swap} gates are employed between rows of qubits.
Each row of qubits is used for one (original) gate execution; if there are $T$ gates in the sequence, there are $T$ rows of qubits.
For an input register of $M$ qubits, each row is $M$ qubits `wide', giving a total of $P=T\cdot M$ qubits in the new construction.
Note that we still only require one copy of the proof state, initialised on the first row, and so the $T-1$ remaining rows are all ancillae; roughly speaking, $P = n\cdot O(\poly{n})$.
The \Gate{Swap} sequence is conducted between rows from `right to left'.
The time with respect to the gate sequence execution is taken from row-$1$ to row-$T$ (vertically).
Time with respect to \emph{all} gates employed (gate sequence and \Gate{Swap} sequence) follows a snake-like pattern; this will be characterised using a time cursor.

\begin{figure}[!ht]
    \centering
    \begin{tikzpicture}
        \pic[scale=0.8]{sp-sp-graph};
    \end{tikzpicture}
    \caption{A visual representation of the modified Feynman-Kitaev construction.
Each gate in the verification sequence is applied to a row of qubits in succession.
After each round of gate applications, a \Gate{Swap} gate sequence is applied between rows of qubits from right to left.
The time cursor is shown in green.
The small circles represent qubits.
The dashed boxes represent one of the non-trivial gates in the verification sequence.
Qubits that have no dashed box are assumed to be acted on trivially.}
    \label{fig:swap_sequence}
\end{figure}
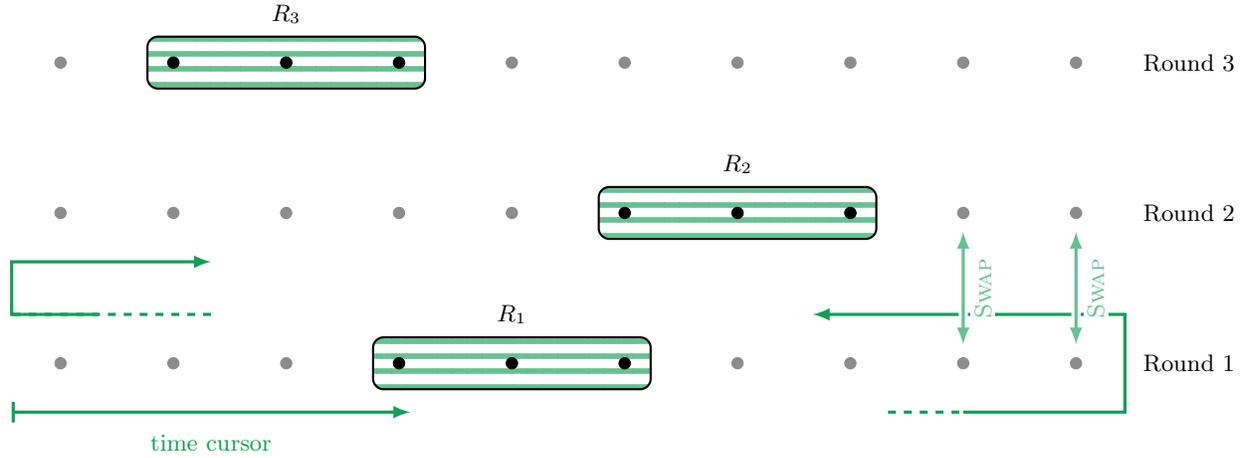

\newpage
\begin{definition}[Spatially Sparse Graph~\cite{OT08}]
    A spatially sparse graph is defined such that
    \begin{enumerate}[(i)]
        \item every vertex participates in $O(1)$ edges,
        \item there is a straight-line drawing in the plane such that every edge overlaps with $O(1)$ other edges, and the length of every edge is $O(1)$.
    \end{enumerate}
\end{definition}

Assuming that the original circuit \cl{StoqMA} circuit has been mapped to one comprised of only nearest-neighbour gates, we can now apply the spatially sparse construction.
The pseudocode in Alg.~\ref{alg:sp_sp_circuit} demonstrates how the spatially sparse construction operates.
Additionally, \cref{eq:spatially_sparse_gate_sequence} shows the modification to the gate sequence.
Introducing \Gate{Swap} gates modifies the original sequence of $T$ gates to a sequence of $T + M(T-1)$ gates.
Label the $M$ qubits in row-$j$ from left to right as $j_1, j_2, \dots, j_M$.
The new gate sequence follows
\begin{equation}\label{eq:spatially_sparse_gate_sequence}
    R_1R_2\dots R_T ~~\mapsto~~ R_1\;\left(\prod_{j=2}^T \big(\prod_{q=M}^1 \Gate{Swap}_{{j-1}_q, j_q} \big)\;R_j\right).
\end{equation}
Notice from previous arguments that \cref{eq:spatially_sparse_gate_sequence} is still a valid gate sequence even for the stoquastic verifier.
To do this mapping, we need additional $\Theta(T\cdot M)$ gates and $\Theta(T)$ ancillae qubits.

\begin{algorithm}[!ht]
    \SetKwData{Left}{left}\SetKwData{This}{this}\SetKwData{Up}{up}
    \SetKwFunction{Union}{Union}\SetKwFunction{FindCompress}{FindCompress}
    \SetKwInOut{Input}{input}\SetKwInOut{Output}{output}
    \Input{$T$ rows of $M$ qubits labelled (from left to right) as $j_1, \dots, j_M$ where $j\in[T]$, and a verification circuit of $T$ gates labelled $R_1,\dots, R_T$}
    \BlankLine
    \Output{Executed circuit $S_{|x|}$ on the $TM$ qubit register}
    \BlankLine
    \BlankLine
    Apply the gate $R_T$ to the appropriate qubits in row-$T$\\
    Apply the Identity gate to the remaining qubits in row-$T$\\
    \For{$j = 2$ \KwTo $j = T$}{
        Apply the gate $R_j$ to the appropriate qubits in row-$j$\\
        Apply the Identity gate to the remaining qubits in row-$j$\\
        Let the qubits in row-$j$ be labelled $j_1, \dots, j_n$ from left to right\\
        \For{$i=M$ \KwTo $i=1$}{
            Apply a \Gate{Swap} gate between $(j-1)_i$ and $j_i$
        }
    }
    \caption{Spatially Sparse Circuit}\label{alg:sp_sp_circuit}
\end{algorithm}\DecMargin{1em}

\begin{restatable}{corollary}{corgenericStoqMAtospsp}
    \label{cor:generic-StoqMAtospsp}
    Given any $\cl{StoqMA}(\alpha,\beta)$ circuit with $T$ gates on $M$ qubits, there exists a spatially sparse $\cl{StoqMA}(\alpha,\beta)$ circuit with $\Theta(T\cdot M) + \Theta(T\cdot M^2)$ gates on $T + \Theta(T\cdot M)$ qubits.
\end{restatable}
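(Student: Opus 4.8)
The plan is to prove this by composing the two reductions already assembled in this section and then doing straightforward resource bookkeeping. Concretely, given a long-range $\cl{StoqMA}(\alpha,\beta)$ circuit with $T$ gates on $M$ qubits, I would first invoke \cref{cor:generic-StoqMA} to obtain an equivalent nearest-neighbour $\cl{StoqMA}(\alpha,\beta)$ circuit $S'_n$ with $\Theta(T\cdot M)$ gates on the same $M$ qubits and the same completeness/soundness parameters. I would then feed $S'_n$ into the spatially sparse construction of \cref{alg:sp_sp_circuit}, realised at the gate level by the rewriting in \cref{eq:spatially_sparse_gate_sequence}, to produce the final circuit $\hat S_n$. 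Since both maps are independently shown to preserve the pair $(\alpha,\beta)$, their composition does too, which is really the only thing that needs care.

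The resource counts then follow by direct accounting. Writing $g = \Theta(T\cdot M)$ for the gate count of the nearest-neighbour circuit $S'_n$, the spatially sparse layout uses one row of $M$ qubits per gate of $S'_n$, so it is laid out over $g$ rounds; by \cref{eq:spatially_sparse_gate_sequence} each of the $g-1$ row-to-row transitions contributes $M$ \Gate{Swap} gates. Hence the total gate count splits as the $g = \Theta(T\cdot M)$ relocated circuit gates plus the $M(g-1) = \Theta(T\cdot M^2)$ inter-row \Gate{Swap} gates, giving $\Theta(T\cdot M) + \Theta(T\cdot M^2)$ as claimed, while the ancilla rows introduced by the construction account for the stated qubit overhead. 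I would also check that $\hat S_n$ is still a legitimate stoquastic verification circuit: every \Gate{Swap} is three \Gate{Cnot}s, so the gate set $\{X,\Gate{Cnot},\Gate{Toffoli}\}$ is respected; the new rows are initialised as $\ket{0}$/$\ket{+}$ ancillae; and, appealing to \cref{rmk:Merlins_message}, the number of $\ket{+}$-ancillae can be kept even, with the output qubit tracked through the \Gate{Swap} network to its final location where it is measured in the $X$-basis.

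The main obstacle is not the arithmetic but justifying exact invariance of the acceptance probability under the spatially sparse rewriting, i.e.\ that replicating the register across rows and shuttling the state with \Gate{Swap} networks reproduces $\bra{\phi}U^\dagger\Pi_{\text{out}}U\ket{\phi}$ on the nose rather than only approximately. This is precisely what must be established in \cref{app:statistics}; the key point is that, because amplification is unavailable for \cl{StoqMA}, an approximate preservation of $(\alpha,\beta)$ would be useless, so the composition argument genuinely relies on both constituent maps being statistics-exact. Once that invariance is in hand, the corollary is immediate from chaining \cref{cor:generic-StoqMA} with \cref{alg:sp_sp_circuit} and collecting the two gate-count contributions.
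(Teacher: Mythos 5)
Your overall route is the same as the paper's: chain \cref{cor:generic-StoqMA} with the rewriting of \cref{eq:spatially_sparse_gate_sequence} (Alg.~\ref{alg:sp_sp_circuit}), and defer exact preservation of $(\alpha,\beta)$ to \cref{app:statistics}. Your gate count is derived correctly and for the right reason: with $g=\Theta(T\cdot M)$ nearest-neighbour gates feeding the row construction, the $g$ relocated gates contribute $\Theta(T\cdot M)$ and the $M(g-1)$ inter-row \Gate{Swap}s contribute $\Theta(T\cdot M^2)$, which is exactly the split in the statement. Your points about the gate set (\Gate{Swap} as three \Gate{Cnot}s), the initialisation of the new rows, and the need for \emph{exact} rather than approximate statistics preservation (because \cl{StoqMA} lacks amplification) all track the paper.

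The genuine gap is the qubit count, which you dispatch with ``the ancilla rows \dots account for the stated qubit overhead'' without doing the arithmetic. Under your own setup the construction uses one row of $M$ qubits per gate of $S'_n$, i.e.\ $g\cdot M = \Theta(T\cdot M^2)$ qubits, not the $T+\Theta(T\cdot M)$ claimed in the corollary. Note that in this construction both the gate count and the qubit count scale as $\Theta(R\cdot M)$ for $R$ rows, so a circuit with $\Theta(T\cdot M^2)$ gates cannot simultaneously live on only $\Theta(T\cdot M)$ qubits; the two resource counts in the statement cannot both come out of the composition you (correctly) describe. You should either carry out the count honestly and report $\Theta(T\cdot M^2)$ qubits, or note explicitly that the stated qubit bound corresponds to applying the row construction to the original $T$-gate circuit (which would instead give only $\Theta(T\cdot M)$ gates). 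As it stands, the sentence papering over the qubit bookkeeping is the one step of the proposal that does not actually follow, and it conceals a real tension in the statement being proved.
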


Similar to the previous mapping, the circuit statistics are preserved under this modification (\cref{app:statistics}).

\begin{remark}
    Let $n$ be the number of input $\ket{x}$ bits, then the input register size is $M = m(n)$, and the number of gates is $T = t(n)$, for some polynomials $m$ and $t$.
    The total number of gates in the spatially sparse circuit is $S = O(1)\cdot t(n) \cdot m(n)^2 = s(n)$ and the total number of qubits in the spatially sparse circuit is $P = O(1)\cdot m(n)\cdot t(n) = p(n)$, where $s(n)$ and $p(n)$ are polynomials.
\end{remark}

Note that in the subsequent Feynman-Kitaev construction, for each \Gate{Swap} gate, there would be an additional three clock qubits to mediate the equivalent \Gate{Cnot} decomposition; since this does not change the complexity or the proof for that matter, we will ignore this detail.
We can therefore combine the circuit-to-Hamiltonian construction for the $6$-local stoquastic Hamiltonian from \refcite{BDOT06} (\cref{thrm:ma-hard_6-local_stoq_ham}) with the spatially sparse construction of \refcite{OT08} to prove the following theorem.

\begin{theorem}
    \label{thrm:ma-hard-6-local-stoq-ham-sp-sp-graph}
    The \sc{$6$-Local Stoquastic Hamiltonian} problem on a spatially sparse graph is \clw{MA}{hard}.
\end{theorem}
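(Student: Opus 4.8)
The plan is to combine the spatially sparse circuit construction with the Feynman--Kitaev circuit-to-Hamiltonian map of \cref{thrm:ma-hard_6-local_stoq_ham}, and then verify that the resulting interaction graph meets the two conditions defining a spatially sparse graph. Since $\cl{MA}\textsubscript{q} = \cl{MA}$, I would begin with an arbitrary $\cl{MA}\textsubscript{q}$ verification circuit $F_{|x|}$ and apply the $\cl{MA}\textsubscript{q}$ analogue of \cref{cor:generic-StoqMAtospsp} (the same reasoning applies to semi-classical circuits). This yields a spatially sparse $\cl{MA}\textsubscript{q}$ circuit with $\hat{T} = \Theta(TM^2)$ nearest-neighbour gates on $\Theta(TM)$ qubits arranged in the $T\times M$ grid of \cref{fig:swap_sequence}, while preserving the completeness and soundness parameters (\cref{app:statistics}).

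I would then apply the construction of \cref{thrm:ma-hard_6-local_stoq_ham} verbatim to this circuit, introducing one clock qubit per gate together with the terms $H_{\text{in}}, H_{\text{out}}, H_{\text{clock}}, H_{\text{prop}}$. Two properties carry over for free. First, each term has exactly the same algebraic form as before, so the stoquasticity argument of \cref{thrm:ma-hard_6-local_stoq_ham} applies unchanged: the identities $\ketbra{-} = \frac{1}{2}(\mathbb{I}-X)$ and $\ketbra{1} = \frac{1}{2}(\mathbb{I}+Z)$ make $H_{\text{in}}, H_{\text{out}}, H_{\text{clock}}$ diagonal or manifestly stoquastic, while the $R_t\otimes(\cdots)$ pieces of $H_{\text{prop}}(t)$ carry strictly positive off-diagonal entries. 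Second, the spectral analysis of \refcite{KSV02} depends only on the circuit's acceptance probability and not on its geometry, so the promise gap (eigenvalue $\leq \epsilon/(\hat{T}+1)$ in the \textsc{yes} case versus $\geq c(1-\sqrt{\epsilon})/\hat{T}^3$ in the \textsc{no} case) is inherited, with $T$ replaced by $\hat{T} = \poly{n}$.

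The substance of the proof is therefore purely geometric: I must exhibit a straight-line drawing of the combined interaction graph in which every vertex has $O(1)$ degree and every edge has $O(1)$ length and $O(1)$ crossings. The system qubits already have bounded degree, since the spatially sparse circuit guarantees each is touched by $O(1)$ gates and hence appears in $O(1)$ propagation terms. The key step is the placement of the clock register: I would lay the clock qubits out along the snake-shaped time cursor of \cref{fig:swap_sequence}, positioning the clock qubit for gate $t$ adjacent to the $O(1)$ system qubits on which $R_t$ acts. Because consecutive operations in the sequence of \cref{eq:spatially_sparse_gate_sequence} are geometrically adjacent in the grid, the chain edges of $H_{\text{clock}}$ and the three-clock couplings appearing in $H_{\text{prop}}(t)$ all acquire bounded length and only $O(1)$ local overlap.

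The main obstacle I anticipate is controlling the degree created by $H_{\text{in}}$: the naive term couples the single clock qubit $c_1$ to all $\Theta(M)$ qubits initialised on the first row, which would violate the bounded-degree condition. I would resolve this exactly as the phrase ``special care with how the clock qubits are oriented'' suggests --- by distributing the initialisation penalties along the first leg of the time cursor, so that each input, ancilla, or coin qubit is checked against a distinct, geometrically adjacent clock qubit rather than a shared $c_1$. Verifying that this redistribution leaves the clock's domain-wall structure (and hence the propagation dynamics and spectral bounds) intact, while keeping every clock qubit at $O(1)$ degree, is the delicate part; the output penalty $H_{\text{out}}$, being a single $2$-local term, needs no such care. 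Once the drawing is in hand, the two spatial-sparseness conditions hold by construction and the theorem follows.
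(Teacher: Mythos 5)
Your proposal is correct and follows essentially the same route as the paper: compose the spatially sparse circuit mapping with the Feynman--Kitaev construction of \cref{thrm:ma-hard_6-local_stoq_ham}, lay the clock register along the snake-shaped time cursor, and --- the one genuinely new ingredient, which you correctly isolate --- replace the shared-$c_1$ initialisation penalty by per-qubit checks $\ketbra{\cdot}_j\otimes\ketbra{100}_{c_{t_j-1},c_{t_j},c_{t_j+1}}$ distributed along the clock chain, exactly as in the paper's definition of $H_{\text{in}}$ over the sets $Q_x, Q_0, Q_+$. The stoquasticity check (including the \Gate{Swap} propagation terms, whose off-diagonal blocks are non-negative permutations) and the appeal to the Oliveira--Terhal spectral bounds likewise match the paper's argument.
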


\begin{proof}
    We employ the Feynman-Kitaev circuit-to-Hamiltonian construction to prove the problem is hard for the class \cl{MA}.
    Let $F_{|x|}$ be Arthur's semi-classical verification circuit.
    Recall \cl{MA}\textsubscript{q} = \cl{MA}.
    Let the input to the circuit be an $N \coloneqq n + w+ m + p$ qubit register comprised of four parts: the input state $\ket{x}$ of $n$ qubits, the proof state $\ket{\xi}$ of $w$ qubits, the \textit{ancilla} register of $m$ qubits initialised to $\ket{0}$ and the \textit{coin} register of $p$ qubits initialised to $\ket{+}$.
    Let $F_{|x|}$ comprise a sequence of $T$ nearest-neighbour Toffoli, and \textsc{Swap} gates denoted as $R_T, \dots, R_1$.

    Define a Hamiltonian $H = H_{\text{in}} + H_{\text{out}} + H_{\text{prop}} + H_{\text{clock}}$ acting on a register comprised of $T$ rows of $N$ qubits and $\hat{T}=(2T-1)N$ clock qubits labelled $c_1, \dots, c_{\hat{T}}$.
    There is one clock qubit for each operation in the gate sequence.
    Let $\tilde{F}_{|x|}$ represent a modified version of $F_{|x|}$ according to \cref{eq:spatially_sparse_gate_sequence}.
    $\tilde{F}_{|x|}$ is denoted as $\tilde{R}_{\hat{T}},\dots, \tilde{R}_1$.
    Let the output measured qubit be denoted $q$ where $q=TN$, i.e., the rightmost qubit on the final row.
    Arthur can only measure in the $Z$-basis.
    A given qubit, $l$, is acted on by circuit gates in two intervals: \begin{inparaenum}[(i)] \item By $R_j$ or the Identity gate, \item by the \Gate{Swap} gate.
    \end{inparaenum}Let $Q_x$ be the set of qubits that contain $\ket{x}$.
    Separate the first row of qubits into three columns respective of the input to the circuit.
    Let the column where the $\ket{+}$-\textit{ancilla} lie all be initialised to $\ket{+}$, denote this set of $Tp$ qubits as $Q_+$.
    Let the column where the $\ket{0}$-\textit{ancilla} lie all be initialised to $\ket{0}$ and all other qubits in rows $>1$ for the \textit{proof} column be also initialised to $\ket{0}$; this is a set of $Tm + (T-1)(n+w)$ qubits denoted as $Q_0$.
    Note that $\abs{Q_x \cup Q_+ \cup Q_0} = n + Tp + Tm + (T-1)(n+w) = TN - w$.
    
    Each Hamiltonian term is defined as a penalising Hamiltonian and must be stoquastic.
    \begin{align*}
        H_{\text{in}} &= \left(\sum_{j\in Q_x} \ketbra{\bar{x}_j}_j
        + \sum_{j \in Q_0} \ketbra{1}_{\textit{anc}, j} 
        + \sum_{j \in Q_+} \ketbra{-}_{\textit{coin}, j}\right)\otimes \ketbra{100}_{c_{t_{j}-1},c_{t_j},c_{t_{j}+1}}, \notag\\
        H_{\text{out}} &= \ketbra{0}_q \otimes \ketbra{1}_{c_{\hat{T}}},\notag\\
        H_{\text{clock}} &= \sum_{t=1}^{\hat{T}-1} \ketbra{01}_{c_{t},c_{t+1}},\notag\\
        H_{\text{prop}} &= \sum_{t=1}^{\hat{T}} H_{\text{prop}}(t).
    \end{align*}

    The Hamiltonian terms $H_{\text{out}}$ and $H_{\text{clock}}$ are left unchanged from \cref{eq:6LH_hard_proof_H-out} and \cref{eq:6LH_hard_proof_H-clock} respectively.
    The term $H_{\text{in}}$ now involves extra clock qubit checks.
    Following the arguments of \refcite{OT08}, the role of $H_{\text{in}}$ is to make sure that the state of the input qubits are appropriately set before the gates act on the qubits.
    The form of the propagation Hamiltonian terms are also unchanged:
    \begin{align*}
        H_{\text{prop}}(1) &= \ketbra{00}_{c_1,c_2} + \ketbra{10}_{c_1,c_2} - \tilde{R}_1\otimes(\ketbra{10}{00}_{c_1,c_2} + \ketbra{00}{10}_{c_1,c_2}), \\
        H_{\text{prop}}(t) &= \ketbra{100}_{c_{t-1},c_t,c_{t+1}} + \ketbra{110}_{c_{t-1},c_t,c_{t+1}} \notag\\
                            &\qquad - \tilde{R}_t\otimes(\ketbra{110}{100}_{c_{t-1},c_t,c_{t+1}} + \ketbra{100}{110}_{c_{t-1},c_t,c_{t+1}}), \quad 1<t<\hat{T}\\
        H_{\text{prop}}(\hat{T}) &= \ketbra{10}_{c_{\hat{T}-1},c_{\hat{T}}} + \ketbra{11}_{c_{\hat{T}-1},c_{\hat{T}}} - \tilde{R}_{\hat{T}}\otimes(\ketbra{11}{10}_{c_{\hat{T}-1},c_{\hat{T}}} + \ketbra{10}{11}_{c_{\hat{T}-1},c_{\hat{T}}}).
    \end{align*}

    Finally, the spatially sparse interaction graph occurs from the above construction and the format of \cref{fig:swap_sequence}.
    The snake-like time arrow over the qubits in the rows represents a string of clock qubits following the gate sequence seen in \cref{eq:spatially_sparse_gate_sequence}.
    Each Hamiltonian term above only acts in a local neighbourhood about each qubit.
    Moreover, each qubit only interacts with a set of qubits in its neighbourhood.
    Therefore, the interaction graph is spatially sparse.

    Additionally, we know each Hamiltonian term is stoquastic.
    Only $H_{\text{in}}$ and $H_{\text{prop}}(t)$ (for $\tilde{R}_t = \Gate{Swap}$) need to be proven to be stoquastic since the remaining terms are known to be stoquastic from \cref{thrm:ma-hard_6-local_stoq_ham}.
    The clock terms of $H_{\text{in}}$ are diagonal hence $H_{\text{in}}$ is stoquastic.
    The terms $\tilde{R}_t\otimes(\dots)$ in $H_{\text{prop}}(t)$ will have off-diagonal elements that are strictly positive.
    Therefore, each $H_{\text{prop}}(t)$ term is stoquastic even if $\tilde{R}_t = \Gate{Swap}$.
    To conclude, in the \textsc{yes} case, if Arthur's circuit accepts with probability at least $1-\epsilon$ then there exists a proof state such that the Hamiltonian $H$ has eigenvalues at most $\epsilon/(\hat{T}+1)$ and in the \textsc{no} case, having Arthur reject with probability at most $\epsilon$, all eigenvalues are at least $c(1-\epsilon -\sqrt{\epsilon})/\hat{T}^3$ for some constant $c$~\cite[Lemma 1]{OT08}.
\end{proof}

\subsection{StoqMA-hardness of the Local Stoquastic Hamiltonian Problem}
We will now revisit the \clw{StoqMA}{hardness} proof for $6$-local stoquastic Hamiltonians as presented in \refcite{BBT06}.
This proof employs a modified version of the Feynman-Kitaev circuit-to-Hamiltonian construction.
The approach involves defining a $6$-local stoquastic penalising Hamiltonian, similar to the method used in \cref{thrm:ma-hard_6-local_stoq_ham}, and then adding a measurement term perturbatively to ensure the eigenvalue promises are met.
Note that we do not cover the proof that the \sc{$k$-Local Stoquastic Hamiltonian} problem is in \cl{StoqMA}; for further details on this, refer to \refcite{BBT06}.
From this point forward, we assume this fact is established.

\begin{theorem}[\cite{BBT06}]\label{thrm:stoqma-hard-6-local-stoq-ham}
    The \sc{$6$-Local Stoquastic Hamiltonian} problem is \clw{StoqMA}{complete}.
\end{theorem}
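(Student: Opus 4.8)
The plan is to establish the two directions separately. Containment in \cl{StoqMA} is inherited from \refcite{BBT06} and we take it as given, so the work is entirely in the hardness direction. I would reduce an arbitrary \cl{StoqMA}$(\alpha,\beta)$ verification circuit $S_{|x|}$ to a $6$-local stoquastic Hamiltonian by reusing, essentially verbatim, the penalising Hamiltonian $H_0 = H_{\text{in}} + H_{\text{prop}} + H_{\text{clock}}$ from the proof of \cref{thrm:ma-hard_6-local_stoq_ham}. This part carries over unchanged: the verification gates are still permutations drawn from $\{X,\Gate{Cnot},\Gate{Toffoli}\}$, the same $6$-local stoquastic terms penalise states that are not valid history states, the ground energy of $H_0$ is $0$, the ground space is spanned by history states $\ket{\eta_\xi} = \frac{1}{\sqrt{T+1}}\sum_t U_t\cdots U_1\ket{\phi}\otimes\ket{t}_{\text{clock}}$ for admissible proofs $\ket{\xi}$, and $H_0$ has a spectral gap $\Delta = \Omega(1/T^2)$ by the standard Kitaev analysis.

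The essential new ingredient, and the reason the construction differs from the \cl{MA} case, is the output term. Because Arthur measures the output qubit in the $X$-basis, the correct penalty for a rejecting run is the projector onto $\ket{-}$ at the final clock time, $H_{\text{out}} = \ketbra{-}_q \otimes \ketbra{1}_{c_T}$. I would first check this term is stoquastic and $2$-local: since $\ketbra{-} = \tfrac{1}{2}(\mathbb{I}-X)$ has off-diagonal entry $-\tfrac12 \le 0$ and the clock factor is diagonal, $H_{\text{out}}$ has non-positive off-diagonal elements, so the full Hamiltonian $H = H_0 + \delta\,H_{\text{out}}$ stays stoquastic for any coupling $\delta>0$ and remains $6$-local. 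On the history space, $\bra{\eta_\xi}H_{\text{out}}\ket{\eta_\xi} = \frac{1}{T+1}\left(1 - \Pr[S_{|x|}(x,\ket{\xi})]\right)$, so to leading order the output term records exactly the rejection probability.

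I would then analyse $\lambda_0(H)$ perturbatively, treating $\delta H_{\text{out}}$ as a perturbation of $H_0$ (via the Schrieffer-Wolff transformation, or equivalently Kitaev's projection lemma). To first order the effective Hamiltonian on the history space has least eigenvalue $\frac{\delta}{T+1}\min_\xi\left(1-\Pr[S_{|x|}(x,\ket{\xi})]\right)$, which is at most $\frac{\delta(1-\alpha)}{T+1}$ in the \textsc{yes} case (take the optimal proof) and at least $\frac{\delta(1-\beta)}{T+1}$ in the \textsc{no} case (no proof accepts with probability above $\beta$). Setting $a$ and $b$ to these two values, the first-order separation is $b-a \ge \frac{\delta(\alpha-\beta)}{T+1} = \Omega(1/\poly{|x|})$.

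The main obstacle is controlling the perturbative corrections. Unlike the \cl{MA} case, \cl{StoqMA} does not admit completeness/soundness amplification, so the promise gap $\alpha-\beta$ is only $1/\poly{|x|}$, and the second-order correction, of order $\delta^2\norm{H_{\text{out}}}^2/\Delta = O(\delta^2 T^2)$, threatens to swamp this signal. The fix is to choose $\delta = \Theta(1/\poly{|x|})$ small enough that $\delta^2/\Delta \ll \delta(\alpha-\beta)/(T+1)$ — concretely $\delta \ll \Delta(\alpha-\beta)/(T+1)$ — while keeping all of $\delta$, $\Delta$, $\alpha-\beta$ and $1/(T+1)$ polynomially bounded. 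With this choice the higher-order terms are $o(b-a)$, the two energy thresholds remain separated by $\Omega(1/\poly{|x|})$, and since $H$ is $6$-local and stoquastic by construction the reduction is complete. The delicate book-keeping of the perturbation series against the un-amplifiable $1/\poly$ gap is exactly where the care lies.
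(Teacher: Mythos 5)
Your proposal is correct and follows essentially the same route as the paper: the same penalising Hamiltonian from the \cl{MA}-hardness construction, the same $X$-basis output term $H_{\text{out}} = \ketbra{-}_q\otimes\ketbra{1}_{c_T}$ added with a small coupling $\delta \ll \Delta$, and the same first-order perturbative identification of the ground energy with $\delta(1-\Pr[\text{accept}])/(T+1)$. Your explicit condition $\delta \ll \Delta(\alpha-\beta)/(T+1)$ for suppressing the $O(\delta^2)$ correction below the un-amplifiable promise gap is in fact slightly more careful than the paper's ``for sufficiently small $\delta$'' (and the minor discrepancy between your $\Delta = \Omega(T^{-2})$ and the paper's $\Omega(T^{-3})$ is immaterial), but the argument is the same.
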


\begin{proof}
    Let $S_{|x|}$ be Arthur's stoquastic verification circuit.
    Let the input to the circuit be an $M= n+w+m+p$ qubit register comprised of four parts: the input state $\ket{x}$ of $n$ qubits, the proof state $\ket{\xi}$ of $w$ qubits, the \textit{ancilla} register of $m$ qubits initialised to $\ket{0}$ and the \textit{coin} register of $p$ qubits initialised to $\ket{+}$.
    Let $S_{|x|}$ be comprised of a sequence of $T$ gates, $R_T,\dots, R_1$, from the set $\{X, \Gate{Cnot}, \Gate{Toffoli}\}$.
    Without loss of generality, we can assume all gates in the circuit are Toffoli gates.

    Define a Hamiltonian $H = H_{\text{in}} + H_{\text{prop}} + H_{\text{clock}}$ acting on a register of $T$ clock qubits labelled as $c_1, \dots, c_{T}$ and the $M$ qubit register.
    Let the output measured qubit be denoted $q$; for this instance, Arthur can measure using only the $X$-basis.
    Each Hamiltonian term is defined as a penalising Hamiltonian and must be stoquastic.
    Define the following history state
    \begin{equation*}
        \ket{\eta(x,\xi)} = \frac{1}{\sqrt{T+1}}\sum_{t=0}^{T} R_t\dots R_0 \ket{x,\xi,0^m,+^p}\ket{1^{t}0^{T-t}}.
    \end{equation*}
    We therefore have,
    \begin{align*}
        H_{\text{in}} &= \left(\sum_{i=1}^{n} \ketbra{\bar{x}_i}_i + \sum_{i=1}^{m} \ketbra{1}_i + \sum_{i=1}^{p} \ketbra{-}_i\right)\otimes \ketbra{0}_{c_1}, \\
        H_{\text{clock}} &= \sum_{t=1}^{T-1} \ketbra{01}_{c_{t},c_{t+1}}, \\
        H_{\text{prop}} &= \sum_{t=1}^{T} H_{\text{prop}}(t).
    \end{align*}
    We define the propagation Hamiltonian terms in the following way:
    \begin{align*}
        H_{\text{prop}}(1) &= \ketbra{00}_{c_1,c_2} + \ketbra{10}_{c_1,c_2} - R_1\otimes(\ketbra{10}{00}_{c_1,c_2} + \ketbra{00}{10}_{c_1,c_2}), \\
        H_{\text{prop}}(t) &= \ketbra{100}_{c_{t-1},c_t,c_{t+1}} + \ketbra{110}_{c_{t-1},c_t,c_{t+1}} \notag\\
                            &\qquad - R_t\otimes(\ketbra{110}{100}_{c_{t-1},c_t,c_{t+1}} + \ketbra{100}{110}_{c_{t-1},c_t,c_{t+1}}), \quad 1<t<T \\
        H_{\text{prop}}(T) &= \ketbra{10}_{c_{T-1},c_T} + \ketbra{11}_{c_{T-1},c_T} - R_T\otimes(\ketbra{11}{10}_{c_{T-1},c_T} + \ketbra{10}{11}_{c_{T-1},c_T}).
    \end{align*}
    The history state defined above is a zero-energy eigenstate of $H_{\text{in}} + H_{\text{clock}} + H_{\text{prop}}$.
    It is known that the spectral gap of this Hamiltonian is $\Delta = \Omega(T^{-3})$~\cite[Lemma 5]{BBT06}.
    Furthermore, it is trivial to check that the above Hamiltonian terms are stoquastic and that $H$ is $6$-local.
        
    The next term to define is the output Hamiltonian, 
    \begin{equation*}
        H_{\text{out}} = \ketbra{-}_q \otimes \ketbra{1}_{c_{T}}.
    \end{equation*}
    Clearly $H_{\text{out}}$ is stoquastic.
    We define a new Hamiltonian $H' = H + \delta H_{\text{out}}$ where $0 < \delta \ll \Delta$.
    We treat $\delta H_{\text{out}}$ as a perturbation term.
    This gives the eigenvalue as 
    \begin{equation*}
        \lambda(H') = \delta \min_{\ket{\xi}} \bra{\xi} H_{\text{out}} \ket{\xi} + O(\delta^2).
    \end{equation*}
    For a sufficiently small $\delta$ the second-order terms can be ignored.
    To conclude, we leverage arguments analogous to \refcite{KSV02} to show in the \textsc{yes} case, there exists a proof state such that the Hamiltonian $H$ has eigenvalues at most $\delta(1-\alpha)/(T+1)$.
    In the \textsc{no} case, all eigenvalues are at least $\delta(1-\beta)/(T+1)$.
    The perturbation analogous applies for $\delta \ll \Delta$ hence let $\delta \ll T^{-3}$.
\end{proof}

As we shall discuss in the next section, the degree of locality has no influence on the complexity for $k\geq 2$~\cite[Theorem 8]{BDOT06}.
We will now show that a simple extension to the above proof shows that $6$-local stoquastic Hamiltonians are \clw{StoqMA}{complete} on a spatially sparse graph.
This proof is similar to the proof of \cref{thrm:ma-hard-6-local-stoq-ham-sp-sp-graph}.

\thrmsixLSHPspsp

\begin{proof}
    Let $S_{|x|}$ be Arthur's stoquastic verification circuit.
    Let the input to the circuit be an $M= n+w+m+p$ qubit register comprised of four parts: the input state $\ket{x}$ of $n$ qubits, the proof state $\ket{\xi}$ of $w$ qubits, the \textit{ancilla} register of $m$ qubits initialised to $\ket{0}$ and the \textit{coin} register of $p$ qubits initialised to $\ket{+}$.
    Let $S_{|x|}$ be comprised of a sequence of $T$ nearest-neighbour gates, $R_T,\dots, R_1$, from the set $\{X, \Gate{Cnot}, \Gate{Toffoli}\}$.
    The gates are either Toffoli gates or \Gate{Swap} gates (three \Gate{Cnot} gates).

    Define a Hamiltonian $H = H_{\text{in}} + H_{\text{prop}} + H_{\text{clock}}$ acting on a register comprised of $T$ rows of $M$ qubits and $\hat{T}=(2T-1)M + 2$ clock qubits labelled $c_0, \dots, c_{\hat{T}+1}$.
    Note that the $0$-th and $(\hat{T}+1)$-th clock qubits are always set to $1$ and $0$, respectively.
    There is one clock qubit for each operation in the gate sequence.
    Let $\hat{S}_{|x|}$ represent a modified version of $S_{|x|}$ according to \cref{eq:spatially_sparse_gate_sequence}.
    $\hat{S}_{|x|}$ is denoted as $\hat{R}_{\hat{T}} \dots \hat{R}_1$.
    Let the output measured qubit be denoted $q$ where $q=TM$, i.e., the rightmost qubit on the final row.
    Arthur can only measure in the $X$-basis.
    A given qubit, $l$ is acted on by circuit gates in two intervals: \begin{inparaenum}[(i)] \item By $\hat{R}_j$ or the Identity gate, \item by the \Gate{Swap} gate.
    \end{inparaenum} Let $Q_x$ be the set of qubits that contain $\ket{x}$.
    Separate the first row of qubits into four columns respective of the input to the circuit.
    Let the column where the $\ket{+}$-\textit{ancilla} lie all be initialised to $\ket{+}$, denote this set of $Tp$ qubits as $Q_+$.
    Let the column where the $\ket{0}$-\textit{ancilla} lie all be initialised to $\ket{0}$ and all other qubits in rows $>1$ for the \textit{proof} column be also initialised to $\ket{0}$; this is a set of $Tm + (T-1)(n+w)$ qubits denoted as $Q_0$.
    Note that $\abs{Q_x \cup Q_+ \cup Q_0} = n + Tp + Tm + (T-1)(n+w) = TM - w$.
    
    Each Hamiltonian term is defined as a penalising Hamiltonian and must be stoquastic.
    Define the following history state
    \begin{equation*}
        \ket{\eta(x,\xi)} = \frac{1}{\sqrt{\hat{T}+1}}\sum_{t=0}^{\hat{T}} \hat{R}_t\dots \hat{R}_0 \ket{x,\xi,0^m,+^p}\ket{0^{(T-1)(n+w)},+^{(T-1)p}}\ket{1^{t}0^{\hat{T}-t}}.
    \end{equation*}
    We also have,
    \begin{align*}
        H_{\text{in}} &= \left(\sum_{j\in Q_x} \ketbra{\bar{x}_j}_j
        + \sum_{j \in Q_0} \ketbra{1}_{\textit{anc}, j} 
        + \sum_{j \in Q_+} \ketbra{-}_{\textit{coin}, j}\right)\otimes \ketbra{100}_{c_{t_{j}-1},c_{t_j},c_{t_{j}+1}}, \notag\\
        H_{\text{clock}} &= \sum_{t=1}^{\hat{T}-1} \ketbra{01}_{c_{t},c_{t-1}},  \notag\\
        H_{\text{prop}} &= \sum_{t=1}^{\hat{T}} H_{\text{prop}}(t).\notag
    \end{align*}

    The term $H_{\text{in}}$ now involves extra clock qubit checks.
    Following the arguments of \refcite{OT08}, the role of $H_{\text{in}}$ is to make sure that the state of the input qubits are appropriately set before the gates act on the qubits.
    By demanding that all additional qubits be initialised to either $\ket{0}$ or $\ket{+}$ (depending on their column), we can see that $H_{\text{in}}$ acts as expected.
    A short calculation shows that $H_{\text{in}}\ket{\eta} = 0$~\cite{OT08},
    \begin{align*}
        H_{\text{in}}^{(x)}\ket{\eta} &\propto \sum_{j\in Q_x} \ketbra{\bar{x}_j}_j\ket{x,\xi,0^m,+^p}\ket{0^{(T-1)(n+w)},+^{(T-1)p}} \ket{100\dots 0} = 0,\\
        H_{\text{in}}^{(0)}\ket{\eta} &\propto \sum_{j\in Q_0} \ketbra{1}_{\textit{anc},j} \ket{x,\xi,0^m,+^p}\ket{0^{(T-1)(n+w)},+^{(T-1)p}} \ket{100\dots 0} = 0,\\
        H_{\text{in}}^{(+)}\ket{\eta} &\propto \sum_{j\in Q_+} \ketbra{-}_{\textit{coin},j} \ket{x,\xi,0^m,+^p}\ket{0^{(T-1)(n+w)},+^{(T-1)p}} \ket{100\dots 0} = 0.
    \end{align*}
    Let $\ket{\xi_t}  = \hat{R}_t \ket{\xi_{t-1}}$ where $\ket{\xi_0} = \ket{x,\xi,0^m,+^p}\ket{0^{(T-1)(m+w)},+^{(T-1)p}}$.
    The form of the propagation Hamiltonian terms are also unchanged; hence
    \begin{align*}
        H_{\text{prop}}(1) &= \ketbra{00}_{c_1,c_2} + \ketbra{10}_{c_1,c_2} - \hat{R}_1\otimes(\ketbra{10}{00}_{c_1,c_2} + \ketbra{00}{10}_{c_1,c_2}), \\
        H_{\text{prop}}(t) &= \ketbra{100}_{c_{t-1},c_t,c_{t+1}} + \ketbra{110}_{c_{t-1},c_t,c_{t+1}} \notag\\
                            &\qquad - \hat{R}_t\otimes(\ketbra{110}{100}_{c_{t-1},c_t,c_{t+1}} + \ketbra{100}{110}_{c_{t-1},c_t,c_{t+1}}), \quad 1<t<\hat{T} \\
        H_{\text{prop}}(\hat{T}) &= \ketbra{10}_{c_{\hat{T}-1},c_{\hat{T}}} + \ketbra{11}_{c_{\hat{T}-1},c_{\hat{T}}} - \hat{R}_S\otimes(\ketbra{11}{10}_{c_{\hat{T}-1},c_{\hat{T}}} + \ketbra{10}{11}_{c_{\hat{T}-1},c_{\hat{T}}}).
    \end{align*}
    We conclude that the spectral of $H \coloneqq H_{\text{in}} + H_{\text{clock}} + H_{\text{prop}}$ is $\Delta = \Omega(\hat{T}^{-3})$~\cite[Lemma 5]{BBT06}.
    Furthermore, it is trivial to check that the above Hamiltonian terms are stoquastic and that $H$ is $6$-local.

    Define the output Hamiltonian,
    \begin{equation*}
        H_{\text{out}} = \ketbra{-}_q \otimes \ketbra{1}_{c_{\hat{T}}}.
    \end{equation*}
    Clearly $H_{\text{out}}$ is stoquastic.
    We define a new Hamiltonian $H' = H + \delta H_{\text{out}}$ where $0 < \delta \ll \Delta$.
    We treat $\delta H_{\text{out}}$ as a perturbation term.
    This gives the eigenvalue as
    \begin{equation*}
        \lambda(H') = \delta \min_{\ket{\xi}} \bra{\xi} H_{\text{out}} \ket{\xi} + O(\delta^2).
    \end{equation*}
    For a sufficiently small $\delta$ the second-order terms can be ignored.
    To conclude, we leverage arguments analogous to \refcite{KSV02} (cf. \cref{thrm:stoqma-hard-6-local-stoq-ham}) to show in the \textsc{yes} case, there exists a proof state such that the Hamiltonian $H$ has eigenvalues at most $\delta(1-\alpha)/(\hat{T}+1)$.
    In the \textsc{no} case, all eigenvalues are at least $\delta(1-\beta)/(\hat{T}+1)$.
    The perturbation analogous applies for $\delta \ll \Delta$ hence let $\delta \ll \hat{T}^{-3}$.
\end{proof}

To summarise, we have shown that simple modifications to the Feynman-Kitaev clock construction is sufficient to prove the \clw{StoqMA}{hardness} of the \sc{$6$-Local Stoquastic Hamiltonian} problem on a spatially sparse graph.
We have restated the original proofs for clarity and to set the notation for the new proofs that follow.

\section{Stoquastic Perturbation Gadgets}\label{sec:stoq-pert-gadgets}
In this section, we recap the work of Bravyi \emph{et al}.~\cite{BDOT06} on perturbation gadgets for stoquastic Hamiltonians.
Similar to the original work by Kempe, Kitaev, and Regev~\cite{KKR06}, we demonstrate how $k$-local stoquastic Hamiltonians can be reduced to $3$-local Hamiltonians.
Subsequently, we apply a gadget technique to further reduce these $3$-local stoquastic Hamiltonians to $2$-local stoquastic ones, thereby establishing that the \sc{$2$-Local Stoquastic Hamiltonian} problem is \clw{StoqMA}{complete}.

\refcite{BDOT06} use the self-energy method to prove the perturbative effects, a technique originally developed in \refcite{KKR06} for the \sc{Local Hamiltonian} problem.
In contrast, we employ the Schrieffer-Wolff transformation, formally described for many-body Hamiltonians in \refcite{BDL11} and applied to stoquastic Hamiltonians in \refcite{BDLT06}.
Both methods are equivalent in the sense that they utilise perturbation theory to derive an effective Hamiltonian whose low-energy spectrum closely approximates that of the original Hamiltonian.
Readers familiar with these ideas can skip to \cref{sec:stoq-geo-gadgets}.

\subsection{Crash Course in Perturbation Gadgets}
The idea of perturbation gadgets is to introduce a mediator qubit in a system with the effect of simulating the low-energy spectrum of a target Hamiltonian via local interactions between the mediator qubit and the system qubits.
An intuitive example is the simulation of ferromagnetic spin interactions using antiferromagnetic ones~\cite{TheHamiltonianJungle}.
The two main methods used in the literature to study this effect are the self-energy method and the Schrieffer-Wolff transformation.
The self-energy method is a perturbative method that uses the Dyson equation~\cite{FWK03} to calculate the effective Hamiltonian.
The Schrieffer-Wolff transformation uses a unitary transformation to decouple the low-energy subspace from the high-energy subspace, yielding an effective Hamiltonian on the low-energy sector.

The usual format of reduction proofs is to start from a known problem that is complete for some complexity class.
However, when dealing with perturbation gadgets, it is common practice to use the notion of simulability instead.
Informally speaking, saying one Hamiltonian can `\emph{simulate}' another Hamiltonian gives the same conclusions as an appropriate reduction.
Moreover, a reduction from problem $A$ to problem $B$ means that $B$ is at least as hard as $A$.
Analogously, if Hamiltonian $H_{{\rm sim}}$ can simulate Hamiltonian $H_{{\rm targ.}}$, then $H_{{\rm sim}}$ is at least as hard as $H_{{\rm targ.}}$, cf.\cref{rmk:simulation_means_reduction}.
In this regard, we interchange the word `simulate' with `reduction'.
We commonly refer to $H_{{\rm targ.}}$ as the \emph{target} Hamiltonian, which acts on a $2^n$-dimensional Hilbert space $\mathcal{H} = \mathcal{L}_- \oplus \mathcal{L}_+$; $\mathcal{L}_-$ refers to the low-energy eigenspace and $\mathcal{L}_+$ the high-energy eigenspace.
The reduction aims to show that a Hamiltonian $H_{{\rm sim}}$, acting on a larger Hilbert space $\widetilde{\mathcal{H}} = \widetilde{\mathcal{L}}_- \oplus \widetilde{\mathcal{L}}_+$, can be constructed to have a low-energy subspace that approximates that of the target Hamiltonian.
To formalise this, we say there exists an isometry $\widetilde{\mathcal{E}}: \mathcal{H} \rightarrow \mathcal{H}_{{\rm sim}}$ such that $\text{Im}(\widetilde{\mathcal{E}}) \coloneqq \mathcal{L}_-(H_{{\rm targ.}})$ and $\norm{\widetilde{\mathcal{E}}^\dagger H_{{\rm sim}} \widetilde{\mathcal{E}}} \approx \norm{H_{{\rm targ.}}}$.

\begin{definition}[\cite{PM17}]\label{def:Hamiltonian_simulation}
    Let $H$ be a Hamiltonian acting on a $2^n$-dimensional Hilbert space $\mathcal{H} = \mathcal{L}_- \oplus \mathcal{L}_+$.
    Let $H_{{\rm sim}}$ be a Hamiltonian acting on a $2^m$-dimensional Hilbert space, with $m>n$ and where $\widetilde{\mathcal{H}} = \widetilde{\mathcal{L}}_- \oplus \widetilde{\mathcal{L}}_+$.
    Let $\mathcal{E}: \mathcal{H}\rightarrow \widetilde {\mathcal{H}}$ be an isometry.
    We say that $H_{{\rm sim}}$ is an effective Hamiltonian, or a $(\eta, \epsilon)$-simulator, for $H$ if these exists an isometry $\widetilde{\mathcal{E}}: \mathcal{H} \rightarrow \mathcal{H}_{{\rm sim}}$ such that
    \begin{enumerate}[(i)]
        \item $\text{Im}(\widetilde {\mathcal{E}}) \coloneqq \mathcal{L}_-(H)$.
    \label{item:sim_1}
        \item $\norm{H - \widetilde {\mathcal{E}}^\dagger H_{{\rm sim}} \widetilde {\mathcal{E}}} \leq \epsilon$.
        \item $\norm{\mathcal{E} - \widetilde {\mathcal{E}}} \leq \eta$.
    \end{enumerate}
\end{definition}

We note that it is not strictly necessary for the Hilbert space in \cref{def:Hamiltonian_simulation} to admit the format $\mathcal{L}_- \oplus \mathcal{L}_+$; however, this is appropriate and convenient for our purposes.
Important lemmas can be gleaned from \cref{def:Hamiltonian_simulation} essential for the reduction proofs.
We will only need the following lemma from \refcite{BH16} for this work.

\begin{lemma}[Eigenvalue Simulation~\cite{BH16}]\label{lma:eigenvalue_simulation}
    Let $(H_{{\rm sim}},\mathcal{E})$ be an $(\eta, \epsilon)$-simulator for $H$.
    Let $\lambda_j(H)$ denote the $j$-th smallest eigenvalue of $H$.
    Then
    \begin{equation*}
        \abs{\lambda_j(H) - \lambda_j(H_{{\rm sim}})} \leq \epsilon.
    \end{equation*}
\end{lemma}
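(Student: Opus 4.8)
The plan is to reduce the claim to the standard Weyl perturbation bound for Hermitian operators, leaning on the two structural facts supplied by \cref{def:Hamiltonian_simulation}: that $\widetilde{\mathcal{E}}$ maps $\mathcal{H}$ isometrically onto the low-energy subspace of $H_{{\rm sim}}$ (condition (i)), and that the compression of $H_{{\rm sim}}$ through $\widetilde{\mathcal{E}}$ is $\epsilon$-close to $H$ (condition (ii)). I would note at the outset that condition (iii) and the parameter $\eta$ are irrelevant here: they control the fidelity of the encoding isometry, not the eigenvalues, so the final bound should depend on $\epsilon$ alone.

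First I would introduce the compressed (effective) Hamiltonian $\widetilde{H} \coloneqq \widetilde{\mathcal{E}}^\dagger H_{{\rm sim}} \widetilde{\mathcal{E}}$, a Hermitian operator acting on the smaller space $\mathcal{H}$. By condition (i), $\text{Im}(\widetilde{\mathcal{E}})$ is exactly the low-energy eigenspace $\mathcal{L}_-$ of $H_{{\rm sim}}$; since a spectral low-energy subspace is $H_{{\rm sim}}$-invariant and spanned by the $\dim\mathcal{H}$ lowest eigenvectors of $H_{{\rm sim}}$, conjugating by the isometry $\widetilde{\mathcal{E}}$ merely rewrites the restriction of $H_{{\rm sim}}$ to $\mathcal{L}_-$ in the orthonormal frame supplied by $\widetilde{\mathcal{E}}$. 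Consequently $\widetilde{H}$ is unitarily equivalent to that restriction, and its eigenvalues are precisely the $\dim\mathcal{H}$ smallest eigenvalues of $H_{{\rm sim}}$, that is $\lambda_j(\widetilde{H}) = \lambda_j(H_{{\rm sim}})$ for every $j$ in range.

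Next I would invoke condition (ii), $\norm{H - \widetilde{H}} \leq \epsilon$, together with Weyl's inequality: for any two Hermitian operators $A,B$ on the same finite-dimensional space, the Courant--Fischer min-max characterisation gives $\abs{\lambda_j(A) - \lambda_j(B)} \leq \norm{A-B}$. Applying this with $A = H$ and $B = \widetilde{H}$ yields $\abs{\lambda_j(H) - \lambda_j(\widetilde{H})} \leq \epsilon$, and substituting the identification $\lambda_j(\widetilde{H}) = \lambda_j(H_{{\rm sim}})$ from the previous step delivers the claimed $\abs{\lambda_j(H) - \lambda_j(H_{{\rm sim}})} \leq \epsilon$.

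The step I would be most careful about is the exact eigenvalue identification of the second paragraph. It is tempting to suspect that the compression $\widetilde{\mathcal{E}}^\dagger H_{{\rm sim}}\widetilde{\mathcal{E}}$ only approximately reproduces the low-lying spectrum, but equality holds here precisely because $\text{Im}(\widetilde{\mathcal{E}})$ is required to be the genuine spectral low-energy subspace, hence exactly $H_{{\rm sim}}$-invariant, rather than merely an approximately invariant subspace; were the definition relaxed to approximate invariance one would incur an additional error term and the clean ``$\leq \epsilon$'' bound would degrade. I would therefore make explicit that $\mathcal{L}_-$ denotes the exact spectral subspace so that the restriction is lossless, after which the whole argument collapses to the elementary Weyl estimate.
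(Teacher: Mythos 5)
The paper does not prove this lemma at all---it is imported directly from Ref.~\cite{BH16}---so there is no internal proof to compare against; your argument is the standard one from that reference (identify the compression $\widetilde{\mathcal{E}}^\dagger H_{{\rm sim}}\widetilde{\mathcal{E}}$ with the restriction of $H_{{\rm sim}}$ to its exact, invariant low-energy spectral subspace, hence with its $\dim\mathcal{H}$ smallest eigenvalues, then apply Weyl's inequality to condition (ii)) and it is correct, including your observation that $\eta$ plays no role. The only point worth flagging is that condition (i) of \cref{def:Hamiltonian_simulation} as printed reads $\text{Im}(\widetilde{\mathcal{E}}) \coloneqq \mathcal{L}_-(H)$, which you have (rightly) interpreted as the low-energy subspace of $H_{{\rm sim}}$; that reading is exactly what makes your lossless eigenvalue identification go through.
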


We do not attempt to calculate precise error bounds, and so we assume $\eta, \epsilon = O(1/\poly{n})$ (\cref{lma:second_order_reduction} is essentially an existence lemma).

To set up the Schrieffer-Wolff transformation, we consider a simulator Hamiltonian of the form $H_{{\rm sim}} = \Delta H_0 + V$, where $H_0$ is the unperturbed part and $V$ is a perturbation term.
The unperturbed Hamiltonian $H_0$ induces a \emph{split} Hilbert space $\mathcal{H} = \mathcal{L}_- \oplus \mathcal{L}_+$: there is a spectral gap $\Delta \gg 1$ between the two subspaces, with $(H_0)_- = 0$ and $\lambda((H_0)_{++}) \geq 1$.
The perturbation satisfies $\norm{V} < \Delta/2$ to prevent mixing between the subspaces.
We define projectors $\Pi_-$ and $\Pi_+$ onto the low- and high-energy subspaces respectively, and write $O_{\pm\mp} = \Pi_\pm O \Pi_\mp$ and $O_{\pm\pm} = \Pi_\pm O \Pi_\pm = O_\pm$ for any operator $O$.

The Schrieffer-Wolff transformation is then a unitary $e^S$, where $S$ is anti-Hermitian, chosen so that the transformed Hamiltonian $e^{-S} H_{{\rm sim}} e^S$ is block-diagonal with respect to $\Pi_\pm$.
The effective Hamiltonian is $H_{{\rm eff.}} = (e^{-S} H_{{\rm sim}} e^S)_{-}$, approximated via a truncated series.
For the purposes of this work, we only need to go to second order.
There are lemmas that completely specify the form of the effective Hamiltonian for first-, second- and third-order terms~\cite{BH16}.
A simplified version of the second-order lemma is as follows.

\begin{lemma}[Second-order Reduction~\cite{BH16}]\label{lma:second_order_reduction}
    Let $H_{{\rm sim}} = \Delta H_0 + \sqrt{\Delta}\;V_{{\rm main}} + V_{{\rm extra}}$ be chosen such that $\lambda((H_0)_{++})\geq 1$, $(H_0)_{-}, (H_0)_{-+} = 0$, $(V_{{\rm extra}})_{-+}, (V_{{\rm main}})_{-} = 0$ and
    \begin{equation*}
        \norm{\bar{H}_{{\rm targ.}} - (V_{{\rm extra}})_{-} + \Delta^{-1}(V_{{\rm main}})_{-+}(H_0^{-1})_{++}(V_{{\rm main}})_{+-}} \leq \epsilon/2.
    \end{equation*}
    For appropriate choices of $\norm{V_{{\rm main}}}, \norm{V_{{\rm extra}}}$ and $\Delta$, $H_{{\rm sim}}$ is an $(\eta, \epsilon)$-simulator for $H_{{\rm targ.}}$.
\end{lemma}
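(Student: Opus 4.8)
The plan is to realise $H_{{\rm sim}}$ as a perturbed Hamiltonian with unperturbed part $H$ and perturbation $V = \sqrt{\Delta}\,V_{{\rm main}} + V_{{\rm extra}}$, and to compute its low-energy effective Hamiltonian via the Schrieffer--Wolff transformation of \refcite{BDL11} and \refcite{BH16}, matching the result to $\bar H_{{\rm targ.}}$ to second order. Concretely, I would invoke the existence of an anti-Hermitian generator $S$ (with $\norm{S} = O(\norm{V}/\Delta)$) for which $e^{-S}H_{{\rm sim}}e^{S}$ is block diagonal with respect to $\Pi_\pm$, and set $H_{{\rm eff.}} = (e^{-S}H_{{\rm sim}}e^{S})_-$. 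The isometry required by \cref{def:Hamiltonian_simulation} is taken to be $\widetilde{\mathcal E} = e^{S}\mathcal E$, so that $\text{Im}(\widetilde{\mathcal E}) = \mathcal L_-(H)$ holds by construction, discharging condition~(i), and $\widetilde{\mathcal E}^\dagger H_{{\rm sim}}\widetilde{\mathcal E} = \mathcal E^\dagger (e^{-S}H_{{\rm sim}}e^{S})_-\,\mathcal E = H_{{\rm eff.}}$ since $\mathcal E$ maps into the low-energy subspace.

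First I would expand $H_{{\rm eff.}}$ order by order in $V$. Using $H_- = 0$ and $H_{-+} = 0$, the unperturbed Hamiltonian contributes nothing to the low-energy block, while the gap $\lambda(H_{++}) \geq \Delta$ makes $H_{++}$ invertible with $\norm{H_{++}^{-1}} \leq \Delta^{-1}$. The first-order term is $V_- = \sqrt{\Delta}\,(V_{{\rm main}})_- + (V_{{\rm extra}})_-$, which reduces to $(V_{{\rm extra}})_-$ because $(V_{{\rm main}})_- = 0$. The second-order term is $-V_{-+}H_{++}^{-1}V_{+-}$; since $(V_{{\rm extra}})_{-+} = 0$, only the main perturbation couples the blocks, so $V_{\pm\mp} = \sqrt{\Delta}\,(V_{{\rm main}})_{\pm\mp}$ and the term becomes $-\Delta\,(V_{{\rm main}})_{-+}H_{++}^{-1}(V_{{\rm main}})_{+-}$. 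The $\sqrt{\Delta}$ rescaling is engineered precisely so that this combination is $O(1)$: absorbing the factor $\Delta$ into the gap-normalised inverse $H^{-1} \coloneqq \Delta\,H_{++}^{-1}$ reproduces exactly the expression $-(V_{{\rm main}})_{-+}H^{-1}(V_{{\rm main}})_{+-}$ appearing in the statement, whence $H_{{\rm eff.}} = (V_{{\rm extra}})_- - (V_{{\rm main}})_{-+}H^{-1}(V_{{\rm main}})_{+-} + (\text{higher order})$.

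It then remains to control the remainder. Writing the order-$k$ SW contribution schematically as a product of $k$ copies of $V$ with $k-1$ resolvents $H_{++}^{-1}$, each block-changing factor of $V$ contributes $\sqrt{\Delta}$ and each resolvent contributes $\Delta^{-1}$, so the order-$k$ term scales like $\Delta^{k/2}\cdot\Delta^{-(k-1)} = \Delta^{1-k/2}$, which tends to $0$ for $k \geq 3$. Thus, for $\Delta$ chosen large relative to $\norm{V_{{\rm main}}}$ and $\norm{V_{{\rm extra}}}$ (and with $\norm{V} \leq \Delta/2$ so the subspaces do not mix), the truncation error is at most $\epsilon/2$; combined with the hypothesis $\norm{\bar H_{{\rm targ.}} - (V_{{\rm extra}})_- + (V_{{\rm main}})_{-+}H^{-1}(V_{{\rm main}})_{+-}} \leq \epsilon/2$ and the triangle inequality, condition~(ii) follows. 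Condition~(iii), $\norm{\mathcal E - \widetilde{\mathcal E}} = \norm{(\mathbb{I} - e^{S})\mathcal E} \leq \eta$, follows from $\norm{S} = O(\norm{V}/\Delta) \to 0$, again made $\leq \eta$ by enlarging $\Delta$.

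The main obstacle is the simultaneous bookkeeping of the $\Delta$-scaling: the $\sqrt{\Delta}$ prefactor on $V_{{\rm main}}$ must be large enough to promote the second-order term to order one, yet the overall perturbation must stay below $\Delta/2$ so that $S$ converges and the higher-order tail decays. Verifying that these requirements are mutually satisfiable --- that is, exhibiting the ``appropriate choices'' of $\norm{V_{{\rm main}}}$, $\norm{V_{{\rm extra}}}$ and $\Delta$ --- is the crux; since we require only $\eta,\epsilon = O(1/\poly{n})$ and treat this as an existence statement, the quantitative tail estimates of \refcite{BH16} can be invoked directly rather than re-derived.
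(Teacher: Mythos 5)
The paper does not prove this lemma itself --- it imports it from \refcite{BH16} and merely describes the surrounding Schrieffer--Wolff machinery ($H_{{\rm eff.}} = (e^{-S}H_{{\rm sim}}e^{S})_-$, truncation at second order, treating the statement as an existence lemma with $\eta,\epsilon = O(1/\poly{n})$). Your sketch follows exactly that route, correctly identifies the $\sqrt{\Delta}$-scaling bookkeeping as the crux, and defers the quantitative tail bounds to \refcite{BH16} just as the paper does, so it is consistent with the paper's treatment.
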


The term $\bar{H}_{{\rm targ.}} = \mathcal{E}^\dagger H_{{\rm targ.}}\mathcal{E}$ is the logical encoding of the target Hamiltonian.
Using \cref{lma:second_order_reduction}, the general recipe for the effective Hamiltonian up to second order is
\begin{equation}\label{eq:Heff_form}
    H_{{\rm eff.}} = V_{-} - \Delta^{-1}V_{-+}(H_0^{-1})_{++}V_{+-}.
\end{equation}

The perturbation techniques and reductions that follow are designed with the goal of decreasing the degree of locality for a general $k$-local stoquastic Hamiltonian.
The difficulty here, compared to previous work~\refcite{KKR06}, is that each term and resulting Hamiltonian must also be stoquastic, requiring careful consideration throughout.
The goal is to reduce $O(1)$-local stoquastic Hamiltonians to $2$-local stoquastic Hamiltonians.
The first step uses the \emph{subdivision gadget}, which takes a $O(1)$-local stoquastic Hamiltonian to a $3$-local one.
To reach $2$-local terms, there is an intermediate reduction to \emph{special} $3$-local stoquastic Hamiltonians~\cite{BDOT06}, after which the $3$-to-$2$-local reduction proceeds analogously to \refcite{KKR06}.
The same results for $6$-local stoquastic Hamiltonians then apply to the $2$-local case, since the perturbative reductions are polynomial-time transformations that preserve polynomial norms.

\begin{remark}\label{rmk:simulation_means_reduction}
    Let $H_B$ be a $(\eta,\epsilon)$ simulator for $H_A$.
    Then $H_B$ is at least as hard as $H_A$.
\end{remark}

\begin{proof}
    Take an instance of $H_A$ to be defined as $x\coloneqq(H_A, a,b)$ such that $b-a \geq 1/\poly{n}$.
    Since $H_B$ is a $(\eta,\epsilon)$ simulator for $H_A$, we set the parameters $b' = b - \epsilon$ and $a' = a + \epsilon$.
    Setting $\epsilon < (b-a)/2$ ensures that $b'-a' = b-a - 2\epsilon \geq 1/\poly{n}$, meeting the criterion for a valid instance of $H_B$.
    It is straightforward to verify that $\lambda(H_B) \leq a'$ if $\lambda(H_A) \leq a$ (\textsc{yes} case), and the converse holds for the \textsc{no} case.
\end{proof}

As a final note, perturbation gadgets can be applied in \emph{parallel} and in \emph{series}.
In the parallel case, the gadgets are applied to each term of the Hamiltonian simultaneously; in the series case, they are applied sequentially.
\cref{app:parallel} and \cref{app:composition} provide detailed treatments of these two cases: the former shows that no gadget cross-terms arise at second order, and the latter establishes how the composition law influences the approximation error.

In what follows we typically denote the target Hamiltonian as $H$ and the simulator Hamiltonian as $\widetilde{H}$, with the understanding that $H$ is the original Hamiltonian we wish to simulate and $\widetilde{H}$ is the new Hamiltonian constructed.

\subsection{Stoquastic Subdivision Gadget}
A general $k$-local stoquastic Hamiltonian can be decomposed as a sum of a product of 1-qubit matrices.
Define the $1$-qubit matrices, denoted $\rho^\mu$, as
\begin{equation*}
    \rho^0 = \ketbra{0}, \quad \rho^1 = \ketbra{0}{1}, \quad \rho^2 = \ketbra{1}{0}, \quad \rho^3 = \ketbra{1}.
\end{equation*}
A general local Hamiltonian term, $H_j$ has support on a set of at most $k$ qubits; denote this set $Q_j = \{j_1,\dots, j_k\}$.
Let $\Omega \coloneqq \{Q_1, \dots, Q_m\}$ denote the $m$ such sets defining the full Hamiltonian.
The stoquastic $k$-local Hamiltonian term $H_j = H_j(Q_j)$ is then,
\begin{equation*}
    H_j(Q_j) = \sum_{\mu_1, \dots, \mu_k} h_{j_1,\dots j_k}^{\mu_1, \dots, \mu_k} \rho_{j_1}^{\mu_1} \otimes \dots \otimes \rho_{j_k}^{\mu_k}.
\end{equation*}
For example, a $2$-local term, $H = Z_1Z_2$ can be written as
\begin{equation*}
    Z_1Z_2 = h^{0_10_2}_{12} \rho^0_1\rho^0_2 + h^{0_11_2}_{12} \rho^0_1\rho^3_2 + h^{1_10_2}_{12} \rho^3_1\rho^0_2 + h^{1_11_2}_{12} \rho^3_1\rho^3_2.
\end{equation*}
Define $\boldsymbol{\mu} = (\mu_1, \dots, \mu_k)$ and $\boldsymbol{j} = (j_1, \dots, j_k)$.
If each $h_{\boldsymbol{j}}^{\boldsymbol{\mu}}$ term is non-negative then a general stoquastic Hamiltonian can be expressed as~\cite{BDOT06}
\begin{equation}\label{eq:general_stoq_rho}
    H = K - \sum_{Q \in \Omega} \sum_{\boldsymbol{\mu}} h_{\boldsymbol{j}}^{\boldsymbol{\mu}} \rho_{j_1}^{\mu_1} \dots\rho_{j_k}^{\mu_k}.
\end{equation}
The term $K$ represents a constant energy shift that renders the diagonal elements of $H$ non-positive.
Take each subset $Q_j$ and partition it into two disjoint subsets, $\sigma_j$ and $\tau_j$.
The partitioning is such that $\sigma_j \cup \tau_j = Q_j$, $\sigma_j \cap \tau_j = \emptyset$ and $\abs{\sigma_j}$ is close to $\abs{\tau_j}$, i.e., their sizes are near $\lceil k/2 \rceil$.
The Hamiltonian can be rewritten as
\begin{equation}\label{eq:stoq_ham_CD_form}
    H = K - \sum_{a=1}^M \left(C_a \otimes D_a + C_a^\dagger \otimes D_a^\dagger \right).
\end{equation}
The quantity $M = 4^k \binom{n}{k}$ is an upper bound on the number of possible terms for a given $k$-local interaction; for example, a dense matrix where all elements are non-zero admits a $\rho$-matrix decomposition of $4^k = 2^{2k}$ terms.
Furthermore, $\binom{n}{k}$ assumes all interactions are strictly $k$-local.
Both $C$ and $D$ have non-negative elements and act on disjoint sets of at most $\lceil k/2 \rceil$ qubits.
Informally, $C$ and $D$ are collections of a string of $\lceil k/2 \rceil$ $\rho$-matrices where $C$ acts on the partition $\sigma$ and $D$ on $\tau$.
For example, in a $6$-local interaction we could denote $C_1 = h^{000}_{123}\rho^0_1\rho^0_2\rho^0_3$ and $D_1 = h^{000}_{456}\rho^0_4\rho^0_5\rho^0_6$.
It is easy to see $C^\dagger C$ and $D^\dagger D$ are diagonal.

We can now introduce a perturbation gadget to simulate the low-energy spectrum of \cref{eq:stoq_ham_CD_form}.
Let the target Hamiltonian be \cref{eq:stoq_ham_CD_form} and define the perturbed Hamiltonian $\widetilde{H}=\Delta H_0+V$ with,
\begin{align}
    H_0 &= \sum_{a=1}^M \ketbra{1}_a, \notag\\
    V &= \sqrt{\Delta}\; V_{{\rm main}} + V_{{\rm extra}}, \notag\\
    V_{{\rm main}} &= -\sum_{a=1}^M \left(C_a + D_a^\dagger \right)\otimes S_a^+ + \left(C_a^\dagger + D_a \right)\otimes S_a^-, \label{eq:OG_subdivision_Vmain}\\
    V_{{\rm extra}} &= \sum_{a=1}^M \left(C_a^\dagger \otimes C_a + D_a \otimes D_a^\dagger \right).\notag
\end{align}
Let $S^+ = (S^-)^\dagger = \ketbra{1}{0}$\footnote{$S^\pm$ could be represented as the $\rho$-matrices $\rho^2/\rho^1$, but we leave them as $S^\pm$ for clarity.}.
Both $H_0$ and $V$ are local (term-wise) stoquastic Hamiltonians.

Using \cref{lma:second_order_reduction} and \cref{eq:Heff_form}, it can be shown the effective Hamiltonian simulates the low-energy spectrum of \cref{eq:stoq_ham_CD_form} (up to an overall constant) to second-order, cf. \cref{app:stoquastic_subdivision_gadget}.
With $O(\log k)$ applications of this subdivision gadget, a $k$-local stoquastic Hamiltonian can be reduced to a $3$-local one.
The repeated application of such gadgets leaves the resultant Hamiltonian with $3$-local and $2$-local terms.
We group $2$-local contributions into a term $\Gamma$.
All remaining $3$-local terms are subsets of triples, $B_j = \{j_1, j_2, j_3\}$ where $B_j \in \Omega_3$.
We can, therefore assume a general $3$-local stoquastic Hamiltonian is of the form
\begin{equation*}
    H = \Gamma - \sum_{B\in\Omega_3} \sum_{\nu_i \in \pm} h_{\boldsymbol{j}}^{\boldsymbol{\nu}} S_{j_1}^{\nu_1}S_{j_2}^{\nu_2}S_{j_3}^{\nu_3},
\end{equation*}
where $\Omega_3$ is the set of all triples of qubits in the system and $h_{\boldsymbol{j}}^{\boldsymbol{\nu}}$ are non-negative.

\subsubsection{Special 3-Local Stoquastic Hamiltonians}
In order to perform the $3$-to-$2$-local reduction of stoquastic Hamiltonians, it is convenient to reduce a general $3$-local stoquastic Hamiltonian to a special form,
\begin{equation*}
    H = \Gamma - \sum_{B\in\Omega_3} h_{\boldsymbol{j}} X_{j_1}X_{j_2}X_{j_3}.
\end{equation*}

A general $3$-local is reduced to the special $3$-local using perturbation theory.
The specific types of terms that occur for combinations of $S^+$ and $S^-$ are well approximated by $XXX$ interactions.
The structure of this reduction involves several steps, some of which are useful for the $3$-to-$2$-local reduction.
The reduction uses a fourth-order correction, so we will not cover the details here.
The interested reader is referred to \refcite{BDOT06} for the full details.
The general point is an intermediate step from general $3$-local stoquastic Hamiltonians to special $3$-local stoquastic Hamiltonians.
The special form is useful for the $3$-to-$2$-local reduction.

\subsection{3-Local to 2-Local Reduction}
As mentioned, the subdivision gadget can only reduce a $k$-local Hamiltonian to a $3$-local one.
We must employ subsequent ideas to reduce the locality beyond $3$.
Thankfully, the bulk of this work was originally conducted in \refcite{KKR06}.
In this section, we recap the ideas of \refcite{BDOT06} to show how this process works for stoquastic Hamiltonians.

Define non-negative operators $O$ as being proportional to $X$, then an interaction term of a special $3$-local stoquastic Hamiltonian can be expressed as 
\begin{equation}\label{eq:special_3LSH_O_form}
    H' = \Gamma - 6\;O_1O_2O_3.
\end{equation}
Let the target Hamiltonian be \cref{eq:special_3LSH_O_form} and define the perturbed Hamiltonian $\widetilde{H}=\Delta H_0+V$ with,
\begin{align*}
    H_0 &= -\frac{1}{4} \left(Z_1Z_2 + Z_2Z_3 + Z_1Z_3 - 3\; I \right), \\
    V &= \Delta^{2/3}\; V_{{\rm main}} + V_{{\rm extra}}, \\
    V_{{\rm main}} &= -\sum_{j=1}^3 O_j\otimes S^+_j + O_j^\dagger \otimes S^-_j, \\
    V_{{\rm extra}} &= \Gamma.
\end{align*}
Trivially, $\Pi_- = \ketbra{000} + \ketbra{111}$ and $\Pi_+ =  I - \Pi_-$.

Using the third-order reduction format of~\cite[Lemma 6]{BH16}, it can be shown that the effective Hamiltonian is 
\begin{equation*}
    H_{{\rm eff.}} = K + \Gamma\otimes  I_c - 6\;O_1O_2O_3\otimes X_c,
\end{equation*}
where $ I_c$ and $X_c$ act on the two-dimensional low-energy subspace of the mediator qubits.
With a small calculation, it can be shown that the effective Hamiltonian simulates the low-energy spectrum of the target Hamiltonian up to an overall energy shift.

All such gadgets allow for the conclusion that the complexity of the \sc{$k$-Local Stoquastic Hamiltonian} problem is preserved for $k\geq 2$.
By showing the $k$-local stoquastic Hamiltonian on a spatially sparse graph is \clw{StoqMA}{complete}, we can conclude the same is true for $2$-local case~\cite[Theorem 8]{BDOT06}.
The next section covers geometric gadgets that allow for the reduction of a general spatially sparse $2$-local stoquastic Hamiltonian to a $2$-local stoquastic Hamiltonian on a planar graph.

\section{Geometrical Stoquastic Perturbation Gadgets}\label{sec:stoq-geo-gadgets}
We now present a series of gadgets, inspired by \refcite{OT08}, specific to stoquastic Hamiltonians for reducing a local stoquastic Hamiltonian on a spatially sparse graph to one on a planar graph.
Our gadgets preserve the stoquasticity and $2$-locality of the Hamiltonian --- specifically, we constructed new Hamiltonians that are $2$-local term-wise stoquastic and act on qubit systems.
The main gadgets required for this reduction are:
\begin{enumerate}
    \item the \textsl{Subdivision} gadget,
    \item the \textsl{Cross} gadget,
    \item the \textsl{Fork} gadget,
    \item the \textsl{Triangle} gadget.
\end{enumerate}
Each gadget serves a specific purpose.
The \textsl{Fork} and \textsl{Triangle} gadgets are used to reduce the degree of vertices.
The \textsl{Cross} gadget's role is to planarise the interaction graph.
We have already seen a subdivision gadget.
However, we emphasise that a $2$-local stoquastic interaction can also be subdivided --- proving useful for the \textsl{Triangle} and other gadget identities.
Each gadgets' analysis is analogous to the generic subdivision gadget and so we will not cover a preliminary sketch of the original ideas from \refcite{OT08}.

In the general $2$-local case the Hamiltonian can be expressed as a summation of $2$-local Pauli terms.
The degree of each vertex could then be characterised by its \emph{Pauli-degree}, i.e., the number of Pauli operators emanating from said vertex.
The task was to construct a series of gadgets that could \begin{inparaenum}[(a)]
    \item reduce the Pauli-degree of each vertex to three, and
    \item reduce the graph to a planar one.
\end{inparaenum} We do not have the luxury of a Pauli decomposition and hence must resort to a different way of constructing gadgets.
The elements of the subdivision gadget serve as a basis for those to come.
We can leverage the construction of \cref{eq:OG_subdivision_Vmain} to determine how other gadgets should act.
We must always ensure the unperturbed Hamiltonian, perturbation term and perturbed Hamiltonian are stoquastic.

Using \cref{eq:general_stoq_rho}, we can express a general $2$-local stoquastic Hamiltonian as
\begin{equation*}
    H = K - \sum_{\{u,v\}\in \E(G)} \sum_{\boldsymbol{\mu}} h_{uv}^{\boldsymbol{\mu}} \rho_{u}^{\mu_u}\rho_{v}^{\mu_v}.
\end{equation*}
This is a somewhat cumbersome notation.
We instead will use $P$ to represent a general $\rho$-matrix, analogous to using $P$ to represent a general Pauli operator.
The notation $P_{u}$ describes one $\rho$-matrix acting on vertex $u$; it is convenient to think of $P_u = h_u^{\mu_u}\rho_{u}^{\mu_u}$\footnote{With this thought process $h_{uv}^{\boldsymbol{\mu}} = h_u^{\mu_u}h_v^{\mu_v}$.}.
For example, an edge $P_uP_v$ describes two $\rho$-matrices acting on vertices $u$ and $v$ and where $P_u$ does not necessarily equal $P_v$.
A general interaction edge for these stoquastic Hamiltonians is expressed as $P_uP_v + P_u^\dagger P_v^\dagger$ --- this is taken from \cref{eq:stoq_ham_CD_form}.
For brevity we say $\chi_{uv} = P_u + P_v^\dagger$ and $\chi_{uv}^\dagger = P_u^\dagger + P_v$.
Note that the interaction edge $P_uP_v + P_u^\dagger P_v^\dagger$ cannot be split into two components in general since $P_uP_v$ may not be Hermitian.
To the best of our knowledge, these are the only geometrically inspired gadgets for general local stoquastic Hamiltonians.

As we proceed with the gadgets, we will use diagrammatic representations to illustrate the interactions.
The left-hand side of each diagram represents the target Hamiltonian interaction, and the right-hand side represents the simulator Hamiltonian interaction.
Note that we often omit the negative signs in the diagrams for clarity.

\subsection{The Subdivision Gadget}
The purpose of this gadget is to show that a $2$-local stoquastic interaction between two system qubits, $u$ and $v$ can be simulated by $2$-local stoquastic interactions between the system qubits and a mediator qubit, $c$.
Let the target Hamiltonian be $H = \Gamma - (P_uP_v + P_u^\dagger P_v^\dagger)$ and define the perturbed Hamiltonian $\widetilde{H}=\Delta H_0+V$ with,
\begin{align}
    H_0 &= \ketbra{1}{1}_c, \notag\\
    V &= \sqrt{\Delta}~V_{{\rm main}} + V_{{\rm extra}}, \notag\\
    V_{{\rm main}} &= -\left(\chi_{uv}S^+_c + \chi_{uv}^\dagger S^-_c\right), \label{eq:subdivision_Vmain}\\
    V_{{\rm extra}} &= \Gamma + G. \notag
\end{align}
Note that the form of \cref{eq:subdivision_Vmain} is slightly misleading in that the diagrammatic representation of the interaction edge takes a slightly different form.
The diagrammatic representation of the interaction edge is shown in \cref{fig:subdivision_gadget}.

\begin{figure}[!ht]
    \centering
    \begin{tikzpicture}
        \pic{subdiv-gadget};
    \end{tikzpicture}
    \caption{The \textsl{Subdivision} gadget.}
    \label{fig:subdivision_gadget}
\end{figure}
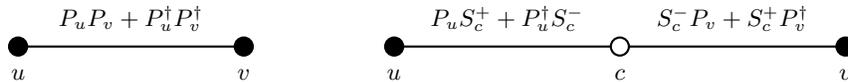

The term $G$ is diagonal and will be defined shortly.
The Hilbert space is split with a spectral gap of $\Delta$ and $\Pi_- = \ketbra{0}{0}_c$ and $\Pi_+ = \ketbra{1}{1}_c$ being the projectors onto the low-energy and high-energy subspaces respectively.
The unperturbed Hamiltonian is block diagonal with $(H_0)_{--} = 0$ and $(H_0)_{++} = 1$.
The perturbation term follows,
\begin{align*}
    V_{-+} &= - \sqrt{\Delta} \left(\chi^\dagger_{uv} \right)\ketbra{1}{0}_c, \\
    V_{+-} &= - \sqrt{\Delta} \left(\chi_{uv} \right)\ketbra{0}{1}_c, \\
    V_{-} &= (\Gamma + G)\ketbra{0}{0}_c.
\end{align*}
To second-order, it can be shown the effective Hamiltonian is
\begin{align*}
    H_{{\rm eff.}} &= \left(\Gamma + G - \chi_{uv}\chi_{uv}^\dagger\right) \ketbra{0}_c, \notag\\
                    &= \left(\Gamma - (P_uP_v + P_u^\dagger P_v^\dagger)\right)\ketbra{0}_c + \left(G - (P_uP_u^\dagger + P_v^\dagger P_v)\right) \ketbra{0}_c.
\end{align*}
Clearly if $G = P_uP_u^\dagger + P_v^\dagger P_v$ then we recover $H_{{\rm eff.}} = H\otimes\ketbra{0}_c$.

\subsection{The Cross Gadget}
The \textsl{Cross} gadget is used to remove a non-planar section of an interaction graph.
This gadget, however, creates additional edges, $E$.
The target Hamiltonian will be $H = \Gamma - E - (P_uP_v + P_u^\dagger P_v^\dagger + P_wP_s + P_w^\dagger P_s^\dagger)$ and the perturbation Hamiltonian will be $\widetilde{H}=\Delta H_0+V$ with,

\begin{align*}
    H_0 &= \ketbra{1}{1}_c, \\
    V &= \sqrt{\Delta}~V_{{\rm main}} + V_{{\rm extra}}, \notag\\
    V_{{\rm main}} &= -\left((\chi_{uv} + \chi_{sw})S^+_c + (\chi_{uv}^\dagger+\chi_{sw}^\dagger) S^-_c\right), \\
    V_{{\rm extra}} &= \Gamma + G.
\end{align*}

\begin{figure}[!ht]
    \centering
    \begin{tikzpicture}
        \pic{cross-gadget};
    \end{tikzpicture}
    \caption{The \textsl{Cross} gadget.
            The dashed lines represent the additional edges.
            }
    \label{fig:cross_gadget}
\end{figure}
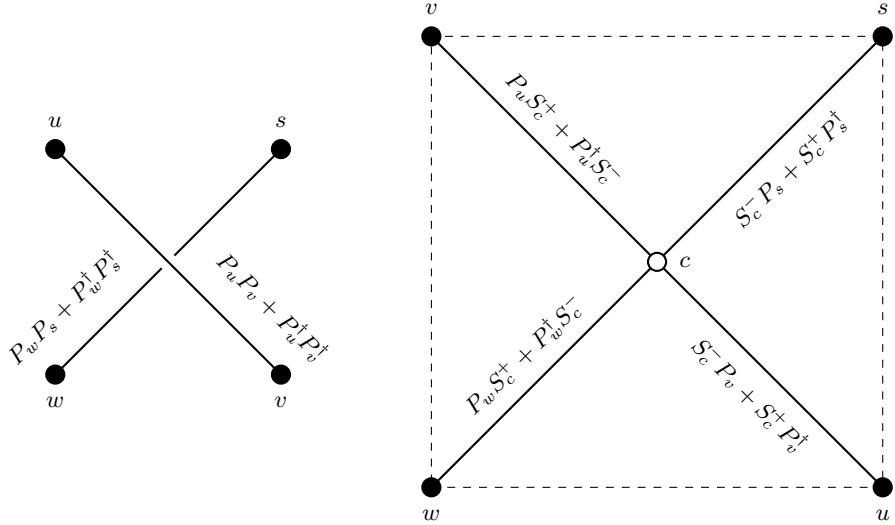

The diagrammatic representation of the interaction edge is shown in \cref{fig:cross_gadget}.
Using a similar analysis to that of the \textsl{Subdivision} gadget, it can be shown the effective Hamiltonian is
\begin{eqnarray}
    H_{{\rm eff.}} &=& \left(\Gamma -(P_uP_v + P_u^\dagger P_v^\dagger + P_wP_s + P_w^\dagger P_s^\dagger) - E\right)\ketbra{0}_c \notag\\
    && + \left(G - (P_uP_u^\dagger + P_v^\dagger P_v + P_wP_w^\dagger + P_s^\dagger P_s)\right) \ketbra{0}_c.
\end{eqnarray}
The extra edges (dashed lines) are contained in $E$ and lie between $us$, $sv$, $vw$ and $wu$.
The term $G$ is diagonal and is defined as before.

\subsection{The Fork Gadget}
The \textsl{Fork} gadget is used as one way to reduce the degree of a given vertex.
This gadget also creates an additional edge, $E$.
The target Hamiltonian will be $H = \Gamma - E - (P_uP_v + P_u^\dagger P_v^\dagger + P_vP_w + P_v^\dagger P_w^\dagger)$ and the perturbation Hamiltonian will be $\widetilde{H}=\Delta H_0+V$ with,

\begin{align*}
    H_0 &= \ketbra{1}{1}_c, \\
    V &= \sqrt{\Delta}~V_{{\rm main}} + V_{{\rm extra}}, \notag\\
    V_{{\rm main}} &= -\left((P_u + P_w + P_v^\dagger)S^+_c + (P_u^\dagger + P_w^\dagger + P_v) S^-_c\right), \\
    V_{{\rm extra}} &= \Gamma + G.
\end{align*}

\begin{figure}[!ht]
    \centering
    \begin{tikzpicture}
        \pic{fork-gadget};
    \end{tikzpicture}
    \caption{The \textsl{Fork} gadget.
The dashed line represents the additional edge.}
    \label{fig:fork_gadget}
\end{figure}
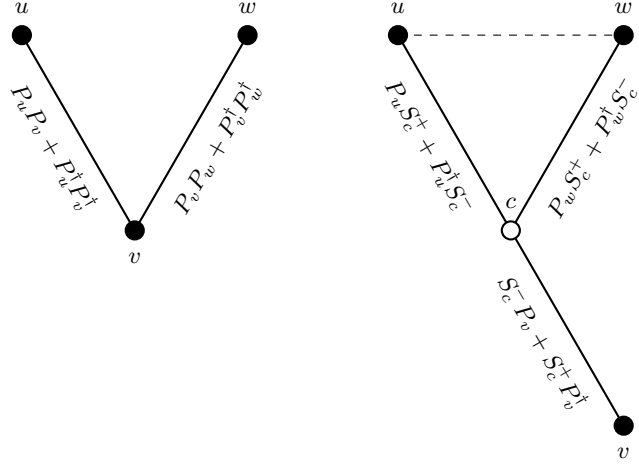

Again, a similar analysis entails and the effective Hamiltonian is
\begin{eqnarray}
    H_{{\rm eff.}} &=& \left(\Gamma - (P_uP_v + P_u^\dagger P_v^\dagger + P_vP_w + P_v^\dagger P_w^\dagger) - E\right)\ketbra{0}_c \notag\\
    && + \left(G - (P_u^\dagger P_u + P_vP_v^\dagger + P_w^\dagger P_w)\right) \ketbra{0}_c.
\end{eqnarray}
The additional edge is between $uw$ and is contained in $E$.
The term $G$ is diagonal and analogously defined as before.
It is important to note here that not any two arbitrary edges can be forked in this manner.
The edges must adhere to the specific structure shown in \cref{fig:fork_gadget}.
Moreover, the two edges in question must be of the form:
\begin{align*}
    P_uQ_v &+ P_u^\dagger Q_v^\dagger\,, & Q_vR_w &+ Q_v^\dagger R_w^\dagger\,,
\end{align*}
where we have chosen $P$, $Q$ and $R$ to represent the $\rho$-matrices making it clear where the connection lies.

\subsection{The Triangle Gadget}
The \textsl{Triangle} gadget is a hybrid of the \textsl{Subdivision} and \textsl{Fork} gadgets to reduce the degree of one qubit in a triad.
No additional edges between the target qubits are created.
The target Hamiltonian will be the same as the \textsl{Fork} gadget.
The \textsl{Triangle} gadget will first subdivide the edge $uv$ using a mediator qubit $c_1$ and $vw$ using a mediator qubit $c_2$.
Then proceed to fork between $c_1v$ and $vc_2$, cf. \cref{fig:triangle_gadget}.

\begin{figure}[!ht]
    \centering
    \begin{tikzpicture}
        \pic[scale=0.8]{triangle-gadget};
    \end{tikzpicture}
    \caption{The \textsl{Triangle} gadget.
The diagram shows the steps required to construct the gadget.}
    \label{fig:triangle_gadget}
\end{figure}
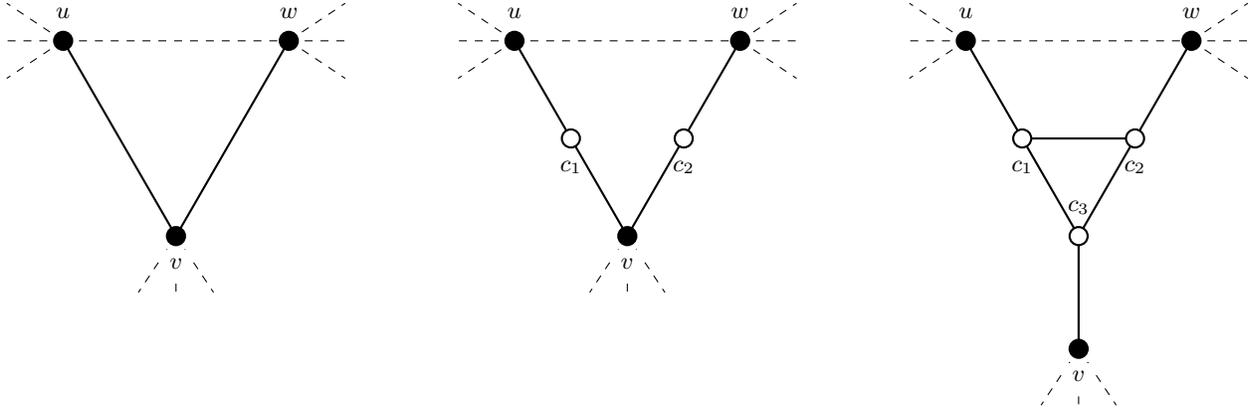

Notice that, with a small calculation, the conditions that permit the use of the \textsl{Fork} gadget are imparted onto the \textsl{Triangle} gadget.

\subsection{Further Gadget Combinations}
In this subsection, we briefly outline some useful gadget combinations that allow us to reduce the degree of a vertex and better `planarise' the interaction graph.
We do not provide an analysis as the gadgets are either parallel or series applications of gadgets already discussed and, hence, are efficient constructions.
An important gadget combination involves `localising' a vertex of high degree; this is a way of preparing said vertex for repeated application of the \textsl{Triangle} gadget.
Localising a vertex allows for specific edges to be grouped and the relevant degree of the vertex in question to be rapidly reduced, i.e., in a parallel swoop.
\cref{fig:localise_vertex} gives a diagrammatic overview of this.
The vertex \emph{type} will be discussed later.
Looped edges can be subdivided and subsequently forked using the \textsl{Triangle} gadget.
\cref{fig:forking_loops} shows how this works; with a little thought, it is clear which circumstances allow this to work.
Clearly, this also reduces the degree of the vertices in question.
Recall that two arbitrary edges cannot be forked, only specific pairs of edges.

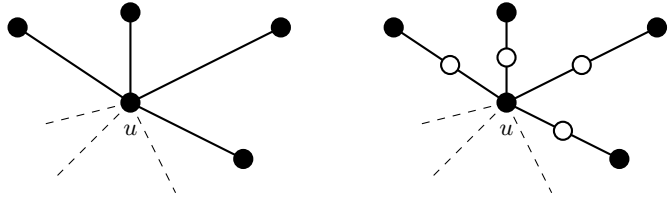
\begin{figure}[!ht]
    \centering
    \begin{tikzpicture}
        \pic{localisation};
    \end{tikzpicture}
    \caption{Localising a high degree vertex of the same type.}
    \label{fig:localise_vertex}
\end{figure}

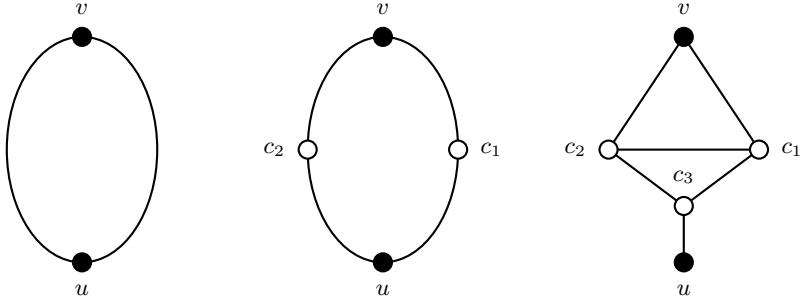
\begin{figure}[!ht]
    \centering
    \begin{tikzpicture}
        \pic{forking-loops};
    \end{tikzpicture}
    \caption{Forking a loop interaction to decrease the vertex degrees.}
    \label{fig:forking_loops}
\end{figure}

The number of edge crossings can also be reduced by subdividing that specific edge, \cref{fig:subdividing_cross_edge} demonstrates this.
Furthermore, the localisation of a single crossing can be achieved by four subdivision gadget applications.
\cref{fig:localise_crossing} shows this, and it is clear that an application of the \textsl{Cross} gadget here negates the additional edges between the system (black circle) qubits.

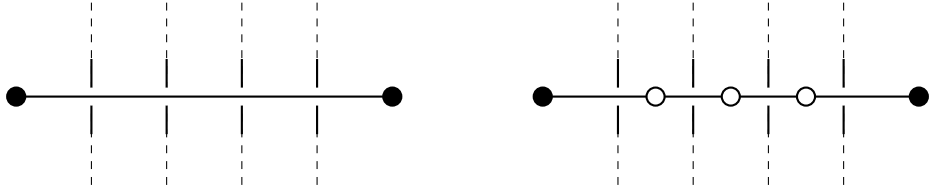
\begin{figure}[!ht]
    \centering
    \begin{tikzpicture}
        \pic{sub-cross-edge};
    \end{tikzpicture}
    \caption{Subdividing an edge to reduce the number of edge crossings.}
    \label{fig:subdividing_cross_edge}
\end{figure}

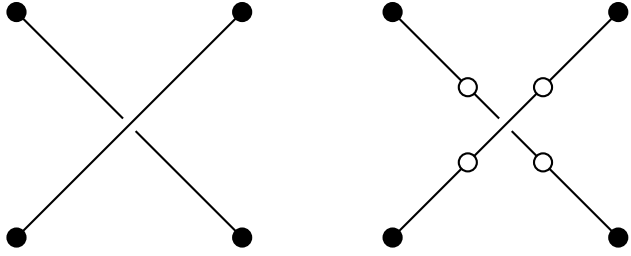
\begin{figure}[!ht]
    \centering
    \begin{tikzpicture}
        \pic{localise-cross};
    \end{tikzpicture}
    \caption{Localising a single crossing.}
    \label{fig:localise_crossing}
\end{figure}

It can be argued that a set of edges between two vertices make up a larger, parent stoquastic Hamiltonian term.
For example, a Hamiltonian term could be made of three $\rho$-matrix edges --- a $h^{00}\rho^0\rho^0$ edge, a $h^{33}\rho^3\rho^3$ edge and a $h^{20}\rho^2\rho^0 + h^{10}\rho^1\rho^0$ edge.

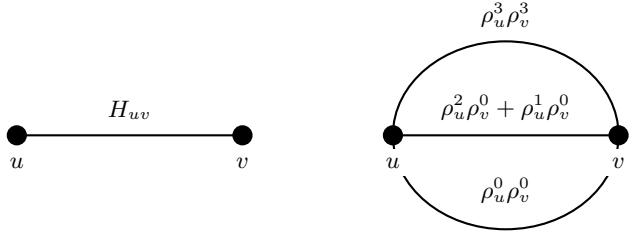
\begin{figure}[!ht]
    \centering
    \begin{tikzpicture}
        \pic{rho-decomp};
    \end{tikzpicture}
    \caption{An example interaction parent edge decomposed into three $\rho$-matrix edges.}
    \label{fig:example_rho_decomposition}
\end{figure}

We could therefore refer to the leftmost case of \cref{fig:example_rho_decomposition} as a \emph{parent edge} and the rightmost as \emph{$\rho$-matrix edges}.
The gadgets must be applied to $\rho$-matrix edges, and the degree of a vertex is defined in terms of the number of $\rho$-matrix edges.
The point of the parent edge is to demonstrate that it can always be decomposed into $O(1)$ $\rho$-matrix edges.
A spatially sparse graph has degree-$O(1)$ vertices when described using parent edges; the $\rho$-matrix equivalent also has degree-$O(1)$ vertices.
It turns out there are four unique $\rho$-matrix edges a given parent edge can be decomposed into.

In the worst-case situation there will be ten $\rho$-matrix edges that make up a parent edge.
The ten possible edges are: 
\begin{equation}\label{eq:rho_matrix_edges}
    \begin{array}{cccc}
        \rho^0_u\rho^0_v &\quad\qquad \rho^3_u\rho^0_v &\quad\qquad \rho^1_u\rho^0_v + \rho^2_u\rho^0_v &\quad\qquad  \\
        \rho^0_u\rho^3_v &\quad\qquad \rho^3_u\rho^3_v &\quad\qquad \rho^1_u\rho^2_v + \rho^2_u\rho^1_v &\quad\qquad  \rho^2_u\rho^2_v + \rho^1_u\rho^1_v \\
        \rho^0_u\rho^1_v + \rho^0_u\rho^2_v &\quad\qquad \rho^3_u\rho^1_v + \rho^3_u\rho^2_v &\quad\qquad \rho^1_u\rho^3_v + \rho^2_u\rho^3_v &\quad\qquad   
    \end{array}
\end{equation}
These are grouped into four classes.
For one vertex, in the worst-case, it is possible to reduce the degree to four.
It is clear that the terms in \cref{eq:rho_matrix_edges} allow for a minimum of degree-$3$ in the ideal worst-case situation, but this can result in a neighbouring worst-case vertex having degree-$4$.
Therefore, the worst-case situation, is a degree-$4$ vertex for all.

\begin{figure}[!ht]
    \centering
    \begin{tikzpicture}
        \pic{legal-fork};
    \end{tikzpicture}
    \caption{An example procedure showing a degree reduction of a vertex.}
    \label{fig:legal_fork_gadget}
\end{figure}
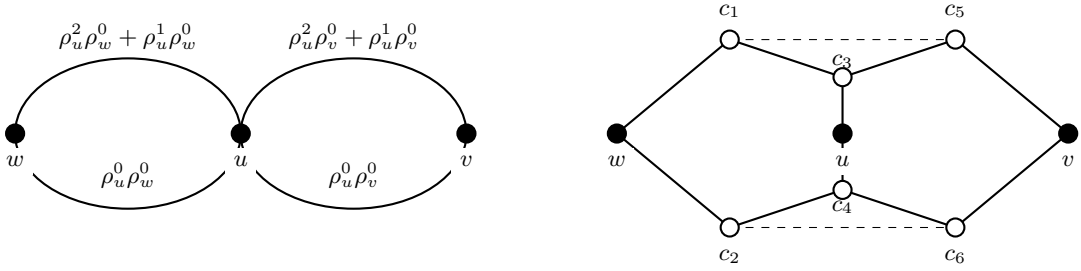

An example of a dual application of the \textsl{Fork} gadget is shown in \cref{fig:legal_fork_gadget}.
This reduces the degree of vertex $u$ by two.
It turns out that a further \textsl{Fork} gadget could be employed between $w$, $c_1$ and $c_2$ and also $v$, $c_5$ and $c_6$.
This is justified since in the original interaction edges, we had $\rho^0_w$ and $\rho^0_v$ terms.
For the example shown, the degree of each target vertex can be reduced to two.

\subsection{Stoquastic Hamiltonians on Lattice Geometries}
\planargraph

\begin{proof}
    The proof of this Theorem is analogous to that of~\cite[Lemma 2]{OT08}.
    Given a $2$-local stoquastic Hamiltonian on a spatially sparse graph, we know that each vertex is degree-$O(1)$ (with respect to parent edges).
    Decompose each parent edge into the appropriate $\rho$-matrix edges, leaving each vertex as degree-$O(1)$.
    This process can be done efficiently.
    We employ the localisation procedure by subdividing each edge.
    In doing so, we are preparing the vertices' degrees to be reduced.
    This can be done in parallel.

    Once the localisation has been done, we can group like-edges, i.e., into four groups.
    For each group, we apply the \textsl{Fork} gadget in parallel $O(\log d)$ times, which can be done efficiently.
    This reduces the degree of each group to $\lceil d_{\text{group}}/2 \rceil$.
    Repeating the \textsl{Triangle} process $O(1)$ times allows for each group to be at most degree-$1$.
    This leaves each vertex as degree-$4$.
    Up to this point in the process, we have required $O(1)$ reduction steps.

    We then proceed to reduce the number of crossings using the \textsl{Subdivision} and \textsl{Cross} gadgets.
    For $c=O(1)$ crossings, we require $O(\log c)$ repetitions of this procedure.
    This is done in parallel and can be done efficiently.
    The resulting graph is planar with each vertex of degree at most $4$ and has a straight-line drawing in the plane.
\end{proof}

\thrmTwoLSHsquarelattice

\begin{proof}
    A planar graph $G = (V,E)$ with vertices of degree at most $4$, all edges being straight-lines of length $O(1)$ and having $\Omega(1)$ angular separation with other edges can be represented on a $2$D square lattice, $\Lambda = (\V,\E)$.
    We embed the planar graph in the square lattice such that
    \begin{enumerate}
        \item each vertex $u\in V$ is mapped to a lattice site $\phi(u)\in \V$ within the boundary \\ $\partial\Lambda = [-O(\abs{V}), O(\abs{V})]^2$,
        \item each edge $\{u,v\}\in E$ is mapped to a lattice path $\phi(\{u,v\})\in \E$ of length $O(1)$ between $\phi(u)$ and $\phi(v)$ without crossing any other lattice path.
    \end{enumerate}
    The \textsl{Subdivision} gadget can be used to map the edges $\{u,v\}$ to lattice paths $\phi(\{u,v\})$.
    We must ensure that the lattice paths remain close to the original edges they are associated with.
    For a sufficiently fine grid, the paths will not cross outside a constant-sized square about each vertex.
    \cref{fig:square_lattice} demonstrates how a planar graph in this regime can be embedded into a square lattice.
    Since the edges are $O(1)$-length, the \textsl{Subdivision} gadget need only be used $O(1)$ times per edge to get the lattice path.
    Therefore, there is an efficient embedding of a $2$-local stoquastic Hamiltonian with a planar interaction graph to a $2$-local stoquastic Hamiltonian with a square-lattice interaction graph.
    Since the perturbation gadgets defined above are used, we can conclude that the low-energy subspaces of each Hamiltonian are close.
\end{proof}

\begin{figure}[!ht]
    \centering
    \begin{tikzpicture}
        \pic{planar-to-square};
    \end{tikzpicture}
    \caption{A planar graph of degree at most $4$ embedded on a $2$D square lattice.
                The grey squares represent small regions where paths are rerouted to avoid crossing.}
    \label{fig:square_lattice}
\end{figure}
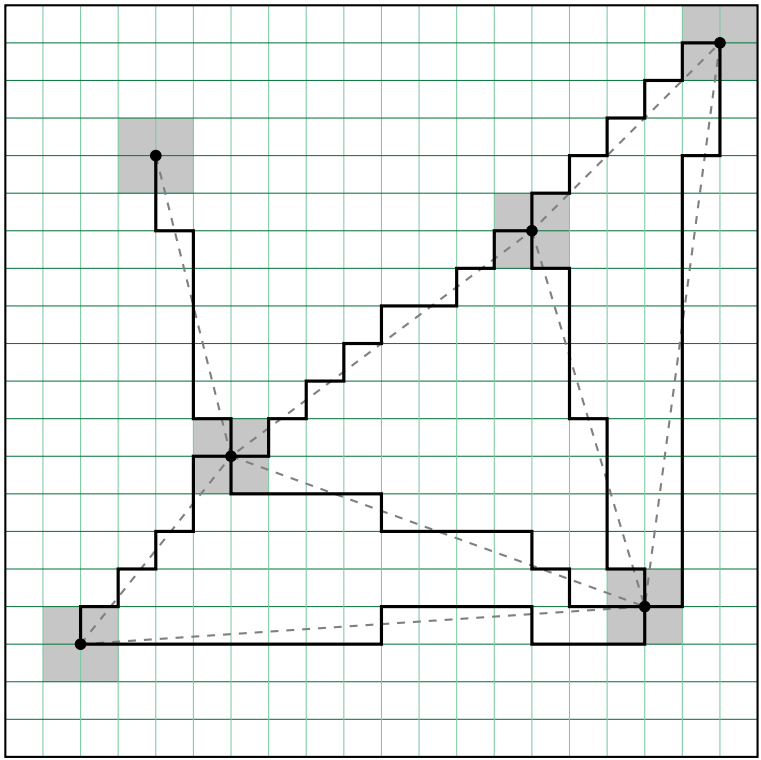

\cref{thrm:2LSH_square_lattice} can also be applied to the triangular lattice.
The proof is exactly the same and justified since the vertices in a triangular lattice are degree $6$.
Hence the following corollary is straightforwardly concluded.

\begin{corollary}
    The \sc{$2$-Local Stoquastic Hamiltonian} problem on a $2$D triangular lattice is \clw{StoqMA}{complete}.
\end{corollary}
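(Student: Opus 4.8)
The plan is to obtain the corollary directly from \cref{thrm:2LSH_square_lattice} by observing that the $2$D square lattice embeds into the $2$D triangular lattice. Completeness has two parts. Containment in \cl{StoqMA} is immediate and geometry-independent: any $2$-local stoquastic Hamiltonian is already in \cl{StoqMA} by the general argument of \refcite{BBT06} recalled earlier, and a triangular-lattice Hamiltonian is a special case. So the only claim to establish is \clw{StoqMA}{hardness} under the triangular-lattice restriction.

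For hardness, the key structural fact is that a triangular lattice is combinatorially a square lattice with one extra diagonal edge added per unit cell; equivalently, the square lattice is a spanning subgraph of the triangular lattice on the same set of sites. I would therefore take the square-lattice instance produced by the reduction of \cref{thrm:2LSH_square_lattice}, place each of its $2$-local stoquastic terms on the corresponding triangular-lattice sites, and assign the trivial (identity) interaction to every additional diagonal edge. Identity terms are diagonal, hence stoquastic, and contribute zero to every matrix element, so the resulting triangular-lattice Hamiltonian is stoquastic, has exactly the same spectrum as the square-lattice Hamiltonian, and preserves the promise gap verbatim. Thus the reduction of \cref{thrm:2LSH_square_lattice} is simultaneously a valid reduction onto the triangular lattice.

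Equivalently --- and this is the form suggested by the degree count in the surrounding text --- one may rerun the geometric embedding directly. Starting from the degree-$4$ planar graph with straight-line drawing guaranteed by the planar-graph reduction, I would map its vertices to triangular-lattice sites and route each edge along a lattice path of length $O(1)$ using $O(1)$ applications of the \textsc{Subdivision} gadget, exactly as in the square-lattice embedding. The triangular lattice is only more accommodating: each site has degree $6 \geq 4$, comfortably meeting the degree constraint, and the additional edge directions give extra freedom when rerouting paths to avoid crossings inside the constant-sized squares around each vertex. Since only $O(1)$ stoquasticity-preserving subdivision gadgets are used per edge, with $(\eta,\epsilon)$-error controlled by \cref{lma:eigenvalue_simulation}, the low-energy spectra remain close and the construction stays efficient.

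There is essentially no technical obstacle here; all the real work lives in the already-proven square-lattice result. The only point needing a line of care is the verification that padding with identity edges preserves both stoquasticity and the spectrum --- both immediate, as noted above. The corollary then follows.
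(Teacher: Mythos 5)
Your proposal is correct, but your lead argument is genuinely different from --- and simpler than --- the paper's. The paper proves this corollary by your second route only: it observes that the embedding argument of \cref{thrm:2LSH_square_lattice} goes through verbatim because triangular-lattice vertices have degree $6 \geq 4$, so the degree-$4$ planar graph can again be routed into the lattice with $O(1)$ \textsc{Subdivision} gadgets per edge. Your primary route --- noting that the square lattice is a spanning subgraph of the triangular lattice (one extra diagonal per unit cell) and padding the unused edges with identity terms --- sidesteps the embedding machinery entirely: it introduces no new perturbative gadgets, incurs zero additional simulation error, and preserves the promise gap exactly rather than up to the $(\eta,\epsilon)$ bookkeeping of \cref{lma:eigenvalue_simulation}. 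The only point it leans on is that the \sc{Local Stoquastic Hamiltonian} problem on a lattice requires the interaction graph to be a \emph{subgraph} of the lattice, so absent or identity couplings on the extra diagonals are permissible; that is the standard convention here and is consistent with how the paper uses lattice restrictions, so the padding argument is sound. What the paper's re-embedding route buys instead is uniformity: it shows the same geometric construction works for any lattice accommodating degree-$4$ straight-line routings, which is the framing the paper wants for its broader discussion of lattice geometries, whereas your subgraph argument is specific to pairs of lattices where one contains the other. Both are valid; yours is the more economical proof of this particular corollary.
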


Unlike work that considers a variety of lattice geometries~\cite{PM17}, the stoquastic Hamiltonians considered here are restricted to the square and triangular lattices due to the minimum planar graph degree being four.
If it were possible to reduce the degree of a vertex to three, then the Hamiltonian would be \clw{StoqMA}{complete} on a broader range of geometries.
We leave this as an open problem.

Our findings have shown that the complexity of the \sc{Local Stoquastic Hamiltonian} problem is retained even when restricted to lattice geometries such as the square and triangular lattices.
We have shown how to construct stoquastic perturbation gadgets, requiring a constant number of such, to reduce a local stoquastic Hamiltonian on a spatially sparse lattice to a planar graph where each vertex is of degree at most four.
A subsequent efficient embedding of the planar graph into a square lattice reveals that the \sc{Local Stoquastic Hamiltonian} problem remains \clw{StoqMA}{complete}.
In the following section, we discuss a more restrictive class of stoquastic Hamiltonians, namely those constructed from combinations of Pauli operators.
This is a logical step to make as there is a wealth of gadgets already known for Pauli interactions~\cite{BL08,PM17,CM16} and therefore, we can study their complexity using such.

\section{Stoquastic Pauli Hamiltonians}\label{sec:stoq-pauli-ham}
It is natural to try to consider a Pauli decomposition of a stoquastic Hamiltonian in hopes of applying the original techniques of \refcite{OT08}.
However, it is not hard to see that certain Pauli terms are not stoquastic.
This causes a significant problem when using perturbative gadgets --- they must be stoquastic for local stoquastic Hamiltonians.
We can, therefore, ask the question: \emph{what kinds of Pauli Hamiltonians are stoquastic}? 
Interestingly, it has been shown to be \clw{coNP}{hard} to decide whether a given Hamiltonian is stoquastic and $\varSigma^{\cl{p}}_2$-hard to decide whether a given Hamiltonian is stoquastic under single-qubit unitary transformations~\cite{IPMKT22}.
However, the problem of deciding whether a given Hamiltonian is stoquastic is not of interest here.
Recall a term-wise stoquastic Hamiltonian is one where each local Hamiltonian term is stoquastic.
This definition is less general than that of Ref.~\cite[Definition 2]{IPMKT22} since we are specifically interested only in the computational basis stoquasticity (\cref{def:stoq_ham}).
The types of local Hamiltonians we have considered thus far are term-wise stoquastic.

We now consider and formally define \emph{Pauli-term-wise} stoquastic Hamiltonians.
It is well-known the Pauli matrices form a basis for $2\times 2$ Hermitian matrices.
Given a $2$-local stoquastic Hamiltonian term, if we were to decompose the term into a sum of $2$-local Pauli operators, the resulting sub-Hamiltonian terms would not be term-wise stoquastic in general.
For example, consider the following stoquastic term as a sum of Pauli operators,
\begin{equation*}
    \begin{bmatrix}
        0 & 0 & -3 & 0 \\
        0 & 0 & 0 & -1 \\
        -3 & 0 & 0 & 0 \\
        0 & -1 & 0 & 0
    \end{bmatrix} = -2X I - XZ.
\end{equation*}
The term $XZ$ is not stoquastic and can never be made so\footnote{With respect to the computational basis.}.
If a stoquastic Hamiltonian \emph{can} be decomposed into a sum of stoquastic Pauli terms, then we say the Hamiltonian is \emph{Pauli-term-wise stoquastic}.

\begin{definition}\label{def:pauli-term-wise_stoq_ham}
    A local Hamiltonian is \emph{Pauli-term-wise stoquastic} if each local Hamiltonian term is stoquastic and each term is a single local Pauli interaction.
\end{definition}

The set of allowed single $2$-local Pauli terms that leave a Hamiltonian term stoquastic are $\mathcal{S}_2 \coloneqq \{ I, -X,Z,$ $ZZ, -XX\}$; this idea can be generalised to higher degrees of locality.
Of course, the terms comprised of Pauli-$X$ matrices must have an associated non-negative coefficient.
If we slightly alter \cref{def:pauli-term-wise_stoq_ham} to allow for a grouped sum of Pauli operators, then such terms as $-(XX+YY)$ (representing a `single' Pauli term) are stoquastic.
We could now define a \emph{grouped Pauli-term-wise stoquastic} Hamiltonian in the same way as \cref{def:pauli-term-wise_stoq_ham}; however, the scope of possible terms here exceeds that of $\mathcal{S}_2$ and so it makes more sense to classify such Hamiltonians as term-wise stoquastic.
Example interaction terms that live in this family include the antiferromagnetic $XY$-interaction and Heisenberg interactions, i.e., $-(XX+YY)$ and $-(XX+YY+ZZ)$.

Pondering which known $2$-local Hamiltonians admit a special case of \cref{def:pauli-term-wise_stoq_ham} --- the transverse field Ising model is one such example where it and the antiferromagnetic variant are \clw{StoqMA}{complete}~\cite{BH16, PM17}.
Taking a closer look at $\mathcal{S}_2$, it bares resemblance with the $ZZXX$-Hamiltonian presented by Biamonte and Love~\cite{BL08}, except there is now a condition on the $XX$ and $X$ terms.
We denote this as the (\textsc{x}\textsuperscript{$-$}--~\textsc{z}/\textsc{x}\textsuperscript{$-$}--~\textsc{z})-Hamiltonian --- the parent $2$-local Pauli-term-wise stoquastic Hamiltonian:

\begin{equation}\label{eq:parent_2LSPH}
    H = \sum_{\{u,v\}\in \E(G)} J_{uv}^{^{(\leq 0)}} X_uX_v + L_{uv}Z_uZ_v + \sum_{u\in \V(G)} f_u^{^{(\leq 0)}} X_u + h_uZ_u.
\end{equation}
The coefficients $L,h\in\mathbb{R}$ and $J,f \in\mathbb{R}^-_0$.
Up to an overall constant, the parent $2$-local stoquastic Pauli interaction term follows,
\begin{equation*}
    \begin{bmatrix}
        h_{u} + L_{uv} + h_{v} & f_{v}^{^{(\leq 0)}} & f_{u}^{^{(\leq 0)}} & J_{uv}^{^{(\leq 0)}} \\
        f_{v}^{^{(\leq 0)}} & h_{u} - L_{uv} - h_{v} & J_{uv}^{^{(\leq 0)}} & f_{u}^{^{(\leq 0)}} \\
        f_{u}^{^{(\leq 0)}} & J_{uv}^{^{(\leq 0)}} & - h_{u} - L_{uv} + h_{v} & f_{v}^{^{(\leq 0)}} \\
        J_{uv}^{^{(\leq 0)}} & f_{u}^{^{(\leq 0)}} & f_{v}^{^{(\leq 0)}} & - h_{u} + L_{uv} - h_{v}
    \end{bmatrix}.
\end{equation*}
Clearly, this heavily restricts the types of Hamiltonians we can consider.
A general $2$-local stoquastic Hamiltonian term (i.e., over two qubits) will have ten degrees of freedom.
A parent Hamiltonian term has only six degrees of freedom.
Not all possible diagonal terms can be created using combinations of $L$ and $h$.
We aim to prove such Hamiltonians are \clw{StoqMA}{complete}.
Notice that the Ising model and transverse field Ising model are special cases of the parent Hamiltonian.

\begin{theorem}
    The \textnormal{(\textsc{x}\textsuperscript{$-$}--~\textsc{z}/\textsc{x}\textsuperscript{$-$}--~\textsc{z})}-Hamiltonian is \clw{StoqMA}{complete}.
\end{theorem}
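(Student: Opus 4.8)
The argument has two halves: membership in \cl{StoqMA}, which is immediate, and \clw{StoqMA}{hardness}, which is the substance. For membership, every term appearing in \cref{eq:parent_2LSPH} lies in $\mathcal{S}_2 = \{\mathbb{I}, -X, Z, ZZ, -XX\}$ and is therefore stoquastic: the diagonal generators $Z$ and $ZZ$ are harmless for any real coefficient, while the constraints $J,f \in \mathbb{R}^-_0$ make the off-diagonal $X$ and $XX$ contributions non-positive. Hence the \textnormal{(\textsc{x}\textsuperscript{$-$}--~\textsc{z}/\textsc{x}\textsuperscript{$-$}--~\textsc{z})}-Hamiltonian is term-wise stoquastic in the sense of \cref{def:term-wise_stoq_ham}, so it is a special case of the \sc{$2$-Local Stoquastic Hamiltonian} problem and lies in \cl{StoqMA} by \refcite{BBT06}.

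For hardness I would reduce from the \sc{$2$-Local Stoquastic Hamiltonian} problem, which is \clw{StoqMA}{complete} (e.g.\ on the square lattice by \cref{thrm:2LSH_square_lattice}). The plan is to show that an arbitrary $2$-local stoquastic interaction can be $(\eta,\epsilon)$-simulated, with $\eta,\epsilon = O(1/\poly{n})$, by a gadget every one of whose \emph{physical} terms belongs to the parent family $\mathcal{S}_2$. First decompose each interaction into its $\rho$-matrix edges as in \cref{eq:rho_matrix_edges}. The four diagonal edges $\rho^0\rho^0,\rho^3\rho^0,\rho^0\rho^3,\rho^3\rho^3$ are linear combinations of $\mathbb{I}, Z_u, Z_v, Z_uZ_v$ and so are written directly using $Z$ and $ZZ$ with no gadget at all. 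The off-diagonal edges are the difficulty and they split further: the pure flip $-X_uX_v$ and the single-site flips $-X_u,-X_v$ are already parent-family, but edges such as $\rho^1_u\rho^0_v+\rho^2_u\rho^0_v = X_u\rho^0_v$ and $\rho^2_u\rho^2_v+\rho^1_u\rho^1_v$ are not, since expanding them in Paulis produces forbidden $X_uZ_v$- or $Y_uY_v$-type pieces.

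The key step is to build each such edge perturbatively using the design principle of \cref{sec:stoq-geo-gadgets}: introduce a mediator $c$ with unperturbed gap $\Delta$, drive the low-energy subspace with a $V_{\mathrm{main}}$ modelled on \cref{eq:subdivision_Vmain}, and read off the induced edge from $-V_{-+}H^{-1}V_{+-}$ via \cref{lma:second_order_reduction} and \cref{eq:Heff_form}. The novelty is that the physical couplings are now restricted to $-X$, $-XX$, $Z$ and $ZZ$. The crucial observation is that a mediator's excitation gap can be made to depend on a neighbour's computational-basis state purely through diagonal $ZZ$ couplings --- a penalty $\propto \ketbra{1}_v\otimes\ketbra{1}_c$ is a combination of $\mathbb{I}, Z_v, Z_c, Z_vZ_c$ --- while the off-diagonal drive $-X_uX_c$ stays parent-family; the resulting second-order process flips $u$ with an amplitude switched on only for the desired state of $v$, synthesising $X_u\rho^0_v$ in the \emph{effective} Hamiltonian without ever writing the non-stoquastic $X_uZ_v$ physically. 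An analogous mediator with an $XX$ drive on both endpoints and a $ZZ$-gated gap realises the $\rho^2_u\rho^2_v+\rho^1_u\rho^1_v$ edge that would otherwise demand $Y_uY_v$. This mirrors the universality of the $ZZXX$ model of \refcite{BL08}; the present content is that when the target is stoquastic the synthesised couplings can be chosen with the non-positive signs demanded by \cref{def:pauli-term-wise_stoq_ham}. Since a $2$-local term has only $O(1)$ $\rho$-edges and each gadget contributes $O(1/\poly{n})$ error, the composition results of \cref{app:parallel} and \cref{app:composition} allow all gadgets to be applied in parallel across the Hamiltonian while preserving the \cl{StoqMA} promise gap.

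The main obstacle is exactly this conditional-$X$ synthesis. The parent family supplies only the unconditioned generators $-X$ and $-XX$, which couple the six off-diagonal entries of a two-qubit block in three equal pairs, whereas a general stoquastic block needs all six independent and non-positive --- equivalently it needs the $Y$-flavoured and diagonally-dressed $\rho$-edges whose Pauli decompositions ($X_uZ_v$, $Z_uX_v$, $Y_uY_v$) can never be made stoquastic for either sign. One must show these survive as effective interactions produced by mediator gadgets whose physical terms stay inside $\mathcal{S}_2$, verify that every intermediate object (unperturbed Hamiltonian, perturbation and perturbed Hamiltonian) is stoquastic, and check --- as in \cref{app:parallel} --- that no non-stoquastic cross terms appear at second order when many gadgets act simultaneously. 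Establishing this conditional-flip construction with the correct non-positive signs on all $X$ and $XX$ couplings is the technical heart; the remaining bookkeeping (diagonal edges, error composition, and noting that the Ising and transverse-field Ising models arise as special cases) is routine.
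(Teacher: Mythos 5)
Your membership argument matches the paper's, but for hardness you have taken a far harder route than the statement requires, and that route is not actually carried out. The paper's hardness proof is a two-line containment argument: the transverse-field Ising model is already \clw{StoqMA}{complete} by \refcite{BH16}, and its interaction set is a subset of the interactions allowed in \cref{eq:parent_2LSPH}, so the larger family inherits hardness immediately. Because the theorem asserts hardness of a \emph{family} that contains a known-hard subfamily, there is no need to simulate an arbitrary $2$-local stoquastic interaction using only terms from $\mathcal{S}_2$ --- which is the entire technical burden you set yourself.

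The gap in your proposal is exactly the step you label ``the technical heart'': the conditional-flip gadget is never constructed, and the sketch you give does not work as stated. If the mediator's penalty is $\Delta\,\ketbra{1}_v\otimes\ketbra{1}_c$, the low-energy subspace is no longer characterised by $\ket{0}_c$ (states with $c=1$, $v=0$ are unpenalised), so the hypothesis $(V_{{\rm main}})_- = 0$ of \cref{lma:second_order_reduction} fails and the drive $\sqrt{\Delta}\,(-X_uX_c)$ survives at first order, diverging as $\Delta$ grows. Even with a state-independent gap dressed by a diagonal $ZZ$ shift, a single drive $-X_uX_c$ contributes $X_u H^{-1} X_u \propto \mathbb{I}_u$ at second order rather than an $X_u$ conditioned on $v$: in the subdivision mechanism the target edge arises only from \emph{cross} terms between two distinct operators inside $\chi$, and arranging such cross terms to yield $X_u\rho^0_v$ or $\rho^2_u\rho^2_v+\rho^1_u\rho^1_v$ while every physical coupling stays inside $\mathcal{S}_2$ is precisely the kind of simulation the paper identifies as an obstacle and leaves open in the geometric setting. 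Completing your reduction would give a genuinely stronger result (e.g.\ a geometric version of this theorem), but as written it is not a proof of the stated claim, whereas the claim itself follows trivially from the known completeness of the transverse-field Ising model.
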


\begin{proof}
    The transverse field Ising model is \clw{StoqMA}{complete}~\cite[Theorem 1]{BH16}.
    The set of interactions that make the transverse field Ising model are subset of the (\textsc{x}\textsuperscript{$-$}--~\textsc{z}/\textsc{x}\textsuperscript{$-$}--~\textsc{z})-Hamiltonian interaction set; therefore the (\textsc{x}\textsuperscript{$-$}--~\textsc{z}/\textsc{x}\textsuperscript{$-$}--~\textsc{z})-Hamiltonian is \clw{StoqMA}{hard}.
    To show the Hamiltonian is in \cl{StoqMA}, we simply conclude that we have a $2$-local stoquastic Hamiltonian and, by definition, is in \cl{StoqMA}.
    Thus, the (\textsc{x}\textsuperscript{$-$}--~\textsc{z}/\textsc{x}\textsuperscript{$-$}--~\textsc{z})-Hamiltonian is \clw{StoqMA}{complete}.
\end{proof}

To consider a geometric restriction to the (\textsc{x}\textsuperscript{$-$}--~\textsc{z}/\textsc{x}\textsuperscript{$-$}--~\textsc{z})-Hamiltonian would prove difficult.
Firstly, arguments constructed in prior work~\cite{OT08,CM16} fail due to the inability to simulate the $2$-local terms like $-XX$ and $ZZ$ using only those of $\mathcal{S}_2$.
In the original situation~\cite{BL08}, we are able to construct appropriate reductions given that the stoquastic restriction is not present.
Secondly, the gadgets presented above are insufficient as we must use only terms form $\mathcal{S}_2$.
We leave this as an open problem but conjecture that it is \clw{StoqMA}{complete}.

There are obvious restrictions to \cref{eq:parent_2LSPH} that are stoquastic.
For example, we can consider the following,
\begin{align}
    H(s) &= \sum_{\{u,v\}\in \E(G)} J_{uv}^{^{(\leq 0)}} X_uX_v + L_{uv}(s)Z_uZ_v + \sum_{u\in \V(G)} f_u^{^{(\leq 0)}} X_u + h_u(s)Z_u,\label{eq:parent_pauli_s}\\
    H &= \sum_{\{u,v\}\in \E(G)} J_{uv}^{^{(\leq 0)}} X_uX_v + L_{uv}Z_uZ_v,\notag
\end{align}
where $L(s)$ and $h(s)$ are sign-restricted parameters.
Additionally, motivated by those Pauli interactions that when grouped are stoquastic, we can consider the following,
\begin{align}
    H &= \sum_{\{u,v\}\in \E(G)} J_{uv}^{^{(\geq 0)}} \left(-\alpha X_uX_v + \gamma Z_uZ_v \right),\label{eq:pseduo_pauli_one}\\
    H &= \sum_{\{u,v\}\in \E(G)} J_{uv}^{^{(\leq 0)}} \left(\alpha X_uX_v + \gamma Z_uZ_v \right),\label{eq:pseduo_pauli_two}
\end{align}
where $\gamma \in \mathbb{R}$ and $\alpha \in \mathbb{R}_0^+$.
Since all examples are stoquastic we can determine that they must lie in \cl{StoqMA}.
The difficulty, of course, lies with proving hardness.
The grouped Pauli stoquastic Hamiltonians of \cref{eq:pseduo_pauli_one} and \cref{eq:pseduo_pauli_two} are of a similar format to those Hamiltonians considered by Cubitt, Piddock and Montonaro~\cite{CM16,PM17}.
Using their notation we would say \cref{eq:pseduo_pauli_one} is the $\{-\alpha XX + \gamma ZZ\}^+$-Hamiltonian and \cref{eq:pseduo_pauli_two} is the $\{\alpha XX + \gamma ZZ\}^-$-Hamiltonian.

It is likely some of the above Hamiltonians are \clw{StoqMA}{complete}, however, it would not be surprising if an efficient algorithm could be given to find the ground state.
For \cref{eq:parent_pauli_s} when $\text{sgn}(s)=1$ the Hamiltonian has antiferromagnetic $Z$ interactions.
It is known the antiferromagnetic transverse field Ising model is \clw{StoqMA}{complete}~\cite{PM17}.
Notice that conjugating \cref{eq:parent_2LSPH} by $X$ on all qubits where $h_u < 0$ leaves the sign of the $XX$, $ZZ$ and $X$ terms unchanged but flips the sign of the targeted $Z$ terms.
So we can assume that $h_u \geq 0$ for all $u\in \V(G)$.
A similar argument could be given for the $f_u$ terms via $Z$ conjugation on the associated qubits.
Therefore, since antiferromagnetic transverse field Ising interaction can simulate general version~\cite[Theorem 5]{PM17}, we can conclude that \cref{eq:parent_pauli_s} is \clw{StoqMA}{complete}.
The case where $\text{sgn}(s)=-1$ is more difficult to analyse.
It is known the ferromagnetic transverse field Ising model is in \cl{BPP}; however, $H(-1)$ includes more terms.

Indeed it is known from Bravyi and Gosset that a variation of the ferromagnetic $XY$-Hamiltonian has a \cl{FPRAS} for the partition function~\cite{BG17,TheHamiltonianJungle}.
This result implies that the $\{-XX+\beta YY\}^+$-Hamiltonian lies in \cl{BPP} for a choice of $\beta \in [-1,1]$~\cite{PiddockThesis}.
By the global relabelling property, we can similarly deduce that \cref{eq:pseduo_pauli_one}, for a choice of $\alpha = 1$ and $\gamma \in [-1,1]$, is in \cl{BPP}.

\section{Conclusion}
In this work, we considered the complexity of the \sc{Local Stoquastic Hamiltonian} problem on lattice geometries.
While the general \sc{$2$-Local Hamiltonian} problem has been shown to be \cl{QMA}-complete on a square lattice~\cite{OT08}, the complexity of the $2$-local stoquastic Hamiltonian on a square lattice had not been established as \clw{StoqMA}{complete}.

We demonstrated that the problem of determining the ground-state energy of a $2$-local stoquastic Hamiltonian on a square lattice is indeed \clw{StoqMA}{hard}.
We achieved this by developing perturbative gadget techniques, to reduce general $2$-local stoquastic Hamiltonians to $2$-local ones on a square lattice, analogous to the work by Oliveira and Terhal~\cite{OT08}.
Additionally, we discussed a special case of the adiabatic Hamiltonians proposed by Biamonte and Love~\cite{BL08} and showed that it is also \clw{StoqMA}{complete}.

\begin{result*}[\cref{thrm:2LSH_square_lattice}]
    The \sc{$2$-Local Stoquastic Hamiltonian} problem on a $2$D square qubit lattice is \clw{StoqMA}{complete}.
\end{result*}

Future research could explore Hamiltonians that are known to be stoquastic but whose ground-state energy problems are not yet classified as \clw{StoqMA}{complete}.
The standard example is the antiferromagnetic Heisenberg Hamiltonian~\cite{PM17}.
Further investigation into less restrictive guiding state constructions for stoquastic Hamiltonians, beyond the original formulation by Bravyi~\cite{Bravyi15}, could also provide valuable insights.
For example, it is known that \cl{StoqMA} with an `easy witness' (\cl{eStoqMA}) is equivalent to \cl{MA} and hence the resulting problem --- the \sc{Local Stoquastic Hamiltonian} problem with an easy witness ground state is known to be \clw{MA}{complete}~\cite{Liu21}.
Could it be possible to consider the class \cl{SStoqMA}, where the `\cl{S}' denotes `subset-state witness' analogous to the work of Grilo \emph{et al}.~\cite{GKS15} concerning \cl{SQMA}, and subsequently the complexity of the \emph{Guided} \sc{Local Stoquastic Hamiltonian} problem\footnote{Not to be confused with the variant under a similar name proposed by Bravyi~\cite{Bravyi15}}~\cite{GLG22,CFGHLGMW23}? If it can be shown that \cl{SStoqMA} is equivalent to \cl{StoqMA}, then it might follow that the Guided \sc{Local Stoquastic Hamiltonian} problem is tractable in some sense.

\subparagraph{Merlin's power.} An open problem we briefly discuss concerns understanding the extent of computational power Merlin provides.
Typically, Merlin supplies a witness state $\ket{\xi}$ and Arthur, polynomially bounded in resources, prepares a circuit to verify the input.
\cref{rmk:Merlins_message} gives a condensed tuple representation of the components Merlin and Arthur have for a standard \cl{StoqMA} circuit.
By constraining Arthur's ancillae qubits, we can explore how much computational assistance Merlin must contribute, impacting the power of the verification process. 
Different configurations --- such as Merlin supplying fractions of ancilla qubits or even circuit components --- highlight how Merlin's support affects the verifier's power.
This may have implications for communication complexity and resource-limited quantum verification.

\acknowledgments
GW would like to thank Karl Lin, Ryan Mann and Samuel Elman for helpful discussions and feedback. 
We would also like to thank the anonymous reviewers for their insightful comments that helped improve this manuscript.
GW and MJB were supported by the ARC Centre of Excellence for Quantum Computation and Communication Technology (CQC2T), project number CE170100012. 
GW was supported by a scholarship from the Sydney Quantum Academy. 
This material is also in part based upon work by MJB supported by the Defense Advanced Research Projects Agency under Contract No. HR001122C0074.

\bibliographystyle{quantum}
\bibliography{ref}

\onecolumn
\appendix

\section{Parallel Gadget Application}\label{app:parallel}

When employing perturbation gadgets, it is important to ensure they do not cross-interfere and create unwanted interaction terms.
For each gadget considered in this work, no construction, to the appropriate order, suffers any unwanted effects.
The fourth-order perturbation terms used to reduce a general $3$-local stoquastic Hamiltonian to a special $3$-local one do incur cross-gadget terms, however, such contributions can be ignored in a certain regime~\cite{KKR06,BDOT06}.
We do not discuss that here but instead show that the $3$-to-$2$-local perturbation gadgets do not suffer from cross-gadget terms when applied in parallel.
Additionally, the subdivision and geometric gadgets do not suffer from cross-gadget terms when applied in parallel.
The arguments are essentially universal for all cases and so we start with the simpler case.

\subsection{Subdivision and Geometrical Gadgets}\label{sec:app_subdivision_geometrical_gadgets}
The gadgets, \textsl{Subdivision}, \textsl{Cross} and \textsl{Fork}, use the same unperturbed penalty Hamiltonian.
For a series of edges that are acted upon by these perturbation gadgets in parallel, we must show that there are no cross-gadget terms to second-order.
The unperturbed penalty Hamiltonian is given by
\begin{equation*}
    H_0 = \sum_{i=1}^{p} H_0^{(i)}= \sum_{i=1}^{p} \Delta_i \ketbra{1}_{c_i},
\end{equation*}
where $c_i$ denotes the $i$-th mediator qubit $c$.
Note that we have slightly departed from the original notation in that each $H_0^{(i)}$ contains a parameter $\Delta_i$ that can be chosen independently.
Clearly, the low-energy space of $H_0$ is $\ket{0^p}$ and the high-energy space is given by states with Hamming weight at least $1$.
Clearly, $\Pi_- = \ketbra{0^p}$ and $\Pi_+ =  I - \Pi_-$.
The perturbation term is 
\begin{align*}
    V &= V_{{\rm main}} + V_{{\rm main}}, \\
    V_{{\rm main}} &= \sum_{i=1}^p \sqrt{\Delta_i}\; V_{{\rm main}}^{(i)},\\
    V_{{\rm extra}} &= \sum_{i=1}^p V_{{\rm extra}}^{(i)}.
\end{align*}
We define each $V_{{\rm main}}^{(i)}$ to only act non-trivially on the $i$-th mediator qubits and the relevant edge's target qubits.
Let $V_{{\rm extra}}^{(i)}$ act only on the relevant system qubits, i.e., not the mediator qubits.
The first-order perturbation term shows 
\begin{equation*}
    V_{-} = \left(V_{{\rm extra}}\right)_{-} = \sum_{i=1}^p \left(V_{{\rm extra}}^{(i)}\right)_{-},
\end{equation*} 
assuming that all first-order projection terms of $V_{{\rm main}}$ are zero.
Since $V_{{\rm extra}}$ terms only act on system qubits, we can conclude the cross-projected terms vanish.
We must show $V_{-+}(H_0)^{-1}V_{+-}$ produces no cross gadget terms.
By definition
\begin{equation*}
    V_{-+}(H_0)^{-1}V_{+-} = \sum_{i,j,k} \left(\sqrt{\Delta_i}\; V_{{\rm main}}^{(i)}\right)_{-+} (\Delta_j)^{-1}\ketbra{1}_{c_j} \left(\sqrt{\Delta_k}\; V_{{\rm main}}^{(k)}\right)_{+-}.
\end{equation*}
We know each $V_{{\rm main}}^{(i)}$ acts non-trivially on $c_i$ and the relevant edge's target qubits.
Therefore, the low-energy state $\ket{0}_{c_i}$ can only be excited by a contribution from $\left(V_{{\rm main}}^{(i)}\right)_{+-}$.
Due to orthogonality 
\begin{equation*}
    \sum_{i,j,k} \left(\sqrt{\Delta_i}\; V_{{\rm main}}^{(i)}\right)_{-+} (\Delta_j)^{-1}\ketbra{1}_{c_j} \left(\sqrt{\Delta_k}\; V_{{\rm main}}^{(k)}\right)_{+-} = \sum_{i} \left(V_{{\rm main}}^{(i)}\right)_{-+} \ketbra{1}_{c_i}\left(V_{{\rm main}}^{(i)}\right)_{+-}.
\end{equation*}
Clearly, there are no cross-gadget terms to second-order.

\subsection{3-to-2-Local Gadgets}
When reducing a $3$-local stoquastic Hamiltonian to a $2$-local one, the specific gadget is used in parallel across all interaction edges in $\Omega_3$.
We must show that to second-order there are no cross-gadget terms.
This perturbation technique uses a third-order construction.
The first-order and second-order corrections simply create energy shifts.
The third-order terms are the ones that create the new interaction terms.
The unperturbed Hamiltonian acts on a triple of mediator qubits per interaction edge,
\begin{equation*}
    H_0 = \sum_{i=1}^{q} H_0^{(i)}= \sum_{i=1}^{q} \Delta_i \left(\ketbra{001}_{i_1,i_2,i_3} + \dots + \ketbra{110}_{i_1,i_2,i_3}\right).
\end{equation*}
The low-energy space for a single edge is spanned by $\ket{000}, \ket{111}$, so $\Pi_- = \prod_{i} \Pi_-^{(i)}$ and $\Pi_+ =  I - \Pi_-$.
The perturbation term is
\begin{align*}
    V &= V_{{\rm main}} + V_{{\rm extra}}, \\
    V_{{\rm main}} &= \sum_{i=1}^q \Delta^{2/3}_i\; V_{{\rm main}}^{(i)},\\
    V_{{\rm extra}} &= \sum_{i=1}^q V_{{\rm extra}}^{(i)}.
\end{align*}
We define each $V_{{\rm main}}^{(i)}$ to only act non-trivially on the three $i$-th mediator qubits and the relevant edge's target qubits.
Let $V_{{\rm extra}}^{(i)}$ act only on the relevant system qubits, i.e., not the mediator qubits.
The first-order perturbation term shows
\begin{equation*}
    V_{-} = \left(V_{{\rm extra}}\right)_{-} = \sum_{i=1}^q \left(V_{{\rm extra}}^{(i)}\right)_{-},
\end{equation*}
assuming that all first-order projection terms of $V_{{\rm main}}$ are zero.
Since $V_{{\rm extra}}$ terms only act on system qubits we can conclude the cross projected terms vanish.
The term $V_{-+}(H_0)^{-1}V_{+-}$ must produce no cross-gadget terms.
The term $V_{{\rm main}}^{(i)}$ can only affect the three mediator qubits $i_1,i_2$ and $i_3$ and the relevant edge's target qubits.
It is therefore not hard to see 
\begin{equation*}
    V_{-+}(H_0)^{-1}V_{+-} = \sum_{i} \Delta_i^{1/3}\;\left(V_{{\rm main}}^{(i)}\right)_{-+} (H_0^{(i)})^{-1}\left(V_{{\rm main}}^{(i)}\right)_{+-} = K I,
\end{equation*}
where $K = \sum_i K_i$ is a constant.
Therefore, there are no cross-gadget terms to second-order.
The third-order terms are where there is potential for cross-gadget terms.
The third-order terms are what cause the target interaction terms.
Analogous to the previous subsection, we can show that the third-order terms do not produce cross-gadget terms.
Notice that
\begin{equation*}
    V_{++} = \sum_{i=1}^q \left(V_{{\rm main}}^{(i)}\right)_{++},
\end{equation*}
hence the third-order term is given by
\begin{equation*}
    V_{-+}(H_0)^{-1}V_{++}(H_0)^{-1}V_{+-} = \sum_i \left(V_{{\rm main}}^{(i)}\right)_{-+} (H_0^{(i)})^{-1} \left(V_{{\rm main}}^{(i)}\right)_{++} (H_0^{(i)})^{-1} \left(V_{{\rm main}}^{(i)}\right)_{+-}.
\end{equation*}
Using the definition of the $V_{{\rm main}}^{(i)}$ terms and isolating to a conjugate pair to simplify the calculation, it is not hard to see that $\left(V_{{\rm main}}^{(i)}\right)_{++}$, $(H_0^{(i)})^{-1} \left(V_{{\rm main}}^{(i)}\right)_{+-}$ and $\left(V_{{\rm main}}^{(i)}\right)_{-+} (H_0^{(i)})^{-1}$ each contribute a different operator $O$ to the product resulting in a term like $O_1O_2O_3 \ketbra{111}{000} + O_1O_2O_3 \ketbra{000}{111} = O_1O_2O_3 X_c$.
Therefore, using arguments similar to those in the previous subsection, we can conclude that there are no cross-gadget terms to third-order.

\section{Effective Hamiltonian of the Stoquastic Subdivision Gadget}\label{app:stoquastic_subdivision_gadget}
In this appendix, we prove that the stoquastic subdivision gadget produces the desired effective Hamiltonian using the second-order reduction lemma (\cref{lma:second_order_reduction}).
Recalling the setup, we have:
\begin{align*}
    H &= K - \sum_{a=1}^M \left(C_a \otimes D_a + C_a^\dagger \otimes D_a^\dagger \right),\\
    \Delta H_0 &= \Delta \sum_{a=1}^M \ketbra{1}_a, \\
    V &= \sqrt{\Delta}\; V_{{\rm main}} + V_{{\rm extra}}, \\
    V_{{\rm main}} &= -\sum_{a=1}^M \left(C_a + D_a^\dagger \right)S_a^+ + \left(C_a^\dagger + D_a \right)S_a^-, \\
    V_{{\rm extra}} &= \sum_{a=1}^M \left(C_a^\dagger \otimes C_a + D_a \otimes D_a^\dagger \right).
\end{align*}
The low- and high-energy subspace can be defined using the Hamming weight of a bit string.
Moreover, it is clear that $\Pi_- = \ketbra{0^M}$ and $\Pi_+ =  I - \Pi_-$, hence the high-energy subspace is spanned by states with Hamming weight at least $1$.
The calculation of the effective Hamiltonian is similar to \cref{sec:app_subdivision_geometrical_gadgets}, thus using \cref{lma:second_order_reduction} we have that 
\begin{align*}
        H_{{\rm eff.}} &= \sum_{a=1}^M \left(C_a^\dagger C_a + D_a D_a^\dagger \right)_{-}  \notag\\ 
        &\qquad- \sum_{a,b,c} \left(\sqrt{\Delta}\,  \left(C_a + D_a^\dagger \right)S_a^+ + \left(C_a^\dagger + D_a \right)S_a^-\right)_{-+} \\
        &\qquad\times \frac{1}{\Delta} \ketbra{1}{1}_b \, \left(\sqrt{\Delta}\,  \left(C_c + D_c^\dagger \right)S_c^+ + \left(C_c^\dagger + D_c \right)S_c^-\right)_{+-}.
\notag
\end{align*}
The key terms to analyse are $\left(S_a^\pm\right)_{-+} \ketbra{1}{1}_b \left(S_c^\pm\right)_{+-}$.
Trivially, $\left(S_a^+\right)_{-+} = 0$ and $\left(S_a^-\right)_{+-} = 0$.
It is not hard to show
\begin{align*}
    \left(S_a^+\right)_{+-} &= \bigl(\ket{1}_a \otimes I_{M\setminus a}\bigr)\bra{0^M}, &  
    \left(S_a^-\right)_{-+} &= \ket{0^M}\bigl(\bra{1}_a \otimes I_{M\setminus a}\bigr),
\end{align*}
where the $1$ is in the $a$-th position.
The only terms that survive in the latter half of the above expression are thus given by 
\begin{equation*}
    \left(S_a^-\right)_{-+} \ketbra{1}{1}_b \left(S_c^+\right)_{+-} = \delta_{ab}\delta_{bc}\ketbra{0^M}{0^M},
\end{equation*}
reducing the overall effective Hamiltonian to
\begin{align*}
    H_{{\rm eff.}} &= \sum_{a=1}^M \left(C_a^\dagger C_a + D_a D_a^\dagger\right)\ketbra{0^M}{0^M} - \sum_{a=1}^M \left(C_a + D_a^\dagger \right) \left(C_a^\dagger + D_a \right)\ketbra{0^M}{0^M}, \notag\\
    H_{{\rm eff.}} &= - \sum_{a=1}^M \left(C_aD_a + D_a^\dagger C_a^\dagger\right)\ketbra{0^M}{0^M}.
\end{align*}
This concludes the proof.

\section{Composition Law of Perturbation Gadgets}\label{app:composition}
\cref{def:Hamiltonian_simulation} and \cref{lma:eigenvalue_simulation} can be combined in a certain way to bound the error incurred from a composition of simulations.
Moreover, imagine the following chain of simulations: $H_1$ is an $(\eta_{1},\epsilon_{1})$ simulator of $H$ and $H_2$ is an $(\eta_{2},\epsilon_{2})$ simulator of $H_1$.
It is not hard to see that $H_2$ is an $(\eta,\epsilon)$ simulator of $H$.
But specifically, what the values of $\eta$ and $\epsilon$ are is not immediately clear.
Let $\Delta_j$ represent the spectral gap of Hamiltonian $H_j$.
It is known that these values can be bounded as~\cite{BH16}:
\begin{align*}
    \eta &= \eta_{1} + \eta_{2} + O(\frac{\epsilon_{2}}{\Delta_{1}}), \\
    \epsilon &= \epsilon_{1} + \epsilon_{2} + O(\frac{\epsilon_{2}\norm{H}}{\Delta_{1}}).
\end{align*}
Hence, $H_2$ is an $(\eta,\epsilon)$ simulator of $H$.
This is a general result that can be applied to any number of simulations.
In a general setting where we consider the composition of a constant number of simulations $C = O(1)$, i.e., $H_1$ is a $(\eta_{1},\epsilon_{1})$ simulator of $H$ and $H_2$ is a $(\eta_{2},\epsilon_{2})$ simulator of $H_1$ and so on.
Then we have that
\begin{equation*}
    \epsilon = \sum_{j=1}^{C-1} \epsilon_j \leq \frac{\epsilon}{2} + O(\epsilon)\max_{j}\frac{\norm{H_{j+1}}}{\Delta_j},
\end{equation*}
after choosing $\epsilon_j \leq \frac{\epsilon}{2C}$.
A similar expression can be given for the $\eta$-error, but we do not present it here.
The spectral gaps can be appropriately chosen to ensure that the error is bounded by $\epsilon$.
Therefore, the composition of simulations is valid when a constant number of simulations are considered.
If a polynomial number of simulations is considered, then each error term must be chosen to be appropriately small.
A parallel application of perturbation gadgets has an error upper bound by the maximum error of the individual gadgets.
A series application requires the composition law to be applied.

\section{Toffoli Gate Manipulation}\label{app:toffoli}
Here we discuss the manipulation of a Toffoli gate to nearest-neighbour gates.
Within the \cl{StoqMA} circuits, the Toffoli gates might be long-range and hence a subsequent transition to a spatially sparse graph seemingly requires nearest-neighbour gates.
By using a \Gate{Swap} network, it is possible to achieve this property.
Note that \Gate{Swap} gates are allowed in \cl{StoqMA} circuits since they are just three \Gate{Cnot} gates.

\proptoffdecomp*

\begin{proof}
    Without loss of generality, assume qubit registers $a < b < c$ and define the length $r \coloneqq c - a$.
    We use a \Gate{Swap} network to move qubits $a$ and $c$ to positions $b-1$ and $b+1$, respectively, while keeping qubit $b$ fixed.
    This requires $r$ \Gate{Swap} gates, each decomposable into 3 \Gate{Cnot} gates, totalling $\Theta(r)$ \Gate{Cnot} gates.
    A local \Gate{Toffoli} gate is then applied, followed by reversing the \Gate{Swap} network, adding another $\Theta(r)$ \Gate{Cnot} gates.
    Therefore, the total number of \Gate{Cnot} gates is $\Theta(r)$.
\end{proof}

\corgenericStoqMA*

\begin{proof}
    Consider a long-range \cl{StoqMA} circuit with gates that act on qubits with a maximum separation of $M$.
    Using the \Gate{Swap} network approach from Proposition \ref{prop:toff-decomp}, each long-range gate can be transformed using $\Theta(M)$ local \Gate{Cnot} gates.
    Since each gate in the worst case requires $\Theta(M)$ \Gate{Swap} gates and each \Gate{Swap} gate is composed of 3 \Gate{Cnot} gates, the overall transformation requires $\Theta(M)$ \Gate{Cnot} gates per long-range gate.
    Thus, the entire circuit can be transformed incurring a $\Theta(M)$ increase in the number of gates.
\end{proof}

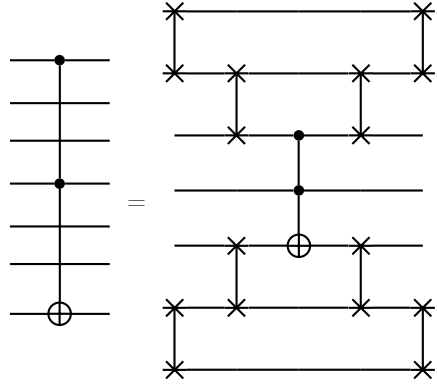
\begin{figure}[!ht]
    \centering
    \begin{quantikz}
        & \ctrl{3} & \qw\\
        & \qw & \qw \\
        & \qw & \qw \\
        & \ctrl{3} & \qw\\
        & \qw & \qw \\
        & \qw & \qw \\
        & \targ{} & \qw\\
    \end{quantikz}~~= 
    \begin{quantikz}
        \swap{1} & \qw      & \qw      & \qw      & \swap{1}\\
        \targX{} & \swap{1} & \qw      & \swap{1} & \targX{}\\
        \qw      & \targX{} & \ctrl{1} & \targX{} & \qw     \\
        \qw      & \qw      & \ctrl{1} & \qw      & \qw     \\
        \qw      & \swap{1} & \targ{}  & \swap{1} & \qw     \\
        \swap{1} & \targX{} & \qw      & \targX{} & \swap{1}\\
        \targX{} & \qw      & \qw      & \qw      & \targX{}\\
    \end{quantikz}
    \caption{Long-range Toffoli gate decomposed to nearest-neighbour gates via a swap network.}
    \label{fig:toff_one}
\end{figure}

\section{Statistics of Circuit Mappings}\label{app:statistics}
The circuit mappings considered in this work first take general \cl{StoqMA} circuits to ones comprised of only nearest neighbour gates.
This requires a number of \Gate{Swap} gates to reduce the range of the gates.
The process of doing so trivially increases the number of gates in the circuit but also retains the original circuit statistics.
This is an artefact of the circuit mapping --- they are equivalent circuits.
The second mapping is from the nearest neighbour circuit to one that is spatially sparse.
This entails an increase in the number of gates and also the number of ancilla qubits required.
Since we are still dealing with \cl{StoqMA} circuits, we know the size of the ancillae registers is characterised by the size of the input $x$.
Moreover, we let $n = \abs{x}$, the proof state be of size $w = \poly{n}$ and also have $m$ and $p$ many $\ket{0}$ and $\ket{+}$ ancillae respectively.
The size of $m$ and $p$ are also bounded by some polynomial in $n$.
This means the circuit space is still polynomial in size.
The number of gates in a circuit is, of course, also bounded by some polynomial in $n$ in order for the verification to be efficient.
Assuming we have some $m$ and $p$ ancillae, it is acceptable to increase both $m$ and $p$ either by a polynomial factor or just to some new polynomial and leave the class definition intact.
The additional ancillae we might want to add need not be acted on by any of the gates in the circuit.
While this would be practically inefficient\footnote{Inefficient here refers to the act of creating the necessary circuit, not the time execution of the circuit.}, it is still a valid \cl{StoqMA} circuit.

In addition to $w$, $m$, $p$ and $L$ being polynomially bounded by $n$, the completeness and soundness statistics, $\alpha$ and $\beta$ are functions of $n$.
What this tells us is that certain circuit modifications will not change the fundamental requirements of the circuit.
If a circuit modification requires a larger input register, then these quantities would be functions of the new input size.
However, the circuit mappings we consider do not change the input size.
The only thing that changes is the number of gates and the number of ancillae.
While this, in general, is not a guarantee that the statistics will be preserved, we will show that the statistics are indeed preserved.

Consider this simple circuit comprised of gates $R_1$, $R_2$ and $R_3$, which are classically reversible gates (see~\cref{fig:circ}).
Let's denote $U = R_3\,R_2\,R_1$.
Our input state is $\ket{x,\xi,0,+}$.
Let us also say that 
\begin{align*}
    R_1\ket{x,\xi,0,+} &= \sum_{x_1,x_2,x_3,x_4 \in \{0,1\}} c_{x}\ket{x_1,x_2,x_3,x_4} \eqqcolon \ket{\psi_1},\\
    R_2\,R_1\ket{x,\xi,0,+} &= \sum_{y_1,y_2,y_3,y_4 \in \{0,1\}} c_{y}\ket{y_1,y_2,y_3,y_4} \eqqcolon R_2\ket{\psi_1} = \ket{\psi_2},\\
    R_3\,R_2\,R_1\ket{x,\xi,0,+} &= \sum_{z_1,z_2,z_3,z_4 \in \{0,1\}} c_{y}\ket{z_1,z_2,z_3,z_4} \eqqcolon R_2\ket{\psi_2} = \ket{\psi_3}.
\end{align*}

This circuit in \cref{fig:circ} has a completeness and soundness characterised by the size of the input state $\ket{x}$.
\begin{align*}
    \textbf{Completeness:} &\quad \exists \ket{\xi} \text{ s.t. } {\rm Pr}_C(1) \coloneqq \bra{x,\xi,0,+} U^\dagger\,\Pi^+_0\,U \ket{x,\xi,0,+} \geq \alpha(n), \\
    \textbf{Soundness:} &\quad \forall \ket{\xi} \text{ s.t. } {\rm Pr}_S(1) \coloneqq \bra{x,\xi,0,+} U^\dagger\,\Pi^+_0\,U \ket{x,\xi,0,+} \leq \beta(n).
\end{align*}

\begin{figure}[!ht]
    \centering
    \begin{quantikz}
        \lstick{$\ket{x}$} & \gate[2]{R_1}\slice[style=quantumpurple]{} & \qw & \qw & \meterD{+\vphantom{0}} \\
        \lstick{$\ket{\xi}$} & \qw & \gate[3]{R_2}\slice[style=quantumpurple]{} & \qw & \qw \\
        \lstick{$\ket{0}$}& \qw & \qw & \gate[2]{R_3}\slice[style=quantumpurple]{} & \qw \\
        \lstick{$\ket{+}$}& \qw & \qw & \qw & \qw
    \end{quantikz}
    \caption{A simple circuit with gates $R_1$, $R_2$ and $R_3$ acting on an input state $\ket{x,\xi,0,+}$.}
    \label{fig:circ}
\end{figure}

Let us define a new circuit that is a modification of the one above (see~\cref{fig:circ_swap}).

\begin{figure}[!ht]
    \centering
    \begin{quantikz}
        \lstick[4]{$\ket{x,\xi,0,+}$} & \gate[2]{R_1} & \swap{4}       & \qw       & \qw       & \qw           & \qw                & \qw          & \qw            & \qw       & \qw       & \qw               & \qw\\
                                       & \qw           & \qw            & \swap{4}  & \qw       & \qw           & \qw               & \qw          & \qw            & \qw       & \qw       & \qw               & \qw\\
                                       & \qw           &\qw             & \qw       & \swap{4}  & \qw           & \qw               & \qw          & \qw            & \qw       & \qw       & \qw               & \qw\\
                                       & \qw           &\qw             & \qw       & \qw       & \swap{4}       & \qw              & \qw          & \qw            & \qw       & \qw       & \qw               & \qw\\
        \lstick[4]{$\ket{0}$}           & \qw          & \targX{}       & \qw       & \qw       & \qw               & \qw           & \swap{4}     & \qw            & \qw       & \qw       & \qw               & \qw\\
                                        & \qw          & \qw            & \targX{}  & \qw       & \qw              & \gate[3]{R_2}  & \qw          & \swap{4}       & \qw       & \qw       & \qw               & \qw\\
                                        & \qw          & \qw            & \qw       & \targX{}  & \qw               & \qw           & \qw          & \qw            & \swap{4}  &\qw        & \qw               & \qw\\
                                        & \qw           &\qw             & \qw       & \qw      & \targX{}          & \qw           & \qw          & \qw            & \qw       & \swap{4}  & \qw               & \qw\\
        \lstick[4]{$\ket{0}$}           & \qw          & \qw            & \qw       & \qw       & \qw               & \qw           & \targX{}     & \qw            & \qw       & \qw       & \qw               & \meterD{+\vphantom{0}}\\
                                        & \qw          & \qw            & \qw       & \qw       & \qw              & \qw            & \qw          & \targX{}       & \qw       & \qw       & \qw               & \qw \\
                                        & \qw          & \qw            & \qw       & \qw       & \qw               & \qw           & \qw          & \qw            & \targX{}  & \qw       & \gate[2]{R_3}     & \qw\\
                                        & \qw           &\qw             & \qw       & \qw      & \qw               & \qw           & \qw          & \qw            & \qw       & \targX{}  & \qw               & \qw\\
    \end{quantikz}
    \caption{A modified circuit with gates $R_1$, $R_2$ and $R_3$ acting on an input state $\ket{x,\xi,0,+}$, but with a swap network in between the gates.}
    \label{fig:circ_swap}
\end{figure}

The point of the circuit in \cref{fig:circ_swap} is to make it so that each qubit is only acted on by a constant number of gates.
We can define $V = R_3\,S_{4 \leftrightarrow 11}\,R_2\,S_{0\leftrightarrow 7}\,R_1$, where $S_{0\leftrightarrow 7}$ and $S_{4\leftrightarrow 11}$ are the \Gate{Swap} gate networks sandwiched between the $R$ gates.
Our input is now a state $\ket{x,\xi,0,+}\ket{0000}\ket{0000}$.
Let's consider the action of $V$ on this state and define $\ket{\boldsymbol{0}} = \ket{0000}$.
\begin{align*}
    R_1\ket{x,\xi,0,+}\ket{\boldsymbol{0}}\ket{\boldsymbol{0}} &= \ket{\psi_1}\ket{\boldsymbol{0}}\ket{\boldsymbol{0}}, \\
    S_{3,7}\,S_{2,6}\,S_{1,5}\,S_{0,4}\ket{\psi_1}\ket{\boldsymbol{0}}\ket{\boldsymbol{0}}, &= \ket{\boldsymbol{0}}\ket{\psi_1}\ket{\boldsymbol{0}},\\
    R_2\ket{\boldsymbol{0}}\ket{\psi_1}\ket{\boldsymbol{0}} &= \ket{\boldsymbol{0}}\ket{\psi_2}\ket{\boldsymbol{0}},\\
    S_{7,11}\,S_{6,10}\,S_{5,9}\,S_{4,8}\ket{\boldsymbol{0}}\ket{\psi_2}\ket{\boldsymbol{0}} &= \ket{\boldsymbol{0}}\ket{\boldsymbol{0}}\ket{\psi_2},\\
    R_3\ket{\boldsymbol{0}}\ket{\boldsymbol{0}}\ket{\psi_2} &= \ket{\boldsymbol{0}}\ket{\boldsymbol{0}}\ket{\psi_3}.
\end{align*}
The output statistics are characterised by 
\begin{align*}
    \Pr(1) &\coloneqq \bra{x,\xi,0,+}\bra{\boldsymbol{0}}\bra{\boldsymbol{0}} V^\dagger\,\Pi^+_8\,V\ket{x,\xi,0,+}\ket{\boldsymbol{0}}\ket{\boldsymbol{0}},\\
    &= \braket{\boldsymbol{0}}\,\braket{\boldsymbol{0}}\,\bra{\psi_3} \Pi^+_8 \ket{\psi_3}.
\end{align*}
Note that by $\braket{\boldsymbol{0}}\,\braket{\boldsymbol{0}}\,\bra{\psi_3} \Pi^+_8 \ket{\psi_3}$ we specifically mean \\$\braket{\boldsymbol{0}}_{0,1,2,3}\,\braket{\boldsymbol{0}}_{4,5,6,7}\,\bra{\psi_3}_{8,9,10,11} \Pi^+_8 \ket{\psi_3}_{8,9,10,11}$.
Let's compare this with the statistics of the first circuit, 
\begin{equation*}
    \bra{x,\xi,0,+}_{0,1,2,3} U^\dagger\,\Pi^+_0\,U \ket{x,\xi,0,+}_{0,1,2,3} = \bra{\psi_3}_{0,1,2,3} \Pi^+_0 \ket{\psi_3}_{0,1,2,3}.
\end{equation*}
Clearly then, $\bra{\psi_3}_{0,1,2,3} \Pi^+_0 \ket{\psi_3}_{0,1,2,3}$ and $\bra{\psi_3}_{8,9,10,11} \Pi^+_8 \ket{\psi_3}_{8,9,10,11}$ are equivalent.
Hence we conclude the statistics are preserved.
Moreover, in the case where $U$ accepts, $V$ also accepts with the same probability.
This idea will generalise and hold when we add more gates and registers.
The important point is that the \Gate{Swap} network we do between each gate of the original circuit effectively transports the bulk of the input to the next array of registers, leaving behind a trail of $\ket{0}$ ancillae.

We address one final and important point --- why can we do this? Well, on Arthur's ``controllable'' side of the \cl{StoqMA} circuit, as mentioned above already, it is possible to increase the number of $\ket{0}$-ancillae and $\ket{+}$-ancillae by a polynomial amount, i.e., $m\mapsto m' = m + \poly{n}$ and $p\mapsto p' = p + \poly{n}$.
The completeness and soundness are functions of $n$, not $m$ and $p$.
Furthermore, the circuit mapping we have considered is exact.
Row one of the construction takes in the input $x$, the proof $\xi$ and a fraction of the ancillae.
The remaining rows are initialised as $\ket{0}$-ancillae.
This fits the definition of the class as per \cref{rmk:Merlins_message}.
The consequence is that Merlin cannot cheat any more than before.
Hence, the statistics are preserved.
\end{document}